%
%
%
%
%
%
\RequirePackage{fix-cm}
\documentclass[smallextended]{svjour3}       
\smartqed  
\usepackage{graphicx}

\def\RP{{\mathbb{R}}_{\geq 0}}
\def\RPP{{\mathbb{R}}_{>0}}

\def\R{{\mathbb{R}}}
\def\ZP{{\mathbb{N}}}
\def\CG{{\sf CG}}
\def\N{N}

\def\poa{{\sf PoA}}
\def\LB{{\sf LB}}

\def\Z{\mathbb{Z}_{\geq 0}}
\newcommand{\sg}{{\bm\sigma}}

\newcommand{\EM}{{\sf EM}}
\newcommand{\NN}{\mathbb{N}}

\usepackage{natbib}
\usepackage{microtype}
\usepackage{multirow}
\usepackage{amsmath,amsfonts,amssymb,bbm,bm}
 \bibpunct[, ]{[}{]}{,}{n}{}{,}%

\usepackage[colorlinks=true,breaklinks=true,bookmarks=true,urlcolor=blue,
     citecolor=blue,linkcolor=blue,bookmarksopen=false,draft=false]{hyperref}
     
\allowdisplaybreaks
%
%
%
%
 \journalname{Pre-print}
\begin{document}

\title{The Impact of Selfish Behavior\\in Load Balancing Games \thanks{
This work was partially supported by the Italian MIUR PRIN 2017 Project ALGADIMAR ``Algorithms, Games, and Digital Markets''. }\thanks{The current affiliation of Cosimo Vinci is the University of Salerno. A substantial part of this work was done when he was affiliated with the Gran Sasso Science Institute (that is his previous affiliation).}\thanks{A preliminary version of this paper appeared in Proceedings of the 25th Annual European Symposium on Algorithms (ESA 2017) \cite{BV17}.}}


\author{Vittorio Bilò         \and
        Cosimo Vinci 
}

\authorrunning{V. Bilò and C. Vinci} 

\institute{V. Bilò \at
              Department of Mathematics and Physics ``Ennio De Giorgi'', University of Salento, 73100 Lecce, Italy.
              \email{vittorio.bilo@unisalento.it}           
           \and
           C. Vinci\at
              Department of Computer Engineering, Electrical Engineering and Applied Mathematics,  University of Salerno, 84084 Fisciano, Italy.
              \at Gran Sasso Science Institute, 67100 L'Aquila, Italy.\\
              \email{cvinci@unisa.it}   
}

\date{Received: date / Accepted: date}

\maketitle

\begin{abstract}
To what extent does the structure of the players' strategy space influence the efficiency of decentralized solutions in congestion games? In this work, we investigate whether better performance are possible when restricting to load balancing games in which players can only choose among single resources. We consider three different solutions concepts, namely, approximate pure Nash equilibria, approximate one-round walks generated by selfish players aiming at minimizing their personal cost and approximate one-round walks generated by cooperative players aiming at minimizing the marginal increase in the sum of the players' personal costs. The last two concepts can be interpreted as solutions of greedy online algorithms for the related resource selection problem. We show that, under fairly general latency functions on the resources, better bounds cannot be achieved if players are either weighted or asymmetric. On the positive side, we prove that, under mild assumptions on the latency functions, improvements on the performance of approximate pure Nash equilibria are possible for load balancing games with weighted and symmetric players in the case of identical resources. We also design lower bounds on the performance of one-round walks in load balancing games with unweighted players and identical resources.
\keywords{Congestion Games \and Nash Equilibrium \and Price of Anarchy \and Load Balancing \and Greedy Algorithms \and Online Algorithms.}
\subclass{91A99 \and 68W27 \and 90C05}
\end{abstract}

\section{Introduction}
\label{intro}
Congestion games \cite{R73} are non-cooperative games in which there is a set of selfish players competing for a set of resources, and each resource incurs a certain latency, expressed by a congestion-dependent function, to the players using it. Each player has a certain weight and an available set of strategies, where each strategy is a non-empty subset of resources, and aims at choosing a strategy minimizing her personal cost, which is defined as the sum of the latencies experienced on all the selected resources. We speak of {\em weighted} games/players when players have arbitrary non-negative weights and of {\em unweighted} games/players when all players have unitary weight.

Stable outcomes in this setting are pure Nash equilibria \cite{N50}: strategy profiles in which no player can lower her cost by unilaterally deviating to another strategy. However, they are demanding solution concepts, as they might not exist in weighted games \cite{FKS05} and, even when their existence is guaranteed, as, for instance, in unweighted games \cite{R73} or in weighted games with affine latency functions \cite{FKS05,HK12}, their computation might be an intractable problem \cite{ARV08,FPT04}. For such a reason, more relaxed solution concepts, such as $\epsilon$-approximate pure Nash equilibria or $\epsilon$-approximate one-round walks,  are also considered in the literature. An $\epsilon$-approximate pure Nash equilibrium is the relaxation of the concept of pure Nash equilibrium in which no player can lower her cost of a factor more than $1+\epsilon$ by unilaterally deviating to another strategy. Efficient algorithms computing approximate pure Nash equilibria in weighted congestion games with polynomial latency functions are known, see \cite{CFGS11,Caragiannis15,FGKS17,Gian18}.
An $\epsilon$-approximate one-round walk is defined as a myopic process in which players arrive in an arbitrary order and, upon arrival, each of them has to make an irrevocably strategic choice aiming at approximatively minimizing a certain cost function. 
In this work, we shall consider two variants of this process: in the first, players choose a strategy approximatively minimizing, up to a factor of $1+\epsilon$, their personal cost ({\em selfish players}), while, in the second, players choose the strategy approximatively minimizing, up to a factor of $1+\epsilon$, the marginal increase in the social cost ({\em cooperative players}) which is defined as the sum of the players' personal costs (for the case of $\epsilon=0$, we use the term {\em exact one-round walk}). Approximate one-round walks can be interpreted as simple greedy online algorithms for the equivalent resource selection problem associated with a given congestion game, and, in most of the cases, these algorithms are optimal in the context of online optimization of load balancing problems \cite{C13}. The worst-case efficiency of these solution concepts with respect to the optimal social cost is termed as the {\em $\epsilon$-approximate price of anarchy} (for the case of pure Nash equilibria, the term {\em price of anarchy} \cite{KP99} is adopted) and as the {\em competitive ratio} of $\epsilon$-approximate one-round walks, respectively.

Interesting special cases of congestion games are obtained by restricting the combinatorics of the players' strategy space. In {\em symmetric congestion games}, all players share the same set of strategies; in {\em network congestion games} the players' strategies are defined as paths in a given network; in {\em matroid congestion games} \cite{ARV08,ARV09}, the strategy set of every player is given by the set of bases of a matroid defined over the set of available resources; in $k$-{\em uniform matroid congestion games} \cite{dKU16}, each player can select any subset of cardinality $k$ from a prescribed player-specific set of resources; in {\em polytopal congestion games} \cite{KleerS17,KleerS21}, the strategies of every player correspond to binary vectors of a player-specific polytope; finally, in {\em load balancing games}, players can only choose single resources. 

To what extent does the structure of the players' strategy space influence the efficiency of decentralized solutions in congestion games? In this work, we investigate whether better performance are possible when restricting to load balancing games. Previous work established that the price of anarchy does not improve when restricting to unweighted load balancing games with polynomial latency functions \cite{CFKKM11,GS07}, while better bounds are possible in unweighted symmetric load balancing games with fairly general latency functions \cite{F10}. Under the assumption of identical resources, improvements are also possible in the following cases: unweighted load balancing games with affine latencies \cite{CFKKM11,STZ07}, unweighted games with polynomial or general latencies \cite{PaccagnanCFM21,VijayalakshmiS20}, weighted symmetric load balancing games with affine latencies \cite{LMMR08} or monomial latencies \cite{GLMM06}. Finally, \cite{BGR10} proves that the price of anarchy does not improve when restricting to weighted symmetric load balancing games under polynomial latency functions. A summary of these known results is shown in Figure \ref{tab1}.

\begin{figure}
\center
\begin{tabular}{||c||c|c|c|c|c|c|c|c||}
\hline
& \sf{LB} & \sf{SYM} & \sf{IDE} & \sf{AFF} or \sf{MONO} & \sf{POLY} & \sf{GEN} & \sf{Improve?} & \sf{Reference}\\\hline\hline
\sf{UNW} & \checkmark & & & & \checkmark & & \sf{NO} & \cite{CFKKM11,GS07}\\\hline
\sf{UNW} & \checkmark & \checkmark & & & & \checkmark & \sf{YES} & \cite{F10}\\\hline
\sf{UNW} & \checkmark & & \checkmark & \checkmark & & & \sf{YES} & \cite{CFKKM11,STZ07}\\\hline
\sf{UNW} & \checkmark & & \checkmark &  & \checkmark  &  & \sf{YES} & \cite{PaccagnanCFM21,VijayalakshmiS20}\\\hline
\sf{UNW} & \checkmark & & \checkmark &  &  & \checkmark & \sf{YES} & \cite{PaccagnanCFM21,VijayalakshmiS20}\\\hline
\sf{WEI} & \checkmark & \checkmark & & & \checkmark & & \sf{NO} & \cite{BGR10}\\\hline
\sf{WEI} & \checkmark & \checkmark & \checkmark & \checkmark & & & \sf{YES} & \cite{GLMM06,LMMR08}\\\hline
\end{tabular}
\caption{Previously known results concerning the price of anarchy. The following acronyms are used: {\sf UNW} for unweighted games, {\sf WEI} for weighted games, {\sf LB} for load balancing games, {\sf SYM} for symmetric games, {\sf IDE} for identical resources, {\sf AFF} for affine latency functions, {\sf MONO} for monomial latency functions, {\sf POLY} for polynomial latency functions, {\sf GEN} for general latency functions (obeying mild assumptions). A check-mark on a column means that the game enjoys the related property. Question {\sf Improve?} asks whether better performance are possible with respect to games not enjoying property {\sf LB}.}\label{tab1}
\end{figure}

For the competitive ratio of exact one-round walks generated by cooperative players, no improvements are possible in unweighted load balancing games with affine latency functions \cite{CFKKM11,STZ07}, while improved performance can be obtained under the additional assumption of identical resources \cite{CFKKM11} (we observe that, in this case, solutions generated by both types of players coincide); however, for weighted players, no improvements are possible even under the assumption of identical resources \cite{C13,CFKKM11}. A summary of these known results is shown in Figure \ref{tab2}. For one-round walks generated by selfish players, instead, no specialized limitations are currently known (except for those inherited by the case of cooperative players in games with identical resources). 

\begin{figure}
\center
\begin{tabular}{||c||c|c|c|c|c|c||}
\hline
& \sf{LB} & \sf{IDE} & \sf{AFF} & \sf{Type of Players} & \sf{Improve?} & \sf{Reference}\\\hline\hline
\sf{UNW} & \checkmark & & \checkmark & \sf{cooperative} & \sf{NO} & \cite{CFKKM11,STZ07}\\\hline
\sf{UNW} & \checkmark & \checkmark & \checkmark & \sf{both} & \sf{YES} & \cite{CFKKM11}\\\hline
\sf{WEI} & \checkmark & \checkmark & \checkmark & \sf{both} & \sf{NO} & \cite{C13,CFKKM11}\\\hline
\end{tabular}
\caption{Previously known results concerning the competitive ratio of exact one-round walks. Symbology is the same as that used in Figure \ref{tab1}.}\label{tab2}
\end{figure}

\subsection{Our Contribution} 

We obtain an almost precise picture of the cases in which improved performance can be obtained in load balancing congestion games. This is done by either solving open problems or extending previously known results to both approximate solution concepts and more general latency functions encompassed within the following definitions.

A class of latency functions $\mathcal{C}$ is {\em closed under ordinate scaling} (resp. {\em abscissa scaling}) if, for any function $f\in\mathcal{C}$ and $\alpha\geq 0$, the function $g$ such that $g(x)=\alpha f(x)$ (resp. $g(x)=f(\alpha x)$) belongs to $\mathcal{C}$. A function $f$ is {\em semi-convex} if $xf(x)$ is convex, it is {\em unbounded} if $\lim_{x\rightarrow\infty}f(x)=\infty$. We observe that the class of polynomial latency functions obeys the following properties:
\begin{itemize}
\item[$\bullet$] it is both abscissa and ordinate scaling,
\item[$\bullet$] all of its functions are semi-convex,
\item[$\bullet$] all of its non-constant functions are unbounded.
\end{itemize}

For the approximate price of anarchy, we prove the following results, which are summarized in Figure \ref{tab3}.
{\em For unweighted players:} if $\mathcal C$ is closed under ordinate scaling, the approximate price of anarchy does not improve when restricting to load balancing games (Theorem \ref{thm3}). {\em For weighted players:} if $\mathcal{C}$ is closed under abscissa and ordinate scaling, the approximate price of anarchy does not improve when restricting to load balancing games (Theorem \ref{thm1}). Furthermore, if $\mathcal{C}$ contains unbounded functions only (except for the eventual constant latency functions), the approximate price of anarchy does not improve even for symmetric load balancing games. However, under the additional hypothesis of identical resources, better performance are still possible. Let $f$ be an increasing, continuous and semi-convex function. We prove that
the approximate price of anarchy of weighted symmetric load balancing games with identical resources having latency function $f$ is at most equal to $$\sup_{x>0}\max_{\lambda\in (0,1)}\frac{\lambda xf(x)+(1-\lambda)[x]_{\epsilon,f} f([x]_{\epsilon,f})}{(\lambda x+(1-\lambda)[x]_{\epsilon,f}) f(\lambda x+(1-\lambda)[x]_{\epsilon,f})},$$
    where $$[x]_{\epsilon,f}:=\textrm{inf}\{t\geq 0:f(x)\leq(1+\epsilon) f(x/2+t)\},$$ and we show that such upper bound is tight under some mild assumptions.

\begin{figure}
\center
\begin{tabular}{||c||c|c|c|c|c|c|c|c||}
\hline
& \sf{LB} & \sf{SYM} & \sf{IDE} & \sf{CUAS} & \sf{CUOS} & \sf{UNB} & \sf{SEMI-CONV} & \sf{Improve?}\\\hline\hline
\sf{UNW} & \checkmark & & & & \checkmark & & & \sf{NO}\\\hline
\sf{WEI} & \checkmark & & & \checkmark & \checkmark & & & \sf{NO}\\\hline
\sf{WEI} & \checkmark & \checkmark & & \checkmark & \checkmark & \checkmark & & \sf{NO}\\\hline
\sf{WEI} & \checkmark & \checkmark & \checkmark & & & & \checkmark & \sf{YES}\\\hline
\end{tabular}
\caption{Our results concerning the approximate price of anarchy. The following additional acronyms are used with respect to those adopted in Figure \ref{tab1}: {\sf CUAS} for games whose latency functions are closed under abscissa scaling, {\sf CUOS} for games whose latency functions are closed under ordinate scaling, {\sf UNB} for games whose all non-constant latency functions are unbounded, {\sf SEMI-CONV} for games whose latency functions are semi-convex.}\label{tab3}
\end{figure}

For the competitive ratio of approximate one-round walks, we prove the following results which are summarized in Figure \ref{tab4}.
{\em For unweighted players:} if $\mathcal C$ is closed under ordinate scaling, the competitive ratio of approximate one-round walks generated by selfish players does not improve when restricting to load balancing games. The same negative result holds with respect to cooperative players under the additional hypothesis that all functions in $\mathcal C$ are semi-convex (Theorem \ref{thm4}). {\em For weighted players:} if $\mathcal{C}$ is closed under abscissa and ordinate scaling, the competitive ratio of approximate one-round walks generated by selfish players does not improve when restricting to load balancing games. Also in this case, to extend the result to cooperative players, we need that all functions in $\mathcal C$ are semi-convex (Theorem~\ref{thm2}). Finally, for the case of identical resources, we design lower bounds on the performance of exact one-round walks in load balancing games with unweighted players. 

\begin{figure}
\center
\begin{tabular}{||c||c|c|c|c|c|c||}
\hline
& \sf{LB} & \sf{CUAS} & \sf{CUOS} & \sf{SEMI-CONV} & \sf{Type of Players} & \sf{Improve?} \\\hline\hline
\sf{UNW} & \checkmark & & \checkmark & & \sf{selfish} & \sf{NO}\\\hline
\sf{UNW} & \checkmark & & \checkmark & \checkmark & \sf{cooperative} & \sf{NO}\\\hline
\sf{WEI} & \checkmark & \checkmark & \checkmark & & \sf{selfish} & \sf{NO}\\\hline
\sf{WEI} & \checkmark & \checkmark & \checkmark & \checkmark & \sf{cooperative} & \sf{NO}\\\hline
\end{tabular}
\caption{Our results concerning the competitive ratio of approximate one-round walks. Symbology is the same as that used in Figure \ref{tab3}.}\label{tab4}
\end{figure}

From a technical point of view, our results are obtained by designing some general load balancing instances parametrized by at most four numbers and two latency functions belonging to $\mathcal{C}$. We show that there exists a choice of these parameters such that the performance of the considered load balancing instances, which thus get measured by some parametrized formulas, match that of general congestion games with latency functions in $\mathcal{C}$. We also show how to quantitatively evaluate these formulas when $\mathcal{C}$ is the set of polynomial of maximum degree $d$.

Independently from our existential results, the load balancing instances we introduce may be useful to get good lower bounds on the performance of load balancing games with latency functions in $\mathcal{C}$, even if we are not able to quantify the exact value of the metric adopted to measure the performance (see Remarks \ref{remalowweig} and \ref{remalowunw}).

\subsection{Significance}

With respect to the (approximate) price of anarchy, it can be appreciated how row $1$ in Figure \ref{tab3} generalizes to approximate equilibria and to more general latency functions row $1$ in Figure \ref{tab1}. Similarly, rows $2$ and $3$ in Figure \ref{tab3} generalize to approximate equilibria and to more general latency functions row $4$ in Figure \ref{tab1} (these issues were posed as open problems in \cite{BGR10}). Finally, row $4$ in Figure \ref{tab3} generalizes to approximate equilibria and to more general latency functions row $5$ in Figure \ref{tab1}. Our negative results, together with the positive ones achieved by \cite{F10} (see row 2 of Figure \ref{tab1}), imply that, whenever resources are not identical, better bounds on the approximate price of anarchy are possible only when dealing with unweighted symmetric load balancing games. 

We would like to stress the following observation here. Row 2 in Figure \ref{tab3} may look redundant, as row 1 tells us that no improvement is possible in unweighted games if the latency functions are ordinate scaling, while, according to row 2, to have the same negative result in weighted games, we additionally need the abscissa scaling property. But, as unweighted games are also weighted ones, we have a representative subclass of weighted games for which no improvement is possible even under the mere assumption of the ordinate scaling property! This is true, but row 2 tells indeed something more than row 1. Everything gets clearer when thinking to what we mean when we say that no improvement is possible in weighted games: we mean that, for any game with $n$ many players having weights defined by a vector $(w_1,\ldots,w_n)$, there exists a load balancing game with the same number of players and the same vector of weights with a non-smaller price of anarchy. So, the fact that this is true when considering a unitary vector of weights does not imply that this remains true for any vector, and in fact, to achieve such a stronger result, we have to assume the additionally hypothesis of abscissa scaling on the latency functions. Determining whether the same impossibility result can be obtained in weighted games by relying on the ordinate scaling property only is an interesting open problem. If this were not possible, it would follow that, in certain situations, restricting to singleton strategies is more effective in weighted games than it can be in unweighted ones. 

For the competitive ratio of (approximate) one-round walks, we have that row 1 of Figure \ref{tab4} provides the first known results for the case of selfish players. This has a number of applications. For instance, \cite{BFFM09} showed that the upper bound on the competitive ratio of exact one-round walks involving selfish players provided in \cite{CMS12} for the case of affine latencies is tight. However, the lower bounding instance provided by \cite{BFFM09} is a general congestion game, and the authors left as open problem of determining whether a load balancing games with the same performance could be achieved. Our findings show that this is possible and also tells us how to directly used to construct a load balancing instance whose performance matches that of general congestion games, thus solving this open question. Row 2 of Figure \ref{tab4} generalizes row 1 of Figure \ref{tab2}, which holds only with respect to exact one-round walks generated by cooperative players in games with affine latency functions. Row 3 of Figure \ref{tab4} implies that the upper bounds provided in \cite{CMS12} and \cite{Klimm19} for the competitive ratio of exact one-round walks in general congestion games with affine and polynomial latency functions, respectively, are tight even for load balancing games. We stress out that, even for general congestion games, no lower bound for the competitive ratio of exact one-round walks involving selfish players was known prior to this work, except for the case of unweighted games with affine latencies \cite{BFFM09}. 

Finally, row 4 of Figure \ref{tab4} generalizes results in \cite{AAGKKV95,C13,CFKKM11} which hold only with respect to exact one-round walks for games with polynomial latency functions.
Finally, the lower bounds that we obtain on the performance of exact one-round walks in load balancing games with unweighted players and identical resources improves and generalizes the previously known lower bound given in \cite{CFKKM11} which holds only for affine latency functions. 

\subsection{Related Work} 

The price of anarchy in congestion games was first considered in \cite{AAE05} and \cite{CK05} and shown to be equal to $5/2$ in unweighted games with affine latency functions. In \cite{CK05}, it is also proved that no improved bounds are possible in both symmetric unweighted games and unweighted network games; these results were improved by \cite{CddU15} which shows that the price of anarchy stays the same even in symmetric unweighted network games. Furthermore, in \cite{AAE05}, it is shown that the price of anarchy of weighted congestion games with affine latency functions is $(3+\sqrt{5})/2$. 

In \cite{CFKKM11}, it is shown that the previous bounds are tight also for load balancing games. For the special case of load balancing games on identical resources, the works of \cite{STZ07} and \cite{CFKKM11} show that the price of anarchy is $2.012067$ for unweighted games and at least $5/2$ for weighted ones. The works of  \cite{PaccagnanCFM21,VijayalakshmiS20} generalize the above results on affine unweighted games with identical resources, and provide tight bounds on the price of anarchy even for polynomial and more general latencies. In \cite{LMMR08}, it is proved that, for symmetric load balancing games, the price of anarchy drops to $4/3$ if the games are unweighted, and to $9/8$ if the games are weighted with identical resources. This last result has been generalised by \cite{GLMM06}, who provide tight bounds on the price of anarchy of symmetric weighted load balancing games with identical resources and monomial latency functions. For symmetric unweighted $k$-uniform matroid congestion games with affine latency functions, \cite{dKU16} proves that the price of anarchy is at most $28/13$ and at least $1.343$ for a sufficiently large value of $k$ (for $k=5$, it is roughly $1.3428$). Tight bounds on the price of anarchy of either weighted and unweighted congestion games with polynomial latency functions have been given by \cite{ADGMS11}. Under fairly general latency functions, \cite{F10} shows that the price of anarchy of unweighted symmetric load balancing games coincides with that of non-atomic congestion games, while \cite{BGR10} proves that assuming symmetric strategies does not lead to improved bounds in unweighted games and gives exact bounds for the case of weighted players. It also shows that, for the case of weighted players, no improvements are possible even in symmetric load balancing games with monomial and polynomial latency functions. Upper and lower bounds on the pure and mixed price of anarchy for several classes of load balancing games with polynomial latency functions are provided in \cite{GLMM06,GS07,GLMMR08}. Finally, \cite{CKS11} (resp. \cite{B12}) provides tight bounds (resp. almost tight upper bounds) on the approximate price of anarchy of unweighted (resp. weighted) congestion games under affine latency functions. \cite{Gian18} obtain tight bounds on the approximate price of anarchy of congestion games with polynomial latency functions, holding even for load balancing games; prior to \cite{Gian18}, analogue results have been obtained in the preliminary version of our work. 

The competitive ratio of exact one-round walks generated by cooperative players in load balancing games with polynomial latency functions has been first considered in \cite{AAGKKV95}, where, for the special case of affine functions, an upper bound of $3+2\sqrt{2}$ is provided for weighted players. For unweighted players, this result has been improved to $17/3$ in \cite{STZ07}, where it is also shown that, for identical resources, the upper bound drops to $2+\sqrt{5}$ in spite of a lower bound of $3.0833$. Finally, \cite{CFKKM11} show matching lower bounds of $3+2\sqrt{2}$ and $17/3$ for, respectively, weighted and unweighted players. For weighted games with polynomial latency functions, tight bounds have been given in \cite{C13}; the lower bounds, in particular, hold even for identical resources, thus improving previous results from \cite{AAGKKV95}. In \cite{CFKKM11} it is also shown that, for unweighted players and identical resources, the competitive ratio lies between $4$ and $\frac 2 3 \sqrt{21}+1$ if latency functions are affine. 

For the case of selfish players and still under affine latency functions, \cite{BFFM09,CMS12} show that the competitive ratio is $2+\sqrt{5}$ for unweighted congestion games, while, for weighted players, \cite{CMS12} gives an upper bound of $4+2\sqrt{3}$. For the more general case of polynomial latency functions, \cite{Klimm19} determines explicit and good upper bounds on the competitive ratio in unweighted and weighted congestion games; prior to \cite{Klimm19}, analogue results for weighted games with polynomial latency functions have been obtained in the preliminary version of our work \cite{BV17}. 

In \cite{BV16}, it is shown that, if latency functions are polynomials of degree $d$ and players are selfish, the $(d+1)$-th ordered Bell number is a lower-bound on the competitive ratio, even for unweighted games with identical resources. The concept of one-round walks starting from the empty state in non-atomic congestion games is analysed in \cite{V18}. 

\section{Definitions and Notation}
For two integers $0\leq k_1\leq k_2$, let $[k_1]_{k_2}:=\{k_1,k_1+1,\ldots,k_2-1, k_2\}$ and $[k_1]:=[k_1]_1$. For a real function $F:\ZP\rightarrow \R$, $[<]_s\left(F(s)\right)$ (resp. $[\leq]_s\left(F(s)\right)$) denotes an arbitrary function $G:\ZP\rightarrow \R$ such that $\lim_{s\rightarrow \infty}\frac{|G(s)|}{|F(s)|}=0$ (resp. $\limsup_{s\rightarrow \infty}\frac{|G(s)|}{|F(s)|}\leq c$ for some $c>0$).\footnote{We have not adopted the classical notation used in asymptotic analysis for two main reasons: (i) to explicate that we evaluate the asymptotic behaviour with respect to a certain variable $s$ only, while other variables/parameters are keep fixed; (ii) to avoid ambiguity with the symbol ``o", that will be often used to denote other quantities.}

\paragraph*{\bf Congestion Games.}\hspace{0.2cm}A {\em congestion game} is a tuple $$\CG=\left(\N,E,(\ell_e)_{e\in E},(w_i)_{i\in\N},({\sf\Sigma}_i)_{i\in\N}\right),$$ where $\N$ is a set of $n\geq 2$ players, $E$ is a set of resources, $\ell_e:\RPP\rightarrow\RPP$ is the (non-decreasing) latency function of resource $e\in R$, and, for each $i\in\N$, $w_i\geq  0$ is the weight of player $i$ and ${\sf\Sigma}_i\subseteq 2^R\setminus \emptyset$ is her set of strategies. Conventionally, we set $\ell_e(0)=\lim_{x\rightarrow 0^+}\ell_e(x)$ for any latency function $\ell_e$.\footnote{All the results obtained in this paper can be easily generalized to the case in which some latency function $\ell_e$ verifies $\ell_e(x)=0$ for some $x>0$, but this requires further (tedious) case analysis to cope with undefined ratios of type $c/0$.} We speak of {\em weighted} games/players when players have arbitrary weights and of {\em unweighted} games/players when $w_i=1$ for each $i\in\N$. A congestion game is {\em symmetric} if ${\sf\Sigma}_i={\sf\Sigma}$ for each $i\in\N$, i.e., if all players share the same strategy space. A {\em load balancing} game is a congestion game in which for each $i\in\N$ and $S\in{\sf\Sigma}_i$, $|S|=1$, that is, all players' strategies are singleton sets. Given a class $\mathcal{C}$ of latency functions, let ${\sf W}(\mathcal{C})$ be the class of weighted congestion games, ${\sf U}(\mathcal{C})$ be the class of unweighted congestion games, ${\sf ULB}(\mathcal{C})$ be the class of unweighted load balancing games, ${\sf WLB}(\mathcal{C})$ be the class of weighted load balancing games, and ${\sf WSLB}(\mathcal{C})$ be the class of weighted symmetric load balancing games, all having latency functions in class $\mathcal{C}$.

\paragraph*{\bf Latency Functions.}\hspace{0.2cm} A congestion game has {\em polynomial latencies of maximum degree $d$} when, for each $e\in E$, $\ell_e(x)=\sum_{h\in [d]_0} \alpha_{e,h}x^h$, with $\alpha_{e,h}\geq 0$ for each $h\in [d]_0$. When $d=0$ we speak of {\em constant latencies} and when $d=1$, we speak of {\em affine latencies}. If $\ell_e(x)=\alpha_{e,d}x^d$, we speak of {\em monomial latencies of degree $d$}, and if $d=1$, we speak of {\em linear latencies}. Let $\mathcal{P}(d)$ denote the class of polynomial latencies of maximum degree $d$. A latency function $f$ is {\em semi-convex} if $xf(x)$ is convex, and it is {\em unbounded} if $\lim_{x\rightarrow\infty}f(x)=\infty$. A congestion game has {\em identical resources} when all resources have the same latency. Given a class $\mathcal{G}$ of congestion games, let $\mathcal{C}(\mathcal{G})$ denote the class of latency functions of congestion games belonging to $\mathcal{G}$. A class $\mathcal{C}$ of latency functions is {\em closed under ordinate scaling} (resp. {\em abscissa scaling}) if, for any function $f\in\mathcal{C}$ and $\alpha> 0$, the function $g$ such that $g(x)=\alpha f(x)$ (resp. $g(x)=f(\alpha x)$) belongs to $\mathcal{C}$. 

\paragraph*{\bf Strategy Profiles and Cost Functions.}\hspace{0.2cm} A {\em strategy profile} is an $n$-tuple of strategies ${\bm\sigma}=(\sigma_1,\ldots,\sigma_n)$, that is, a state of the game in which each player $i\in\N$ is adopting strategy $\sigma_i\in{\sf\Sigma}_i$, so that ${\sf\Sigma}:=\times_{i\in\N}{\sf\Sigma}_i$ denotes the set of strategy profiles which can be realized in $\CG$. For a strategy profile $\bm\sigma$, the {\em congestion} of resource $e\in E$ in $\bm\sigma$, denoted as $k_e({\bm\sigma}):=\sum_{i\in\N:e\in\sigma_i}w_i$, is the total weight of the players using resource $e$ in $\bm\sigma$ (observe that, in unweighted games, $k_e({\bm\sigma})$ coincides with the number of users of $e$ in $\bm\sigma$). The {\em personal cost} of player $i$ in $\bm\sigma$ is defined as $cost_i({\bm\sigma})=\sum_{e\in\sigma_i}\ell_e(k_e({\bm\sigma}))$ and each player aims at minimizing it. For the sake of conciseness, when the strategy profile $\bm\sigma$ is clear from the context, we write $k_e$ in place of $k_e({\bm\sigma})$. Fix a strategy profile $\bm\sigma$ and a player $i\in\N$. We denote with ${\bm\sigma}_{-i}$ the restriction of $\bm\sigma$ to all players other than $i$; moreover, for a strategy $S\in{\sf\Sigma}_i$, we denote with $({\bm\sigma}_{-i},S)$ the strategy profile obtained from $\bm\sigma$ when player $i$ changes her strategy from $\sigma_i$ to $S$, while the strategies of all the other players are kept fixed. The quality of a strategy profile in congestion games is measured by using the {\em social function} ${\sf SUM}({\bm \sigma})=\sum_{i\in\N}w_icost_i({\bm \sigma})=\sum_{e\in E}k_e({\bm\sigma})\ell_e(k_e({\bm\sigma}))$, that is, the sum of the players' personal costs. A {\em social optimum} is a strategy profile ${\bm \sigma}^*$ minimizing $\sf SUM$. For the sake of conciseness, once a particular social optimum has been fixed, we write $o_e$ to denote the value $k_e({\bm\sigma}^*)$.

\paragraph*{\bf Solution Concepts.}\hspace{0.2cm}For any $\epsilon\geq 0$, an {\em $\epsilon$-approximate pure Nash equilibrium} is a strategy profile $\bm\sigma$ such that, for any player $i\in\N$ and strategy $S\in{\sf\Sigma}_i$, $cost_i({\bm\sigma})\leq (1+\epsilon) cost_i({\bm\sigma}_{-i},S)$. For $\epsilon=0$, we speak of an {\em (exact) pure Nash equilibrium}. We denote by ${\sf NE}_\epsilon({\sf CG})$ the set of $\epsilon$-approximate pure Nash equilibria of a congestion game $\sf CG$. 

For any $\epsilon\geq 0$, an {\em $\epsilon$-approximate one-round walk} is an online process in which players appear sequentially according to an arbitrary order and, upon arrival, each player irrevocably chooses a strategy approximatively minimizing a certain cost function. Formally, an $\epsilon$-approximate one-round walk involving selfish (or cooperative) players is an $n+1$-tuple $\bm \tau=(\bm \sigma^0,\bm \sigma^1,\bm \sigma^2,\ldots, \bm \sigma^n)$ defined as follows:
\begin{itemize}
\item[$\bullet$] assuming without loss of generality that players enter sequentially according to their indices, ${\bm \sigma}^i=(\sigma_1^i,\sigma_2^i,\ldots,\sigma_i^i,\underbrace{\emptyset,\emptyset,\ldots, \emptyset}_{n-i})$ is the strategy profile obtained when the first $i$ players have performed their strategic choice, while the remaining ones have not entered the game yet (so, we assume that each of them is playing the empty strategy);
\item[$\bullet$] if players are selfish, the $i$-th player aims at minimizing her personal cost, so that $cost_i({\bm \sigma}^i)\leq (1+\epsilon)\min_{S\in{\sf\Sigma}_i}cost_i({\bm \sigma}^{i-1},S)$; otherwise, if players are cooperative, the $i$-th player aims at minimizing the marginal increase in the social function $\sf SUM$, so that ${\sf SUM}({\bm \sigma}^i)-{\sf SUM}({\bm \sigma}^{i-1})\leq (1+\epsilon)\min_{S\in{\sf\Sigma}_i}({\sf SUM}({\bm \sigma}^{i-1},S)-{\sf SUM}({\bm \sigma}^{i-1}))$\footnote{An alternative definition for the concept of $\epsilon$-approximate one-round walk involving cooperative players can be obtained by assuming that the $i$-th player aims at minimizing the value of the social function, i.e., ${\sf SUM}({\bm \sigma}^i)\leq (1+\epsilon)\min_{S\in{\sf\Sigma}_i}{\sf SUM}({\bm \sigma}^{i-1},S)$. For $\epsilon=0$ the two definitions are equivalent; for $\epsilon>0$, the proof arguments that we will consider in this paper can be easily adapted/modified to get similar results holding under this alternative definition.}. 
\end{itemize}
For $\epsilon=0$, we speak of a(n) {\em (exact) one-round walk}. We denote by ${\sf ORW}^s_\epsilon({\sf CG})$ (resp. ${\sf ORW}^c_\epsilon({\sf CG})$) the set of strategy profiles $\sg=\bm \sigma^n$ which can be constructed by an $\epsilon$-approximate one-round walk involving selfish (resp. cooperative) players in a congestion game $\sf CG$.

\paragraph*{\bf Quality Metrics.}\hspace{0.2cm}The {\em $\epsilon$-approximate price of anarchy} of a congestion game $\sf CG$ is defined as ${\sf PoA}_\epsilon({\sf CG})=\max_{{\bm\sigma}\in{\sf NE}_\epsilon({\sf CG})}\frac{{\sf SUM}({\bm\sigma})}{{\sf SUM}({\bm\sigma}^*)}$, where ${\bm\sigma}^*$ is a social optimum for $\CG$. Similarly, the {\em competitive ratio} of $\epsilon$-approximate one-round walks generated by selfish (resp. cooperative) players, is defined as ${\sf CR}^s_\epsilon({\sf CG})=\max_{{\bm\sigma}\in{\sf ORW}^s_\epsilon({\sf CG})}\frac{{\sf SUM}({\bm\sigma})}{{\sf SUM}({\bm\sigma}^*)}$ (resp. ${\sf CR}^c_\epsilon({\sf CG})=\max_{{\bm\sigma}\in{\sf ORW}^c_\epsilon({\sf CG})}\frac{{\sf SUM}({\bm\sigma})}{{\sf SUM}({\bm\sigma}^*)}$).
Given a class of congestion games $\mathcal G$, the $\epsilon$-approximate price of anarchy of $\mathcal G$ is defined as ${\sf PoA}_{\epsilon}({\mathcal G})=\textrm{sup}_{{\sf CG}\in{\mathcal G}}{\sf PoA}_\epsilon({\sf CG})$. For the case of $\epsilon=0$, we refer to this metric simply as to the {\em price of anarchy}. The competitive ratio of $\epsilon$-approximate one-round walks of $\mathcal G$ generated by both selfish and cooperative players is defined accordingly.
Throughout the paper, we shall assume that, in any considered class of latency functions, there always exists a non-constant latency function.\footnote{We observe that, if all the latency functions are constant, the value of all quality metrics is at most $1+\epsilon$, and such upper bound is tight.}

\section{Weighted Load Balancing Games}\label{secweigh}
In this section, we show that, under mild assumptions on the latency functions, the $\epsilon$-approximate price of anarchy and the competitive ratio of $\epsilon$-approximate selfish/cooperative one-round walks of weighted congestion games cannot improve even when restricting to load balancing games. 
\subsection{Preliminary Definitions and Technical Lemmas}
We first give some definitions and two technical lemmas, whose proofs are based on the primal-dual method \cite{B12}, and are deferred to Section~\ref{missing} of the appendix (since they are quite technical and do not constitute the primary focus of this work). Lemma~\ref{lemupp} gives quasi-explicit formulas to compute upper bounds for the considered efficiency metrics, and such bounds are obtained by fixing an arbitrary congestion game, and by exploiting the dual linear program of a parametric primal linear program whose objective function returns the value of the considered efficiency metric. Then, Lemma~\ref{lemrou1} provides some parameters that will be used in the main theorems of this section to design ad-hoc load balancing instances whose performance match the upper bounds of Lemma~\ref{lemupp}.

We point out that, the upper bounds provided in Lemma~\ref{lemupp} can be equivalently derived by resorting to the smoothness framework \cite{R15} (further details are given in Section~\ref{frame} of the appendix). Prior to this work, similar results as in Lemma \ref{lemrou1} have been given in \cite{R15,BGR10,V18}, with the aim of designing parametric load balancing instances whose performance match the known upper bounds. 

Given $x\geq 1$, $k>0$, $o>0$, and a latency function $f$, let
\begin{equation}\label{def_alpha}
\beta_{\sf W}({\sf EM},k,o,f):=
\begin{cases}
-kf(k)+(1+\epsilon)of(k+o) & \text{if }{\sf EM}=\poa_\epsilon ,\\
-\int_{0}^kf(t)\text{dt}+(1+\epsilon)of(k+o) & \text{if }{\sf EM}={\sf CR}_\epsilon^s,\\
-kf(k)+(1+\epsilon)((k+o)f(k+o)-kf(k)) & \text{if }{\sf EM}={\sf CR}_\epsilon^c.
\end{cases}
\end{equation}
Given ${\sf EM}\in \{\poa_{\epsilon},{\sf CR}^s_\epsilon, {\sf CR}^c_\epsilon\}$, let 
\begin{equation}\label{def_gamma_prel}
\gamma_{\sf W}({\sf EM},x,k,o,f):=\frac{kf(k)+x\cdot \beta_{\sf W}({\sf EM},k,o,f)}{of(o)};
\end{equation}
furthermore, given a class of weighted congestion games $\mathcal{G}$, let 
\begin{equation}\label{def_gamma}
\gamma_{\sf W}({\sf EM},\mathcal{G}):=\inf_{x\geq 1}\sup_{f\in\mathcal{C}(\mathcal{G}),k> 0, o>0}\gamma_{\sf W}({\sf EM},x,k,o,f).
\end{equation}
\begin{lemma}\label{lemupp}
Let $\mathcal{G}$ be a class of weighted congestion games. For any  ${\sf EM}\in \{\poa_{\epsilon},{\sf CR}^s_\epsilon\}$ we have that ${\sf EM}(\mathcal{G})\leq \gamma_{\sf W}({\sf EM},\mathcal{G})$. This fact holds for ${\sf EM}={\sf CR}^c_\epsilon$ if the latency functions of $\mathcal{C}(\mathcal{G})$ are semi-convex.
\end{lemma}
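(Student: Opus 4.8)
The plan is to prove all three bounds through a single primal-dual (smoothness-type) template, instantiated differently in Step~1 for each metric. Fix a game in $\mathcal{G}$, a solution profile $\bm\sigma$ of the relevant type (an $\epsilon$-approximate equilibrium, or the outcome of an $\epsilon$-approximate selfish/cooperative one-round walk), and a social optimum $\bm\sigma^*$, writing $k_e:=k_e(\bm\sigma)$, $o_e:=k_e(\bm\sigma^*)$. First I would derive one aggregate inequality that, after rearrangement, reads $\sum_e\beta_{\sf W}({\sf EM},k_e,o_e,\ell_e)\ge0$ with $\beta_{\sf W}$ exactly as in (\ref{def_alpha}); then I would add $x\ge1$ times this inequality to the identity ${\sf SUM}(\bm\sigma)=\sum_e k_e\ell_e(k_e)$ and bound the result resource-by-resource by $\gamma_{\sf W}$.

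Step~1 (the aggregate inequality). For ${\sf EM}=\poa_\epsilon$ I would sum the condition $cost_i(\bm\sigma)\le(1+\epsilon)cost_i(\bm\sigma_{-i},\sigma_i^*)$ weighted by $w_i$; using $k_e(\bm\sigma_{-i},\sigma_i^*)\le k_e+w_i$, monotonicity of $\ell_e$, and $w_i\le o_e$ whenever $e\in\sigma_i^*$, the double sum collapses on each resource to $\sum_{i:e\in\sigma_i^*}w_i\ell_e(k_e+w_i)\le o_e\ell_e(k_e+o_e)$, giving ${\sf SUM}(\bm\sigma)\le(1+\epsilon)\sum_e o_e\ell_e(k_e+o_e)$. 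For ${\sf EM}={\sf CR}^s_\epsilon$ the right-hand side is bounded identically (now using $k_e(\bm\sigma^{i-1})\le k_e$, since congestions only grow along a walk), but the left-hand side needs the staircase argument: ordering the users of $e$ by arrival time with partial weights $W_j$, one has $w_{i_j}\ell_e(W_j)=(W_j-W_{j-1})\ell_e(W_j)\ge\int_{W_{j-1}}^{W_j}\ell_e$, so $\sum_i w_i\,cost_i(\bm\sigma^i)\ge\sum_e\int_0^{k_e}\ell_e(t)\,dt$, yielding $\sum_e\int_0^{k_e}\ell_e(t)\,dt\le(1+\epsilon)\sum_e o_e\ell_e(k_e+o_e)$. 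For ${\sf EM}={\sf CR}^c_\epsilon$ I would sum the cooperative conditions (unweighted), telescope the left-hand side to ${\sf SUM}(\bm\sigma)$ since ${\sf SUM}(\bm\sigma^0)=0$, and bound each marginal increase with $g_e(x):=x\ell_e(x)$ using semi-convexity twice: convexity of $g_e$ makes fixed-length increments nondecreasing, so $g_e(k_e^{i-1}+w_i)-g_e(k_e^{i-1})\le g_e(k_e+w_i)-g_e(k_e)$, and convexity makes secant slopes from $k_e$ increasing, so this is $\le\frac{w_i}{o_e}(g_e(k_e+o_e)-g_e(k_e))$; summing over $i$ with $e\in\sigma_i^*$ gives $g_e(k_e+o_e)-g_e(k_e)$, hence ${\sf SUM}(\bm\sigma)\le(1+\epsilon)\sum_e\big((k_e+o_e)\ell_e(k_e+o_e)-k_e\ell_e(k_e)\big)$.

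Step~2 (dual combination and localization). For any $x\ge1$ I would write ${\sf SUM}(\bm\sigma)=\sum_e k_e\ell_e(k_e)\le\sum_e\big(k_e\ell_e(k_e)+x\,\beta_{\sf W}({\sf EM},k_e,o_e,\ell_e)\big)$, valid because the added term is $x$ times the nonnegative quantity from Step~1. On each resource with $k_e,o_e>0$ the summand equals $\gamma_{\sf W}({\sf EM},x,k_e,o_e,\ell_e)\,o_e\ell_e(o_e)$ by (\ref{def_gamma_prel}), hence is at most $\big(\sup_{f,k,o}\gamma_{\sf W}\big)o_e\ell_e(o_e)$; summing and taking $\inf_{x\ge1}$ as in (\ref{def_gamma}) gives ${\sf SUM}(\bm\sigma)\le\gamma_{\sf W}({\sf EM},\mathcal{G})\,{\sf SUM}(\bm\sigma^*)$, which is the claim. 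The point is that adding the dual multiplier turns the global bound into a separable sum, so the single-function, single-pair supremum decouples over resources.

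Step~3 (degenerate resources — the delicate point). The summand must still be controlled when $k_e=0$ or $o_e=0$, the cases excluded from the supremum. When $k_e=0$ the summand is $x(1+\epsilon)o_e\ell_e(o_e)$ for every metric, and since $\lim_{k\to0^+}\gamma_{\sf W}({\sf EM},x,k,o,f)=x(1+\epsilon)$ this is dominated by the supremum. When $o_e=0$ the resource is absent from the right-hand side, so for $\poa_\epsilon$ and ${\sf CR}^c_\epsilon$ the summand is $(1-x)k_e\ell_e(k_e)\le0$; for ${\sf CR}^s_\epsilon$, however, it equals $k_e\ell_e(k_e)-x\int_0^{k_e}\ell_e$, which need not be nonpositive for every $x\ge1$. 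This is the main obstacle, and I expect to resolve it by noting that the outer infimum self-corrects: letting $o\to0^+$ in $\gamma_{\sf W}({\sf CR}^s_\epsilon,x,k,o,f)$ drives the value to $+\infty$ whenever $kf(k)-x\int_0^k f>0$, so a finite $\gamma_{\sf W}({\sf CR}^s_\epsilon,\mathcal{G})$ forces $x\ge\sup_{f,k}\frac{kf(k)}{\int_0^k f}$, which is precisely the threshold making every $o_e=0$ summand nonpositive (and if $\gamma_{\sf W}=\infty$ the bound is vacuous). I would close by remarking that this computation is the combinatorial face of LP duality, the multiplier $x$ being the dual variable of the program whose optimum is the efficiency metric, which is presumably how the appendix phrases it, and that semi-convexity is invoked only, but essentially, in the two increment-monotonicity steps of the cooperative case — explaining why it is required for ${\sf CR}^c_\epsilon$ and not for $\poa_\epsilon$ or ${\sf CR}^s_\epsilon$.
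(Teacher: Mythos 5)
Your proposal is correct and follows essentially the same route as the paper: you derive the aggregate inequality $\sum_e\beta_{\sf W}({\sf EM},k_e,o_e,\ell_e)\geq 0$ from the weighted sum of the equilibrium/greedy conditions (with the staircase integral bound for selfish walks and the two convexity steps on $x\ell_e(x)$ for cooperative walks), then add $x\geq 1$ times it and bound resource-by-resource — which is exactly the content of the paper's {\sf LP1}/{\sf LP2}/{\sf DLP} chain and weak duality, just written without the explicit LP formalism. Your Step~3 treatment of the degenerate resources, in particular the observation that a finite $\gamma_{\sf W}({\sf CR}^s_\epsilon,\mathcal{G})$ forces $x\geq\sup_{f,k}kf(k)/\int_0^k f$ so that the $o_e=0$ summands become nonpositive, is if anything more careful than the paper's brief remark that such constraints can simply be removed.
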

The proof of Lemma~\ref{lemupp} is deferred to Section \ref{missing} of the appendix, and we give a brief overview of the case ${\sf EM}=\poa_{\epsilon}$. 
\begin{proof}[Sketch of the proof of Lemma \ref{lemupp}]
The maximum value of the following linear program in the variables $\alpha_e$'s is an upper bound on $\poa_\epsilon({\sf CG})$ (recall that $w_i$ is the weight of player $i$, $k_e$ and $o_e$ are the equilibrium and optimal congestions of resource $e$, respectively):\\
\begin{align}
{\sf LP1:}\quad \max\quad& \overbrace{\sum_{e\in E}\alpha_ek_e\ell_e(k_e)}^{{\sf SUM}(\bm\sigma)}\nonumber\\
s.t. \quad& \sum_{e\in \sigma_i}\alpha_e \ell_e(k_e)\leq (1+\epsilon)\sum_{e\in \sigma_i^*}\alpha_e \ell_e(k_e+w_i),\quad \forall i\in \N\label{const1main}\\
&\overbrace{\sum_{e\in E}\alpha_eo_e\ell_e(o_e)}^{{\sf SUM}(\bm\sigma^*)}= 1\label{const1bmain}\\
&\alpha_e\geq 0,\quad\forall e\in E,\nonumber
\end{align}
Indeed, by setting $\alpha_e=1$ for any $e\in E$, we have that: (i) the objective function is the social cost at the equilibrium; (ii) the constraints in  (\ref{const1main}) impose some relaxed $\epsilon$-approximate pure Nash equilibrium conditions (ensuring that each agent, at the equilibrium, does not get any benefit when deviating in favour of strategy $\sigma^*_i$); (iii) (\ref{const1bmain}) is the normalized optimal social cost (normalization is possible since there is some $o_e>0$, that implies $o_e\ell_e(o_e)>0$). We introduce further relaxations on {\sf LP1} (by replacing all the constraints in \eqref{const1main} with their weighted sum) and we get the following linear program:
\begin{align*}
{\sf LP2:}\quad \max\quad& \sum_{e\in E}\alpha_ek_e\ell_e(k_e)\nonumber\\
s.t. \quad&\sum_{e\in E}\alpha_e\beta_{\sf W}(\poa_\epsilon,k_e,o_e,\ell_e)\geq 0\nonumber\\
&\sum_{e\in E}\alpha_e o_e\ell_e(o_e)=1\\
&\alpha_e\geq 0,\quad\forall e\in E,\nonumber
\end{align*}
where $\beta_{\sf W}(\poa_\epsilon,k_e,o_e,\ell_e)$ is defined in \eqref{def_alpha}. By taking the dual of {\sf LP2}, we get the following linear program:
\begin{align*}
{\sf DLP:}\quad \min\quad & \gamma \nonumber\\
s.t.\quad & \gamma\cdot o_e\ell_e(o_e) \geq k_e\ell_e(k_e)+x\cdot \beta_{\sf W}({\sf PoA}_\epsilon,k_e,o_e,\ell_e),\quad \forall e\in E\\
&x\geq 0,\gamma\in \R\nonumber
\end{align*}
Finally, we show that there exists a feasible solution $(x,\gamma)$ of the dual that guarantees a value of at most $\gamma_{\sf W}(\poa_{\epsilon},\mathcal{G})$; by weak duality, this is an upper bound on the optimal value of {\sf LP}, and then an upper bound on the $\epsilon$-approximate price of anarchy.\qed
\end{proof}
\begin{lemma}\label{lemrou1}
Let $\mathcal{G}$ be a class of weighted congestion games. For each $M<\gamma_{\sf W}({\sf EM},\mathcal{G})$, with ${\sf EM}\in \{\poa_{\epsilon},{\sf CR}^s_\epsilon\}$, one of the following cases holds:
\begin{itemize}
\item[$\bullet$] {\bf Case 1:} there exists a non-constant latency function $f\in\mathcal{C}(\mathcal{G})$ and two real numbers $k,o>0$ such that $
M<\frac{kf(k)}{of(o)}$ and $\beta_{\sf W}({\sf EM},k,o,f)\geq 0$. 
\item[$\bullet$] {\bf Case 2:} there exist two latency functions $f_1,f_2\in \mathcal{C}(\mathcal{G})$ and four real numbers $k_1,k_2,o_1,o_2>0$ such that $M<	\frac{\alpha_1k_1f_1(k_1)+\alpha_2k_2f_2(k_2)}{\alpha_1o_1f_1(o_1)+\alpha_2o_2f_2(o_2)}$, where $\alpha_1:=\beta_{\sf W}({\sf EM},k_2,o_2,f_2)>0$ and $\alpha_2 :=-\beta_{\sf W}({\sf EM},k_1,o_1,f_1)>0$; furthermore, if ${\sf EM}=\poa_\epsilon$, $f_1$ and $f_2$ can be chosen as non-constant latency functions. 
\end{itemize}
This fact holds for ${\sf EM}={\sf CR}_\epsilon^c$ if the latency functions of $\mathcal{C}(\mathcal{G})$ are semi-convex. 
\end{lemma}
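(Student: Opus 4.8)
The plan is to read \eqref{def_gamma} as the value of a one-variable feasibility problem and to recover the two alternatives of the statement as its two possible Farkas certificates. For a triple $c=(f,k,o)$ with $f\in\mathcal{C}(\mathcal{G})$ and $k,o>0$, abbreviate $a_c:=\beta_{\sf W}({\sf EM},k,o,f)$ (as in \eqref{def_alpha}), $q_c:=of(o)>0$ and $b_c:=kf(k)-M\,of(o)$, so that the quantity in \eqref{def_gamma_prel} exceeds $M$ exactly when $a_cx+b_c>0$. Since $M<\gamma_{\sf W}({\sf EM},\mathcal{G})$ forces the inner supremum of \eqref{def_gamma} to exceed $M$ for \emph{every} $x\geq 1$, the system ``there is $x\geq 1$ with $a_cx+b_c\leq 0$ for all $c$'' is infeasible. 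The point of the chosen abbreviations is that $b_c>0$ is precisely the inequality $M<kf(k)/(of(o))$ of Case~1 and $a_c\geq 0$ is precisely $\beta_{\sf W}\geq 0$, so Case~1 asks for a single config with $a_c\geq 0,\ b_c>0$; a one-line computation shows that, for a pair $c_1,c_2$ with $a_{c_1}<0<a_{c_2}$, the ratio inequality of Case~2 is equivalent to the determinant condition $a_{c_2}b_{c_1}-a_{c_1}b_{c_2}>0$.

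Next I would dualize this one-dimensional system. Substituting $x=1+y$ with $y\geq 0$ and applying Farkas' lemma to $\{a_cy\leq-(a_c+b_c)\ \forall c,\ y\geq 0\}$ yields nonnegative multipliers $\lambda_c$ (finitely many nonzero) with $\sum_c\lambda_ca_c\geq 0$ and $\sum_c\lambda_c(a_c+b_c)>0$. Geometrically this says that the closed convex cone generated by the config vectors $\{(a_c,b_c)\}$ \emph{together with} the endpoint vector $(-1,1)$ (the contribution of the constraint $x\geq 1$) meets the open positive ordinate axis $\{(0,t):t>0\}$. By Carath\'eodory's theorem in $\R^2$, such an axis point is a nonnegative combination of at most two of these generators, and I would branch on which two.

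The case analysis then produces the dichotomy. A single generator lying on the positive ordinate axis must be a genuine config with $a_c=0,\ b_c>0$, hence Case~1. Two genuine config generators whose combination lands on the axis must have opposite-sign slopes, and solving the two linear relations gives $a_{c_2}b_{c_1}-a_{c_1}b_{c_2}>0$, i.e.\ Case~2 (if both slopes vanish one again lands in Case~1). The only remaining possibility is a combination of the endpoint vector $(-1,1)$ with a single config $c_1$, which forces $a_{c_1}>0$ and $a_{c_1}+b_{c_1}>0$; when $b_{c_1}>0$ this is Case~1, and otherwise I must replace the \emph{virtual} vector $(-1,1)$ by an honest config.

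Handling this last subcase is the crux. The idea is that the endpoint vector is approximated by genuine configs: letting $o\to 0$ one has $a_c\to-kf(k)$ for ${\sf EM}\in\{\poa_\epsilon,{\sf CR}^c_\epsilon\}$ and $a_c\to-\int_0^kf$ for ${\sf EM}={\sf CR}^s_\epsilon$, while $b_c\to kf(k)$ in all cases. Thus for $\poa_\epsilon$ and ${\sf CR}^c_\epsilon$ the direction of $(a_c,b_c)$ already tends to that of $(-1,1)$, and for ${\sf CR}^s_\epsilon$ one additionally lets $k\to 0$ (where $\int_0^kf\sim kf(k)$) to achieve the same limiting direction. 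Replacing $(-1,1)$ by such a genuine partner $c_0$ and using continuity, the Case~2 determinant becomes $a_{c_1}b_{c_0}-a_{c_0}b_{c_1}\to(\text{positive})\cdot(a_{c_1}+b_{c_1})>0$, so Case~2 holds for small enough parameters. I expect this limiting realization of the $x\geq 1$ boundary to be the main obstacle, since it must be run uniformly across the three metrics and the strict inequality must survive passage to the limit. Two bookkeeping points close the argument: for ${\sf EM}=\poa_\epsilon$ one checks that the selected $f_1,f_2$ may be taken non-constant (a constant $f$ in the troublesome subcase would force $M<1+\epsilon$, a regime handled directly since $\mathcal{C}(\mathcal{G})$ contains a non-constant function yielding a strictly larger ratio), and for ${\sf EM}={\sf CR}^c_\epsilon$ one invokes semi-convexity exactly where Lemma~\ref{lemupp} does, so that every manipulation of $\beta_{\sf W}$ remains valid.
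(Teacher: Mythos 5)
Your overall strategy---reading \eqref{def_gamma} as a one-variable infeasibility statement and extracting a two-generator dual certificate---is a genuinely different route from the paper's, which instead splits the configurations into two monotone envelopes $\gamma_\geq(x)$ and $\gamma_<(x)$ (suprema over $\beta_{\sf W}\geq 0$ and $\beta_{\sf W}<0$ respectively), proves a continuity lemma for $\gamma_\geq$, locates their crossing point $x^*$ (with separate treatments of $x^*=0$ and $x^*=\infty$), and only then writes down a two-constraint dual program and invokes strong duality and complementary slackness. Your determinant reformulation of Case~2 is correct, the branching on which generators carry the certificate is the right case analysis, and the $o\to 0$ construction for realizing the $x\geq 1$ boundary as a limit of genuine configurations is a workable substitute for the paper's device of extending the infimum to $x>0$.

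The genuine gap is the Farkas step. Farkas' lemma does not hold for an infinite system of linear inequalities: infeasibility of $\{a_c y\leq d_c\ \forall c,\ y\geq 0\}$ over an infinite index set need not admit any finite nonnegative combination certifying it (the system $y\geq n$ for all $n\in\NN$ is infeasible, yet every finite subsystem is feasible and no multipliers exist). So ``applying Farkas' lemma \dots yields nonnegative multipliers $\lambda_c$ (finitely many nonzero)'' is unjustified as written, and this is precisely where the difficulty of the lemma lives. The repair must use the structure of the problem: there always exists a configuration with $\beta_{\sf W}>0$ strictly (e.g.\ take $k<o$, since $\beta_{\sf W}({\sf EM},k,o,f)\geq (o-k)f(k)$ for all three metrics), hence the threshold $U:=\inf\{-b_c/a_c : a_c>0\}$ is finite, and the one-dimensional infeasibility then does localize to at most two constraints (either $U<1$, your endpoint subcase, or $1\leq U<\sup\{-b_c/a_c : a_c<0\}$, yielding two honest configurations of opposite-sign $\beta_{\sf W}$ chosen to approximate the non-attained inf and sup). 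This observation plays the same role as the non-emptiness of $U_\geq$ and the monotonicity/continuity analysis in the paper. Two smaller points: Case~1 of the statement requires a non-constant $f$ for \emph{all} three metrics, not only for $\poa_\epsilon$, so your non-constancy patch must be run in every branch terminating in Case~1; and in the ${\sf CR}^s_\epsilon$ endpoint subcase the joint limit must keep $\beta_{\sf W}<0$ strictly for the partner configuration (e.g.\ $o=k^2$ with $k\to 0$), since Case~2 demands $\alpha_2>0$ strictly.
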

The proof of Lemma~\ref{lemrou1} is quite technical and does not represent a fundamental result of such work, thus it is deferred to Section \ref{missing} of the appendix. Anyway, we give a sketch of the proof clarifying the main steps. 
\begin{proof}[Sketch of the proof of Lemma \ref{lemrou1}]
The claim of Lemma~\ref{lemrou1} is obtained by reversing the proof arguments of Lemma~\ref{lemupp} as follows: (i) we start from the upper bound $\gamma_{\sf W}({\sf EM},\mathcal{G})$ obtained in Lemma~\ref{lemupp}; (ii) we derive a linear program $\overline{\sf DLP}$ similar as the dual program used in Lemma~\ref{lemupp}, but with two constraints only (except those that impose the non-negativity of the variables), and whose optimal value is higher than $M$; (iii) the dual of $\overline{\sf DLP}$ is similar to the linear program {\sf LP2} used in Lemma~\ref{lemupp}, but with two constraints and two variables, and has the same optimal value (higher than $M$) as in $\overline{\sf DLP}$ (by strong duality); in particular, the new linear program is explicitly defined as follows:
\begin{align*}
\overline{\sf LP}:\quad \max\quad & \alpha_1k_1 f_1(k_1)+\alpha_2k_2 f_2(k_2)\nonumber\\
s.t. \quad &\alpha_1\beta_{\sf W}({\sf EM},k_1,k_1,f_1)+\alpha_2\beta_{\sf W}({\sf EM},k_2,o_2,f_2)\geq 0\\
& \alpha_1o_1 f_1(o_1)+\alpha_2o_2 f_2(o_2)= 1\\
&\alpha_1,\alpha_2\geq 0;\nonumber
\end{align*}
(iv) finally, the claim of Lemma~\ref{lemrou1} is obtained by characterizing the optimal solution of $\overline{\sf LP}$. \qed
\end{proof}
\begin{remark}\label{rema_lemrou}
By exploiting Lemma \ref{lemrou1}, given an arbitrary choice of the input parameters $k,o,f$ (resp. $k_1,k_2,o_1,o_2,f_1,f_2$) such that $\beta_{\sf W}({\sf EM},k,o,f)\geq 0$ (resp. $\beta_{\sf W}({\sf EM},k_2,o_2,f_2)>0$ and $-\beta_{\sf W}({\sf EM},k_1,o_1,f_1)>0$), we have that $\frac{kf(k)}{of(o)}$ (resp. $\frac{\alpha_1k_1f_1(k_1)+\alpha_2k_2f_2(k_2)}{\alpha_1o_1f_1(o_1)+\alpha_2o_2f_2(o_2)}$) is a lower bound on $\gamma_{\sf W}({\sf EM},\mathcal{G})$; furthermore, by definition of $\gamma_{\sf W}({\sf EM},\mathcal{G})$, we have that $\sup_{f\in\mathcal{C}(\mathcal{G}),k,o>0}\gamma_{\sf W}({\sf EM},x,k,o,f)$ is an upper bound on $\gamma_{\sf W}({\sf EM},\mathcal{G})$ for any $x\geq 1$. Thus, if such upper bound (for a suitable choice of $x$) is equal to the former lower bound (for a suitable choice of the input parameters), we necessarily have that they are both equal to $\gamma_{\sf W}({\sf EM},\mathcal{G})$.
\end{remark}
\begin{remark}\label{remalem}
In Lemma~\ref{lemrou1}, if $\mathcal{C}(\mathcal{G})$ is closed under abscissa and ordinate scaling, we can assume without loss of generality that $o_j=1$ for each $j\in [2]$ (remove index $j$ if the first case of Lemma~\ref{lemrou1} is verified). Indeed, if it is not the case, let $\hat{f}_j\in\mathcal{C}(\mathcal{G})$ be the latency function such that $\hat{f}_j(k):=o_jf_j(o_jk)$ for any $k\geq 0$. If we consider tuple $(k_j/o_j,1,\hat{f}_j)$ in place of tuple $(k_j,o_j,f_j)$ for each $j\in [2]$, the claim of Lemma~\ref{lemrou1} follows as well. Analogously, if $\mathcal{C}(\mathcal{G})$ is closed under abscissa and ordinate scaling, the supremum appearing in the definition of $\gamma_{\sf W}({\sf EM},\mathcal{C})$ does not decrease when assuming $o=1$; thus, also Lemma \ref{lemupp} holds under this assumption. 
\end{remark}
\begin{example}
To illustrate the claims of Theorem~\ref{lemupp} and Lemma~\ref{lemrou1} with a concrete example, we consider the problem of evaluating the competitive ratio of exact one-round walks involving selfish players for affine weighted congestion games. Because of Theorem~\ref{lemupp} and Remark \ref{remalem}, we get that:
\begin{align}
&\gamma_{\sf W}({\sf CR}^s_0,\mathcal{P}(1))\nonumber\\
&=\inf_{x\geq 1}\sup_{f\in\mathcal{P}(1),k\geq 0}\left(kf(k)+x\left(f(k+1)-\int_{0}^kf(t)\text{dt}\right)\right)\nonumber\\
&=\inf_{x\geq 1}\sup_{\substack{\alpha_0,\alpha_1\geq 0\\k> 0}}\left(\sum_{d\in \{0,1\}} \alpha_{d} k^{d+1}+x\left(\sum_{d\in \{0,1\}} \alpha_{d} (k+1)^{d}-\sum_{d\in \{0,1\}} \alpha_{d} \frac{k^{d+1}}{d+1}\right)\right)\nonumber\\
&=\inf_{x\geq 1}\sup_{d\in \{0,1\},k> 0}\left(k^{d+1}+x\left((k+1)^{d}-\frac{k^{d+1}}{d+1}\right)\right)\nonumber\\
&=\min_{x>2}\sup_{k> 0}\left(k^2+x\left(k+1-\frac{k^2}{2}\right)\right)\nonumber\\
&\leq\gamma\left({\sf CR}^s_0,\overbrace{\frac{2\sqrt{3}+6}{3}}^{x},\overbrace{\frac{\sqrt{3}+3}{\sqrt{3}}}^{k},\overbrace{1}^o,f:f(t)=t\right)\label{gammaoneone}\\
&= 2\sqrt{3}+4\nonumber
\end{align}
thus obtaining the same upper bound of \cite{CMS12}. Relatively to Lemma~\ref{lemrou1}, we have that the first case is verified. In particular, by setting $k=\frac{\sqrt{3}+3}{\sqrt{3}}$, $o=1$  and $f$ in such a way that $f(t)=t$ for any $t\geq 0$ (i.e. the parameters $k,o,f$ used in (\ref{gammaoneone})), we get that the second inequality of the first case of Lemma~\ref{lemrou1} is tight (i.e., $\beta_{\sf W}({\sf CR}_\epsilon^s,k,o,f)=0$), and $\frac{kf(k)}{of(o)}$ is equal to $2\sqrt{3}+4$. Thus, by Remark \ref{rema_lemrou}, we necessarily have that $\gamma_{\sf W}({\sf CR}^s_0,\mathcal{P}(1))=2\sqrt{3}+4$.
\end{example}
\subsection{Main Theorems}
In the following theorem, we prove that, under mild assumptions on the latency functions, no improvements are possible for $\epsilon$-approximate pure Nash equilibria even when restricting to load balancing games.
\begin{theorem}\label{thm1}
Let $\mathcal{C}$ be a class of latency functions that is closed under ordinate and abscissa scaling. Then, $\gamma_{\sf W}(\poa_{\epsilon},{\sf W}(\mathcal{C}))=\poa_{\epsilon}({\sf W}(\mathcal{C}))=\poa_{\epsilon}({\sf WLB}(\mathcal{C})).$ If all latency functions of $\mathcal{C}$ (except for the constant ones) are unbounded, we get $\gamma_{\sf W}(\poa_{\epsilon},{\sf W}(\mathcal{C}))=\poa_{\epsilon}({\sf W}(\mathcal{C}))=\poa_{\epsilon}({\sf WSLB}(\mathcal{C}))$.
\end{theorem}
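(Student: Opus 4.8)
The plan is to establish both equality chains by a single sandwich argument that reduces everything to a lower-bound construction. Since ${\sf WSLB}(\mathcal C)\subseteq{\sf WLB}(\mathcal C)\subseteq{\sf W}(\mathcal C)$, the inequalities $\poa_\epsilon({\sf WSLB}(\mathcal C))\leq\poa_\epsilon({\sf WLB}(\mathcal C))\leq\poa_\epsilon({\sf W}(\mathcal C))$ are immediate, while Lemma~\ref{lemupp} gives $\poa_\epsilon({\sf W}(\mathcal C))\leq\gamma_{\sf W}(\poa_\epsilon,{\sf W}(\mathcal C))$. Thus both statements follow once I exhibit, for every $M<\gamma_{\sf W}(\poa_\epsilon,{\sf W}(\mathcal C))$, a load balancing game with $\poa_\epsilon>M$ --- asymmetric for the first claim, symmetric for the second --- since this forces $\gamma_{\sf W}(\poa_\epsilon,{\sf W}(\mathcal C))\leq\poa_\epsilon({\sf WLB}(\mathcal C))$ (resp.\ $\leq\poa_\epsilon({\sf WSLB}(\mathcal C))$) and collapses the whole chain. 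Fixing such an $M$, I invoke Lemma~\ref{lemrou1} to obtain either Case-1 data $(f,k,o)$ with $\beta_{\sf W}(\poa_\epsilon,k,o,f)\geq0$ and $kf(k)/(of(o))>M$, or Case-2 data $(f_1,f_2,k_1,o_1,k_2,o_2)$, and by Remark~\ref{remalem} I normalize $o=1$ (resp.\ $o_1=o_2=1$) using the two scaling closures.

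The heart of the first claim is a geometric cascade realizing Case~1. I would use $L$ levels; every resource of level $g$ carries the ordinate-rescaled latency $\rho^{-g}f\in\mathcal C$, with $\rho:=k/o$, and the number of level-$g$ resources grows like $\rho^{g}$. Each player has weight $o$ and a two-element strategy set made of her level-$g$ resource (equilibrium choice) and a private level-$(g+1)$ resource (optimum choice). The flow is then consistent: in $\bm\sigma$ each level-$g$ resource carries congestion $k$, whereas in $\bm\sigma^*$ it hosts a single weight-$o$ player, namely the level-$(g-1)$ player owning it as her optimum resource, so that $o_e=o$ and $w_i=o_e$ --- exactly the tightness under which the aggregation in the proof of Lemma~\ref{lemupp} becomes an equality. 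The only deviation available to a level-$g$ player is to her level-$(g+1)$ resource, whose equilibrium congestion is $k$; the resulting $\epsilon$-equilibrium condition $\rho^{-g}f(k)\leq(1+\epsilon)\rho^{-(g+1)}f(k+o)$, after multiplying through by $k\rho^{g}$, is precisely $kf(k)\leq(1+\epsilon)of(k+o)$, i.e.\ $\beta_{\sf W}(\poa_\epsilon,k,o,f)\geq0$. A level-by-level accounting gives a constant per-level contribution to both ${\sf SUM}(\bm\sigma)$ and ${\sf SUM}(\bm\sigma^*)$, so their ratio tends to $kf(k)/(of(o))>M$ as $L\to\infty$, the two boundary levels contributing only an $O(1/L)$ fraction; non-integrality of $\rho$ and of the counts is absorbed by a limiting argument. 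Crucially, restricting each strategy set to two resources is legitimate in an asymmetric load balancing game and rules out every extraneous deviation for free.

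Case~2 is handled by the same cascade run on two interleaved resource types, of shapes $(f_1,k_1,o_1)$ and $(f_2,k_2,o_2)$, whose populations are placed in the ratio $\alpha_1:\alpha_2$ of Lemma~\ref{lemrou1}; this makes the aggregate inequality $\sum_e\beta_{\sf W}(\poa_\epsilon,k_e,o_e,\ell_e)\geq0$ tight and yields the Case-2 cost ratio. Here the genuinely weighted nature of the instance (the two types require players of different weights $o_1\neq o_2$, whence the role of abscissa scaling via Remark~\ref{remalem}) and the subtlety that a single negative $\beta_{\sf W}(\poa_\epsilon,k_1,o_1,f_1)<0$ cannot be certified on an isolated resource force me to route the deviations of the deficient type onto resources of the other type and to tune the per-level rescalings so that every \emph{individual} $\epsilon$-Nash condition holds while the \emph{aggregate} stays exactly tight.

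The main obstacle, and the place where the extra hypothesis enters, is the symmetric claim: I must keep the equilibrium $\bm\sigma$ and optimum $\bm\sigma^*$ of the cascade intact after collapsing all strategy sets into one common set equal to the whole resource collection. A player can now jump to any resource, in particular to the ``cheap'' high levels carrying a tiny coefficient $\rho^{-g'}$, which would destroy the equilibrium. To defeat this I would additionally \emph{abscissa}-shrink the high-level latencies, so that a player landing on a far, small resource over-congests it; since every non-constant function of $\mathcal C$ is unbounded, the latency she then experiences can be driven above any prescribed value, and for a finite $L$ a sufficiently aggressive shrinking makes all extraneous deviations non-improving up to the factor $1+\epsilon$ while leaving $\bm\sigma$, $\bm\sigma^*$ and the target ratio unchanged. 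I expect the delicate points to be the simultaneous threading of weights and of the two scalings so that the intended congestions are preserved, and the verification that the unbounded barrier dominates the geometric coefficient for \emph{all} of the finitely many extraneous player-resource pairs; unboundedness is exactly what this barrier needs, and its necessity is reflected by the fact that a bounded latency would leave the cheap high levels permanently attractive.
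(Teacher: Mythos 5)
Your overall strategy coincides with the paper's: sandwich via Lemma~\ref{lemupp}, reduce to exhibiting a lower-bound instance for every $M<\gamma_{\sf W}(\poa_\epsilon,{\sf W}(\mathcal{C}))$ via Lemma~\ref{lemrou1} and Remark~\ref{remalem}, build a multi-level cascade in which each player's first strategy is her equilibrium resource and her second strategy sits one level deeper, and use unboundedness as a barrier against long-range deviations in the symmetric case. However, the specific constructions you sketch have gaps that the paper's design is engineered to avoid. First, in your Case-1 cascade every player has weight $o$ and each equilibrium resource must carry congestion exactly $k$, which requires $k/o\in\mathbb{N}$; the ``limiting argument'' you invoke for non-integrality is not available, because latency functions are only assumed non-decreasing (not continuous), so an approximate congestion $k'\approx k$ gives no control on $f(k')$, and a leftover player of weight $w'<o$ faces the deviation test against $f(k+w')\le f(k+o)$, which $\beta_{\sf W}\ge 0$ does not certify. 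The paper sidesteps this by putting $n$ players of weight $(k/n)^i$-type on each resource and pairing the shrinking weights with a matched abscissa rescaling, so that every effective congestion is \emph{exactly} $k$ or $1$ for every $n$.

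Second, your Case-2 picture of two resource types interleaved ``in the ratio $\alpha_1:\alpha_2$'' with a tight aggregate constraint does not yield individually valid $\epsilon$-Nash conditions: $\beta_{\sf W}(\poa_\epsilon,k_1,o_1,f_1)<0$ means precisely that a type-1 player deviating onto a type-1 optimum resource at the same ordinate scale improves by more than a $(1+\epsilon)$ factor, and rerouting all such deviations onto type-2 resources inflates the type-2 optimum congestions beyond $o_2$ and breaks the accounting. The paper instead runs two consecutive phases of $s$ levels each, multiplying the ordinate coefficient by $\theta_1$ (resp.\ $\theta_2$) per level so that \emph{every} individual deviation to the next level is exactly tight; the weights $\alpha_1,\alpha_2$ then emerge not from population counts but from the limits of the resulting geometric series (with ratios $k_1\theta_1>1$ and $k_2\theta_2<1$), since $\frac{k_1\theta_1}{k_1\theta_1-1}=\frac{k_1f_1(k_1)}{\alpha_2}$ and $\frac{1}{1-k_2\theta_2}=\frac{k_1f_1(k_1)}{\alpha_1 k_1\theta_{1,2}}$ up to common factors. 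Finally, for the symmetric claim your ``abscissa-shrink the high levels'' barrier is the right mechanism but is incompatible with constant player weights: shrinking the abscissa at a deep level blows up the latency for the weight-$o$ players legitimately assigned there as well. One must also shrink the weights geometrically with depth (as in the paper), so that a deviating player's weight is large relative to the abscissa scale of resources more than one level below her, and only then does unboundedness of $f$ defeat the small ordinate coefficient. You flag this ``threading'' as delicate, but resolving it essentially forces the paper's $n$-ary tree construction.
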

\begin{proof}
First of all, we consider the case in which all latency functions, except for the constant ones, are unbounded. Fix $M<\gamma_{\sf W}(\poa_{\epsilon},{\sf W}(\mathcal{C}))$. In the proof, we will show that there exists a load balancing instance $\LB\in{\sf WSLB}(\mathcal{C})$ such that $\poa_\epsilon(\LB)>M$. By Lemma~\ref{lemupp}, this will imply that $M<\poa_{\epsilon}(\LB)\leq \poa_{\epsilon}({\sf W}(\mathcal{C}))\leq \gamma_{\sf W}(\poa_{\epsilon},{\sf W}(\mathcal{C}))$, and by the arbitrariness of $M<\gamma_{\sf W}(\poa_{\epsilon},{\sf W}(\mathcal{C}))$, this fact will show the claim. We make use of a multi-graph, to represent a load balancing game together with two special strategy profiles. This multi-graph is denoted as {\em load balancing graph}, and is defined as follows: the nodes are all the resources in $E$, and each player is associated to a weighted edge $(e_1,e_2,w)$, where $\{e_1\}$ is denoted as her {\em first strategy}, $\{e_2\}$ is her {\em second strategy}, and $w$ is her weight. With a little abuse of notation, any load balancing graph will be also used to denote the  corresponding load balancing game. 

If Case 2 of Lemma~\ref{lemrou1} holds (with respect to $\mathcal{C}$), let $k_1,k_2,o_1,o_2,f_1,f_2,\alpha_1=\beta_{\sf W}({\sf PoA}_\epsilon,k_2,o_2,f_2),\alpha_2=-\beta_{\sf W}({\sf PoA}_\epsilon,k_1,o_1,f_1)$ be the parameters considered in the claim of Lemma~\ref{lemrou1}. By Lemma~\ref{lemrou1}, $f_1$ and $f_2$ can be chosen among the non-constant latencies of $\mathcal{C}$, thus, by hypothesis, they are unbounded. Furthermore, by Remark \ref{remalem}, we can assume without loss of generality that $o_1=o_2=1$. 

Let $s,n\in \NN$ be two arbitrary positive integers. Consider a load balancing graph $\LB_{s,n}(k_1,k_2,f_1,f_2)$ yielded by a directed $n$-ary tree\footnote{With a little abuse of notation, the value of $n$ considered in such load-balancing instance does not represent the total number of players, but the number of players selecting each resource when playing their first strategy.}, organized in $2s$ levels, numbered from $1$ to $2s$, and whose edges are oriented from the root to the leaves, with the addition of $n$ self-loops on the nodes of level $2s$. The weight $w_i$ of a player associated to an edge outgoing from a node at level $i\in [s]$ (resp. $i\in [2s]_{s+1}$) is equal to $(k_1/n)^i$ (resp. $(k_1/n)^s(k_2/n)^{i-s}$). For $i,j\in [2]$, define $$\theta_{i,j}:=\frac{f_i(k_i)}{(1+\epsilon)f_j(k_j+1)}$$ and $\theta_i:=\theta_{i,i}$. Each resource at level $i$ has latency 
$$g_i(x):=\begin{cases}
\theta_{1}^{i-1}f_1\left(\left(\frac{n}{k_1}\right)^{i-1}x\right) & \text{ if }i\in [s],\\
\theta_{1}^{s-1}\theta_{1,2}\theta_{2}^{i-s-1}f_2\left(\left(\frac{n}{k_1}\right)^{s}\left(\frac{n}{k_2}\right)^{i-s-1}x\right)&\text{ otherwise.}
\end{cases}
$$
See Figure \ref{fig:56} for an example. 
\begin{figure}
\centering
\includegraphics[scale=0.5]{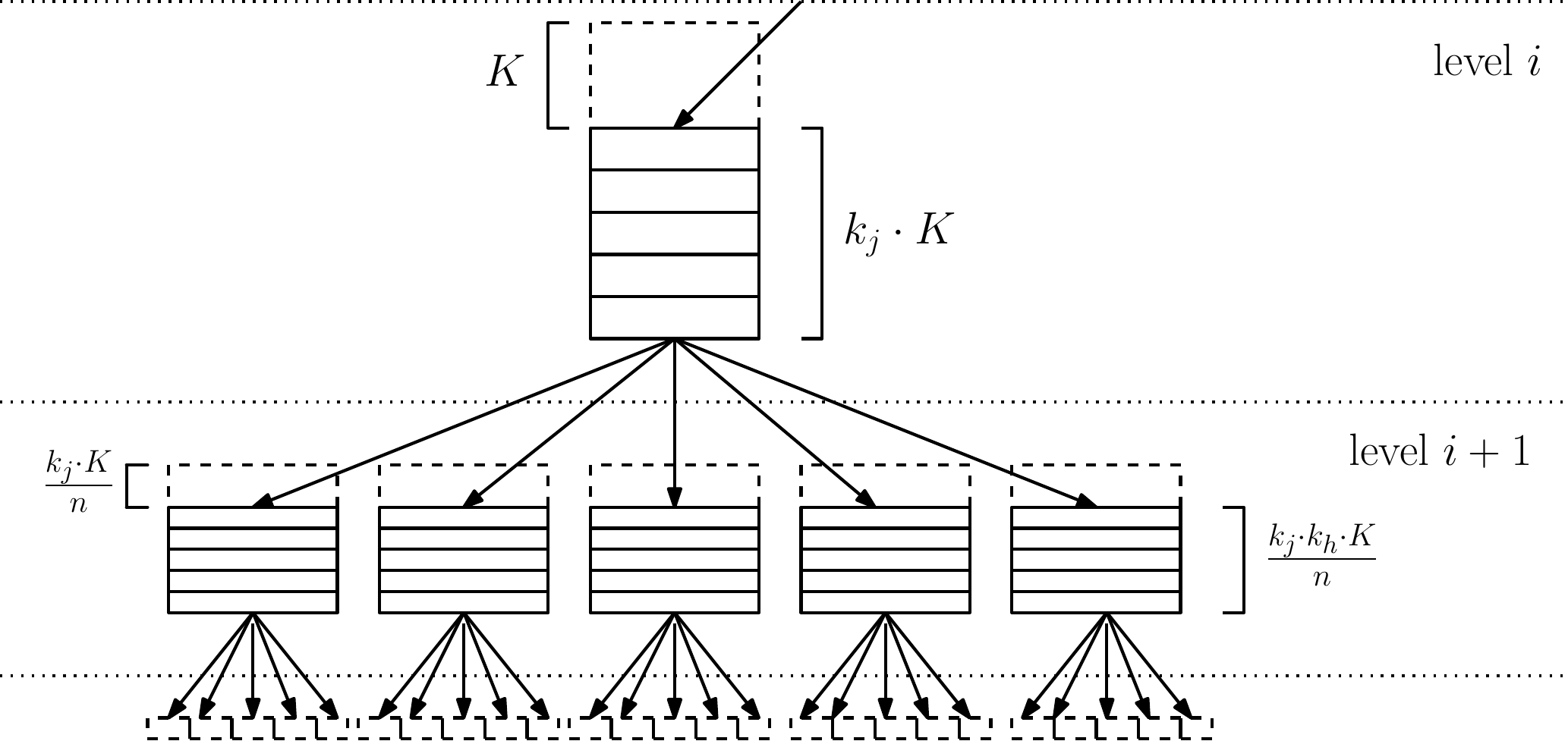}
\caption{Figure \ref{fig:56} depicts a node/resource at some level $i\in [2s-2]_2$ of the load balancing graph $\LB_{s,n}(k_1,k_2,f_1,f_2)$ with $n=5$, and all the nodes/resources at level $i+1$ connected to the considered node at level $i$. Each non-dashed (resp. dashed) rectangle on some node/resource represents a player selecting that resource in $\sg_{s,n}$ (resp. $\sg^*_{s,n}$). The congestion of each resource in $\sg_{s,n}$ or $\sg^*_{s,n}$ is represented accordingly. Let $K$ be the congestion of the resources at level $i$ in $\sg^*_{s,n}$. Then $K\cdot k_j$ is the congestion of the resources at level $i$ in $\sg_{s,n}$, $K\cdot k_j/n$ is the congestion of the resources at level $i+1$ in $\sg^*_{s,n}$, and $K\cdot k_j\cdot  k_h/n$ is the congestion of the resources at level $i+1$ in $\sg_{s,n}$, where $j=h=1$ if $i\in [s-1]$,   $j=1$ and $h=2$ if $i=s$, $j=h=2$ otherwise.}\label{fig:56}
\end{figure}
Let $\bm\sigma_{s,n}$ be the strategy profile in which all players select their first strategy. We have the following lemma:
\begin{lemma}\label{equi_claim}
For any integer $s\geq 2$, there exists $n(s)\in \NN$, such that $\bm\sigma_{s,n}$ is an $\epsilon$-approximate pure Nash equilibrium of $\LB_{s,n}(k_1,k_2,f_1,f_2)$ for any $n\geq n(s)$.
\end{lemma}
The proof of the above lemma is deferred to Section \ref{missing} of the appendix, and we only give a sketch of the proof. We consider an arbitrary player whose first strategy is a resource from a generic level $i$. Since the game is symmetric, we have to check deviations to all possible strategies. We have that: (i) if $i\in [2s-1]$ and the player deviates in favour of a resource from level $j=i+1$, her cost decreases exactly of a factor $1+\epsilon$; (ii) if $i\in [2s]$ and the player deviates in favour of a resource from level $j\leq i$, her cost does not decrease; (iii) if $i\in [2s-2]$ and the player deviates in favour of a resource from level $j>i+1$, for any sufficiently large $n$, her cost does not decrease. In any case, the cost of the considered player, when she deviates, cannot decrease of a factor higher than $1+\epsilon$,  thus $\sg_{s,n}$ is a pure Nash equilibrium. In the appendix, we prove separately all the three cases.

In the remainder of the proof, with the aim of estimating the approximate price of anarchy of $\LB_{s,n}(k_1,k_2,f_1,f_2)$, we compare the social value of the $\epsilon$-pure Nash equilibrium $\bm\sigma_{s,n}$, with that of the strategy profile $\bm\sigma^*_{s,n}$ in which all players select their second strategy. By exploiting the definition of $\alpha_1,\alpha_2$ given in Lemma \ref{lemrou1}, we necessarily get $k_1\theta_{1}>1$ (from $\alpha_2>0$) and $k_2\theta_{2}<1$ (from $\alpha_1>0$). We have that
\begin{align}
&{\sf SUM}(\bm\sigma_{s,n})\nonumber\\
&=\sum_{i=1}^{2s}n^{i-1}w_i\cdot n \cdot g_i(n\cdot w_i)\nonumber\\
&=\sum_{i=1}^s n^{i}\left(\frac{k_1}{n}\right)^ig_i\left(n\left(\frac{k_1}{n}\right)^i\right)\nonumber\\
&\ \ \ \ \ \ \ +\sum_{i=s+1}^{2s}n^{i}\left(\frac{k_1}{n}\right)^s\left(\frac{k_2}{n}\right)^{i-s}g_i\left(n\left(\frac{k_1}{n}\right)^s\left(\frac{k_2}{n}\right)^{i-s}\right)\nonumber\\
&=\sum_{i=1}^sk_1^i\theta_{1}^{i-1}f_1(k_1)+\sum_{i=s+1}^{2s}k_1^sk_2^{i-s}\theta_{1}^{s-1}\theta_{1,2}\theta_{2}^{i-s-1}f_2(k_2)\nonumber\\
&=\sum_{i=1}^s(k_1\theta_{1})^{i-1}k_1f_1(k_1)+\sum_{i=s+1}^{2s}(k_1\theta_{1})^{s-1}k_1\theta_{1,2}(k_2\theta_{2})^{i-s-1}k_2f_2(k_2)\nonumber\\
&=\sum_{i=0}^{s-1}(k_1\theta_{1})^{i}k_1f_1(k_1)+(k_1\theta_{1})^{s-1}k_1\theta_{1,2}\sum_{i=0}^{s-1}(k_2\theta_{2})^{i}k_2f_2(k_2)\label{conteq}\\
&=\left(\frac{(k_1\theta_{1})^s-1}{k_1\theta_{1}-1}\right)k_1f_1(k_1)+(k_1\theta_{1})^{s-1}k_1\theta_{1,2}\left(\frac{1-(k_2\theta_{2})^s}{1-k_2\theta_{2}}\right)k_2f_2(k_2).\label{equeq}
\end{align}
Now, observe that in $\bm\sigma^*_{s,n}$ there are no players at level 1, there is exactly one player of weight $(k_1/n)^{i-1}$ (resp. $(k_1/n)^{s}(k_2/n)^{i-s-1})$)  on each resource at level $i\in [s+1]_2$ (resp. $i\in [2s-1]_{s+2}$),  and there are $n+1$ players on each resource at level $2s$. We get\footnote{The definition of $[<]_s(F(s))$ has been given in the preliminaries, and substitutes the usual asymptotic notation.}
\begin{align}
&{\sf SUM}(\bm\sigma^*_{s,n})\nonumber\\
&=\sum_{i=2}^{2s-1}n^{i-1}w_{i-1}\cdot g_i(w_{i-1})+n^{2s-1}(n\cdot w_{2s}+w_{2s-1})g_{2s}(n\cdot w_{2s}+w_{2s-1})\nonumber\\
&=\sum_{i=2}^{s+1}\left(n^{i-1}\left(\frac{k_1}{n}  \right)^{i-1}\right)g_i\left(\left(\frac{k_1}{n}\right)^{i-1}\right)\nonumber\\
&\ \ \ \ \ \ \ +\sum_{i=s+2}^{2s-1}n^{i-1}\left(\frac{k_1}{n}\right)^s\left(\frac{k_2}{n}\right)^{i-s-1}g_i\left(\left(\frac{k_1}{n}\right)^s\left(\frac{k_2}{n}\right)^{i-s-1}\right)\nonumber\\
&\ \ \ \ \ \ \ +n^{2s-1}(n\cdot w_{2s}+w_{2s-1})g_{2s}(n\cdot w_{2s}+w_{2s-1})\nonumber\\
&=\sum_{i=2}^{s}k_1^{i-1}\theta_{1}^{i-1}f_1(1)+\sum_{i=s+1}^{2s-1}k_1^sk_2^{i-s-1}\theta_{1}^{s-1}\theta_{1,2}\theta_{2}^{i-s-1}f_2(1)\nonumber\\
&\ \ \ \ \ \ \ +k_1^sk_2^{s-1}(k_2+1)\theta_{1}^{s-1}\theta_{1,2}\theta_{2}^{s-1}f_2(k_2+1)\label{contopt}\\
&=\sum_{i=1}^{s}k_1^{i-1}\theta_{1}^{i-1}f_1(1)+\sum_{i=s+1}^{2s-1}k_1^sk_2^{i-s-1}\theta_{1}^{s-1}\theta_{1,2}\theta_{2}^{i-s-1}f_2(1)\nonumber\\
&\ \ \ \ \ \ \ \overbrace{-f_1(1)+k_1^sk_2^{s-1}(k_2+1)\theta_{1}^{s-1}\theta_{1,2}\theta_{2}^{s-1}f_2(k_2+1)}^{[<]_s\left((k_1\theta_{1})^{s-1}\right)}\nonumber\\
&=\sum_{i=0}^{s-1}(k_1\theta_{1})^{i}f_1(1)+(k_1\theta_{1})^{s-1}k_1\theta_{1,2}\sum_{i=0}^{s-1}(k_2\theta_{2})^{i}f_2(1)+[<]_s\left((k_1\theta_{1})^{s-1}\right)\nonumber\\
&=\frac{(k_1\theta_{1})^s-1}{k_1\theta_{1}-1}f_1(1)+(k_1\theta_{1})^{s-1}k_1\theta_{1,2}\frac{1-(k_2\theta_{2})^s}{1-k_2\theta_{2}}f_2(1)
+[<]_s\left((k_1\theta_{1})^{s-1}\right).\label{opteq}
\end{align}
By using (\ref{equeq}) and (\ref{opteq}), inequalities $k_1\theta_{1}>1$ and $k_2\theta_{2}<1$, and the fact that $\sg_{s,n}$ is a pure Nash equilibrium of $\LB_{s,n}(k_1,k_2,f_1,f_2)$ for any $s$ and any sufficiently large $n$ (by Lemma \ref{equi_claim}), we get
\begin{align}
&\sup_{s,n\in \NN}\poa_\epsilon(\LB_{s,n}(k_1,k_2,f_1,f_2))\nonumber\\
&\geq \lim_{s\rightarrow \infty}\lim_{n\rightarrow \infty}\frac{{\sf SUM}(\bm\sigma_{s,n})}{{\sf SUM}(\bm\sigma^*_{s,n})}\nonumber\\
&=\lim_{s\rightarrow \infty}\lim_{n\rightarrow \infty}\frac{{\sf SUM}(\bm\sigma_{s,n}) /(k_1\theta_{1})^{s-1}}{{\sf SUM}(\bm\sigma^*_{s,n})/(k_1\theta_{1})^{s-1}}\nonumber\\
&=\frac{\frac{k_1\theta_{1}}{k_1\theta_{1}-1}k_1f_1(k_1)+k_1\theta_{1,2}\left(\frac{1}{1-k_2\theta_{2}}\right)k_2f_2(k_2)}
{\frac{k_1\theta_{1}}{k_1\theta_{1}-1}f_1(1)+k_1\theta_{1,2}\left(\frac{1}{1-k_2\theta_{2}}\right)f_2(1)}\nonumber\\
&=\frac{\left(\frac{k_1f_1(k_1)}{k_1f_1(k_1)-(1+\epsilon)f_1(k_1+1)}\right)k_1f_1(k_1)+\left(\frac{k_1f_1(k_1)}{(1+\epsilon)f_2(k_2+1)-k_2f_2(k_2)}\right)k_2f_2(k_2)}{\left(\frac{k_1f_1(k_1)}{k_1f_1(k_1)-(1+\epsilon)f_1(k_1+1)}\right)f_1(1)+\left(\frac{k_1f_1(k_1)}{(1+\epsilon)f_2(k_2+1)-k_2f_2(k_2)}\right)f_2(1)}\nonumber\\
&=\frac{\left(\frac{k_1f_1(k_1)}{\alpha_2}\right)k_1f_1(k_1)+\left(\frac{k_1f_1(k_1)}{\alpha_1}\right)k_2f_2(k_2)}{\left(\frac{k_1f_1(k_1)}{\alpha_2}\right)f_1(1)+\left(\frac{k_1f_1(k_1)}{\alpha_1}\right)f_2(1)}\nonumber\\
&=\frac{\alpha_1k_1f_1(k_1)+\alpha_2 k_2f_2(k_2)}{\alpha_1f_1(1)+\alpha_2f_2(1)}\nonumber\\
&>M,\label{finale}
\end{align}
where (\ref{finale}) comes from Lemma~\ref{lemrou1}. Thus, by \eqref{finale}, we can choose $s$ and $n$ in such a way that $\poa_\epsilon(\LB_{s,n}(k_1,k_2,f_1,f_2))>M$, and this shows the claim when Case 2 of Lemma \ref{lemrou1} holds. 

Now, suppose that Case 1 of Lemma~\ref{lemrou1} holds, and let $k,o,f$ be the parameters considered in that case; as done previously, by Remark \ref{remalem}, we can assume without loss of generality that $o=1$. The load balancing instance we consider here is $\LB_{s,n}(k,k,f,f)$, and the strategy profile $\sg_{s,n}$ and $\sg^*_{s,n}$ are defined as in the previous part of the proof. To evaluate ${\sf SUM}({\bm\sigma}_{s,n})$, observe that all equalities up to (\ref{conteq}) hold. Therefore, by continuing from (\ref{conteq}), we get 
\begin{equation*}
{\sf SUM}({\bm\sigma}_{s,n})=2skf(k)
\end{equation*}
 if $\beta_{\sf W}(\poa_\epsilon,k,o,f)=0$ (since, in such case, we have $k\theta_{1}=k\theta_2=1$) and
\begin{equation*}
{\sf SUM}(\bm\sigma_{s,n})=\frac{1-\left(\frac{kf(k)}{(1+\epsilon)f(k+1)}\right)^{2s}}{1-\frac{kf(k)}{(1+\epsilon)f(k+1)}}kf(k)
\end{equation*}
 if $\beta_{\sf W}(\poa_\epsilon,k,o,f)>0$ (since now we have $k\theta_{1}=k\theta_2<1$). Analogously, to evaluate ${\sf SUM}(\bm\sigma^*_{s,n})$, observe that all equalities up to (\ref{contopt}) hold. Therefore, by continuing from (\ref{contopt}) we get $${\sf SUM}(\bm\sigma^*_{s,n})=2sf(1)+[\leq ]_s(1)$$ if $\beta_{\sf W}(\poa_\epsilon,k,o,f)=0$, and
\begin{equation*}
{\sf SUM}(\bm\sigma^*_{s,n})=\frac{1-\left(\frac{kf(k)}{(1+\epsilon)f(k+1)}\right)^{2s}}{1-\frac{kf(k)}{(1+\epsilon)f(k+1)}}f(1)-f(1)+[<]_s(1)
\end{equation*}
if $\beta_{\sf W}(\poa_\epsilon,k,o,f)>0$. Thus, again, we get $$
\sup_{s,n\in \NN}\poa_\epsilon(\LB_{s,n}(k,k,f,f))\geq \lim_{s\rightarrow \infty}\lim_{n\rightarrow \infty}\frac{{\sf SUM}(\bm\sigma_{s,n})}{{\sf SUM}(\bm\sigma^*_{s,n})}\geq \frac{kf(k)}{f(1)}>M,$$ where the last inequality follows from Case~1 of Lemma~\ref{lemrou1}. 

By exploiting the above inequality, we can choose $s$ and $n$ in such a way that $\poa_\epsilon(\LB_{s,n}(k,k,f,f))>M$, and this shows the claim when Case 1 of Lemma~\ref{lemrou1} holds. We conclude that the approximate price of anarchy does not improve when restricting to symmetric load balancing games, if all non-constant latency functions are unbounded. 

If there is some non-constant latency function in $\mathcal{C}(\mathcal{G})$ that is not unbounded, we can consider a general (asymmetric) load balancing game based on the load balancing graph $\LB_{s,n}(k_1,k_2,f_1,f_2)$ defined above with $n=1$ (i.e., the load balancing graph is a directed path), but where each player can select among her first and second strategy only, and based on the parameters $k_1,k_2,f_1,f_2$ or $k,f$ derived according to Lemma \ref{lemrou1}. Let $\sg_{s,1}$ and $\sg_{s,1}^*$ be the strategy profiles defined above (with $n=1$). By using a similar proof as in Lemma~\ref{equi_claim}, we have that $\sg_{s,1}$ is an $\epsilon$-approximate pure Nash equilibrium. Indeed, the unique deviation that must be analysed is when each player located at level $i$ in $\sg_{s,1}$ deviates in favour of a resource at level $i+1$ (case (i) in the proof of the lemma), and showing that such deviation does not give any benefit to that player does not require the unboundedness of the latency functions. Finally, we can reuse the last part of the above proof to show that, for sufficiently large $s$, $\poa_\epsilon(\LB_{s,1}(k_1,k_2,f_1,f_2))>M$. Thus, by the arbitrariness of $M$, we have $\gamma_{\sf W}(\poa_{\epsilon},{\sf W}(\mathcal{C}))=\poa_{\epsilon}({\sf W}(\mathcal{C}))=\poa_{\epsilon}({\sf WLB}(\mathcal{C}))$ which shows the claim.\qed
\end{proof}
Now, we prove that no improvements are possible for approximate one-round walks when restricting to load balancing games.

\begin{theorem}\label{thm2}
Let $\mathcal{C}$ be a class of latency functions that is closed under ordinate and abscissa scaling. Then, $\gamma_{\sf W}({\sf CR}^s_{\epsilon},{\sf W}(\mathcal{C}))={\sf CR}^s_{\epsilon}({\sf W}(\mathcal{C}))={\sf CR}^s_{\epsilon}({\sf WLB}(\mathcal{C}))$. If all functions in $\mathcal{C}$ are also semi-convex, we have that $\gamma_{\sf W}({\sf CR}^c_{\epsilon},{\sf W}(\mathcal{C}))={\sf CR}^c_{\epsilon}({\sf W}(\mathcal{C}))={\sf CR}^c_{\epsilon}({\sf WLB}(\mathcal{C})).$
\end{theorem}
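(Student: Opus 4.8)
The plan is to run, for the competitive ratio, exactly the three-step scheme that proves Theorem~\ref{thm1} for the price of anarchy, replacing the equilibrium certificate by a one-round-walk certificate. Two of the three inequalities are immediate: since ${\sf WLB}(\mathcal C)\subseteq{\sf W}(\mathcal C)$ we get ${\sf CR}^s_\epsilon({\sf WLB}(\mathcal C))\le{\sf CR}^s_\epsilon({\sf W}(\mathcal C))$, and Lemma~\ref{lemupp} (applicable to ${\sf EM}={\sf CR}^s_\epsilon$ with no extra hypothesis, and to ${\sf EM}={\sf CR}^c_\epsilon$ once all functions of $\mathcal C$ are semi-convex) gives ${\sf CR}^s_\epsilon({\sf W}(\mathcal C))\le\gamma_{\sf W}({\sf CR}^s_\epsilon,{\sf W}(\mathcal C))$. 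Everything therefore reduces to the reverse bound $\gamma_{\sf W}({\sf CR}^s_\epsilon,{\sf W}(\mathcal C))\le{\sf CR}^s_\epsilon({\sf WLB}(\mathcal C))$, closing the chain to the claimed equalities: for every $M<\gamma_{\sf W}({\sf CR}^s_\epsilon,{\sf W}(\mathcal C))$ I must produce a weighted load balancing game whose selfish competitive ratio exceeds $M$.

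For this I would feed the parameters returned by Lemma~\ref{lemrou1} for ${\sf EM}={\sf CR}^s_\epsilon$ --- a single triple $k,o,f$ in Case~1, two triples $k_1,k_2,o_1,o_2,f_1,f_2$ in Case~2, all normalised to $o_j=1$ through Remark~\ref{remalem} --- into the load balancing graph $\LB_{s,n}(k_1,k_2,f_1,f_2)$ of Theorem~\ref{thm1}. The social values of the reference profiles $\sg_{s,n}$ (all first strategies) and $\sg^*_{s,n}$ (all second strategies) are computed verbatim as in \eqref{equeq} and \eqref{opteq}, so that, letting $s,n\to\infty$, their ratio tends to $\frac{\alpha_1 k_1 f_1(k_1)+\alpha_2 k_2 f_2(k_2)}{\alpha_1 f_1(1)+\alpha_2 f_2(1)}>M$ in Case~2 and to at least $\frac{kf(k)}{f(1)}>M$ in Case~1. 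The only genuinely new ingredient is the one-round-walk analogue of Lemma~\ref{equi_claim}: I have to specify an entrance order of the players and prove that, under it, the greedy selfish choice of every player is her first strategy up to the factor $1+\epsilon$, i.e.\ $\sg_{s,n}\in{\sf ORW}^s_\epsilon(\LB_{s,n})$.

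This reachability step is where the competitive-ratio analysis departs from the equilibrium one, and is the step I expect to be the hardest. In a one-round walk the $i$-th player evaluates her two options at the \emph{partial} congestions present upon her arrival, not at the final congestions of $\sg_{s,n}$; consequently the quantity that a level-$i$ resource contributes to the greedy inequalities is the accumulated cost paid by the players stacking on it as its congestion grows to $k_j$, namely $\int_0^{k_j} f_j(t)\,\text{dt}$, and this is exactly the ``staying'' term of $\beta_{\sf W}({\sf CR}^s_\epsilon,k,o,f)$ in \eqref{def_alpha}, in contrast with the term $kf(k)$ of $\beta_{\sf W}(\poa_\epsilon,\cdot)$. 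The delicate point is thus to choose the order and the congestion schedule so that each arriving player faces a deviation towards an already-loaded resource while her own resource is still being built up, and so that the resulting per-player inequalities are implied, uniformly for all large $n$, by the single sign condition $\beta_{\sf W}({\sf CR}^s_\epsilon,\cdot)\ge 0$ (Case~1) or by the two sign conditions defining $\alpha_1,\alpha_2$ (Case~2). The subtlety is that a naive stacking order makes the \emph{last} player on each resource see full congestion and hence reproduces the stronger final-congestion inequality, yielding only the price-of-anarchy bound; getting the integral term, rather than the final-congestion term, to govern the certificate is the crux of the construction. As in the last part of Theorem~\ref{thm1}, the asymmetric variant with $n=1$ (a directed path) would cover a class $\mathcal C$ containing bounded non-constant functions without any appeal to unboundedness.

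Finally, the cooperative statement requires only local modifications. Each entering player now (approximately) minimises the marginal increase ${\sf SUM}(\sg^{i})-{\sf SUM}(\sg^{i-1})$, and adding a weight-$w$ player to a resource at congestion $c$ raises the social function by $(c+w)f(c+w)-cf(c)$; semi-convexity of $xf(x)$ makes this increment non-decreasing in $c$, which is exactly the hypothesis under which Lemma~\ref{lemupp} and Lemma~\ref{lemrou1} hold for ${\sf EM}={\sf CR}^c_\epsilon$, and which guarantees that, along the chosen entrance order, the first-strategy choice stays within a factor $1+\epsilon$ of the alternative. The social-cost computation and the double limit $s,n\to\infty$ are then identical to the selfish case, yielding $\gamma_{\sf W}({\sf CR}^c_\epsilon,{\sf W}(\mathcal C))\le{\sf CR}^c_\epsilon({\sf WLB}(\mathcal C))$ and hence the remaining chain of equalities.
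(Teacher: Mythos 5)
Your skeleton matches the paper's: the containment ${\sf WLB}(\mathcal C)\subseteq{\sf W}(\mathcal C)$ and Lemma~\ref{lemupp} dispose of two inequalities, and everything reduces to building, from the Lemma~\ref{lemrou1} parameters, load balancing instances whose competitive ratio exceeds any $M<\gamma_{\sf W}({\sf CR}^s_\epsilon,{\sf W}(\mathcal C))$. But there is a genuine gap precisely at the step you yourself single out as the crux, and it is not only left unresolved — the surrounding claim is wrong. You propose to reuse the graph $\LB_{s,n}(k_1,k_2,f_1,f_2)$ of Theorem~\ref{thm1} unchanged and to compute the social values ``verbatim as in \eqref{equeq} and \eqref{opteq}''. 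With the Theorem~\ref{thm1} latencies the per-level geometric ratio is $k_i\theta_i=\frac{k_if_i(k_i)}{(1+\epsilon)f_i(k_i+1)}$, so the limit of the ratio telescopes to coefficients built from $\beta_{\sf W}(\poa_\epsilon,\cdot)$, not from $\alpha_1=\beta_{\sf W}({\sf CR}^s_\epsilon,k_2,1,f_2)$ and $\alpha_2=-\beta_{\sf W}({\sf CR}^s_\epsilon,k_1,1,f_1)$; moreover the convergence in \eqref{equeq}--\eqref{opteq} needs $k_2\theta_2<1$, which is \emph{not} implied by $\beta_{\sf W}({\sf CR}^s_\epsilon,k_2,1,f_2)>0$, since $\int_0^{k}f(t)\,\text{dt}\leq kf(k)$. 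So the verbatim computation either fails or certifies only the price-of-anarchy bound, exactly the failure mode you anticipate.

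The missing idea is that the \emph{instance itself} must change, not just the entrance order. The paper labels the $n$ children of each node with positions $h\in[n]$ and rescales the latency of the $h$-th child by $\theta_{i,j}(h)=\frac{f_i(hk_i/n)}{(1+\epsilon)f_j(k_j+1)}$ times the parent's accumulated factor $A_u$ — the numerator is the latency at the \emph{partial} congestion $hk_i/n$ seen by the $h$-th arriving player, not at the final congestion $k_i$. Under the level-then-position order this makes every player's greedy inequality exactly tight, $cost_z(\sg^u_{s,n})=(1+\epsilon)\,cost_z(\sg^v_{s,n})$, and the per-level growth factor becomes the Riemann sum $\frac{k_1}{n}\sum_{h=1}^n\theta_1(h)\rightarrow\xi_1=\frac{\int_0^{k_1}f_1(t)\,\text{dt}}{(1+\epsilon)f_1(k_1+1)}$, which is precisely what makes the geometric series telescope to the ${\sf CR}^s$ coefficients, with $\xi_1>1>\xi_2\geq 0$ guaranteed by Case~2 of Lemma~\ref{lemrou1}. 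The cooperative case needs the analogous position-dependent rescaling with the marginal increase $(k_j+1)f_j(k_j+1)-k_jf_j(k_j)$ in the denominator (taken with $n=1$ in the paper). Without this modification the construction does not reach the claimed value of $\gamma_{\sf W}({\sf CR}^s_\epsilon,{\sf W}(\mathcal C))$.
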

\begin{proof}
Fix $M<\gamma_{\sf W}({\sf EM},{\sf W}(\mathcal{C}))$, where ${\sf EM}\in \{{\sf CR}^s_{\epsilon},{\sf CR}^c_{\epsilon}\}$ is the considered efficiency metric. As in Theorem \ref{thm1}, by Lemma \ref{lemupp}, it is sufficient showing that there exists a load balancing game $\LB\in {\sf WLB}(\mathcal{C})$ such its competitive ratio is at least $M$. 

We start with the case of selfish players. Let $k_1,k_2,f_1,f_2,\alpha_1,\alpha_2$ (resp. $k,o,f$) be the parameters considered in Case 2 (resp. Case 1) of  Lemma~\ref{lemrou1}, but related to ${\EM}={\sf CR}_{\epsilon}^s$. For simplicity, in the following we set $k_1=k_2=k$, $o_1=o_2=o$, and $f_1=f_2=f$, if Case 1 of Lemma \ref{lemrou1} holds. By Remark \ref{remalem}, we can assume without loss of generality that $o_1=o_2=1$.  

As in Theorem \ref{thm1}, we resort to a load-graph representation. In particular, we extend the load balancing graph $\LB_{s,n}(k_1,k_2,f_1,f_2)$ defined in the proof of Theorem \ref{thm1} as follows. Denote as $i(v)$ the level of resource $v$. For each node $u$ in the load balancing graph, consider an arbitrary enumeration of all the $n$ outgoing edges of $u$. Since each node has a unique incoming edge, we denote by $h(v)\in [n]$ the label/position associated to the unique edge entering $v$ in the given ordering, so that, for any node $u$, all the $n$ edges of type $(u,v)$ are associated to a label $h(v)\in [n]$. 

For $i,j\in [2]$ and $h\in [n]$, define $$\theta_{i,j}(h):=\frac{f_i\left(\frac{h k_i}{n}\right)}{(1+\epsilon)f_j(k_j+1)}$$ and $\theta_{i}(h):=\theta_{i,i}(h)$. Resource $v$ has latency function
$$g_v(x):=\begin{cases}
f_1(x) & \text{ if }i(v)=1,\\
\underbrace{\theta_1(h(v)) A_u}_{A_v} f_1\left(\left(\frac{n}{k_1}\right)^{i(v)-1}x\right)&\text{ if }i(v)\in [s]_2,\\
\underbrace{\theta_{1,2}(h(v)) A_u}_{A_v}  f_2\left(\left(\frac{n}{k_1}\right)^{s}x\right) & \text{ if }i(v)=s+1,\\
\underbrace{\theta_2(h(v)) A_u}_{A_v}  f_2\left(\left(\frac{n}{k_1}\right)^{s}\left(\frac{n}{k_2}\right)^{i(v)-s-1}x\right)&  \text{ if }i(v)\in [2s]_{s+2},
\end{cases}
$$
where $(u,v)$ denotes the unique incoming edge of $v$ and $A_v$ is recursively defined on the basis of $A_u$ by setting $A_v=1$ for $i(v)=1$, i.e., for $v$ being the root of the tree. The weights of all players are defined as in Theorem~\ref{thm1}. 

Consider the online process $\bm \tau_{s,n}$ in which players enter the game in non-increasing order of level (with respect to their first strategy) and, within the same level, players are processed in non-decreasing order of position; equivalently, if two players $i_u$ and $i_v$ have, as their first strategies, some resources $u$ and $v$ respectively, such that either $i(u)>i(v)$, or $i(u)=i(v)\wedge h(u)<h(v)$, then player $i_u$ is processed before player $i_v$ in the online process $\bm \tau_{s,n}$. Let ${\bm\sigma}_{s,n}$ be strategy profile obtained at the end of the process, i.e., the strategy profile in which each player selects her first strategy. We have the following lemma:
\begin{lemma}\label{oneround_claim}
The online process $\bm \tau_{s,n}$ is an $\epsilon$-approximate one-round walk.
\end{lemma}

The above lemma can be shown as follows. Let $z$ be an arbitrary player whose first strategy is at level $i\in [2s-1]$; let $u,v$ be the first and second strategy of $z$, respectively, and let $h:=h(v)$ be the position/label associated to resource $v$. By construction of the online process $\bm \tau_{s,n}$, we have that, when $z$ enters the game, there are $h-1$ players of weight $w_i$ already assigned to resource $u$, and $n$ players of weight $w_{i+1}$ assigned to resource $v$. Let $\sg^u_{s,n}$ be the partial strategy profile obtained when $z$ is processed according to $\bm\tau_{s,n}$ (i.e., only players preceding $z$ have been already assigned), and $z$ is assigned to her first strategy $u$; let $\sg^{v}_{s,n}$ be the partial strategy profile in which $z$, instead, is assigned to her second strategy $v$. As our choice of player $z$ has been arbitrary, it is sufficient proving that $cost_z(\sg^u_{s,n})\leq (1+\epsilon)cost_z(\sg^v_{s,n})$ to show that $\bm\tau_{s,n}$ is an $\epsilon$-approximate one-round walk. By using the recursive definitions of the latency functions $g_i$ and $g_{i+1}$ (involving the quantities $A_u$ and $A_v$, respectively), and by using a similar approach as in case (i) of Lemma \ref{equi_claim}, one can easily show that $cost_z(\sg^u_{s,n})=g_u(h\cdot w_i)=(1+\epsilon)g_v(n\cdot w_{i+1}+w_i)=(1+\epsilon)cost_z(\sg^v_{s,n})$, thus showing the claim of Lemma \ref{oneround_claim}. 

Now, let ${\bm \sigma}^*_{s,n}$ be the strategy profile in which all players select their second strategy. In the remainder of the proof, with the aim of estimating the competitive ratio of $\LB_{s,n}(k_1,k_2,f_1,f_2)$, we compare the social value of the strategy profile $\sg_{s,n}$ resulting from the $\epsilon$-approximate one-round walk $\bm \tau_{s,n}$ with that of strategy profile $\bm\sigma^*_{s,n}$. For $i,j\in [2]$, let $\xi_{i,j}$ denote quantity $\frac{\int_{t=0}^k f_i(t)\text{dt}}{(1+\epsilon)f_j(k_j+1)}$, let $\xi_i:=\xi_{i,i}$, and let ${\sf SUM}_i({\bm\sigma_{s,n}})$ denote the weighted total latency in $\sg_{s,n}$ of all the resources at level $i\in [2s]$. We have that 
\begin{align}
{\sf SUM}_i({\bm\sigma_{s,n}})
&=\sum_{v\text{ in level }i}k_v(\sg_{s,n})g_{v}\left(k_v(\bm\sigma_{s,n})\right)\nonumber\\
&=\sum_{(h_1,h_2,\ldots, h_{i-1})\in [n]^{i-1}}n\cdot \overbrace{\left(\frac{k_1}{n}\right)^i}^{w_i}\left(\prod_{t=1}^{i-1}\theta_1(h_t)\right)f_1(k_1)\nonumber\\
&=\frac{k_1^{i}}{n^{i-1}}\left(\sum_{h=1}^n\theta_1(h)\right)^{i-1}f_1(k_1)\nonumber\\
&=\left(\sum_{h=1}^n\theta_1(h)\frac{k_1}{n}\right)^{i-1}k_1f_1(k_1)\nonumber\\
&=\left(\sum_{h=1}^n\frac{f_i\left(\frac{h k_1}{n}\right)}{(1+\epsilon)f_j(k_1+1)}\left(\frac{k_1}{n}\right)\right)^{i-1}k_1f_1(k_1)\nonumber\\
&\overrightarrow{n\rightarrow \infty} \left(\frac{\int_{t=0}^k f_1(t)\text{dt}}{(1+\epsilon)f_1(k_1+1)}\right)^{i-1}k_1f_1(k_1)\nonumber\\
&=\xi_1^{i-1}k_1f_1(k_1)\label{form1}
\end{align}
if $i\in [s]$, and (by using similar arguments)
\begin{align}
&{\sf SUM}_i({\bm\sigma_{s,n}})\nonumber\\
&=\left(\sum_{h=1}^n\theta_1(h)\frac{k_1}{n}\right)^{s-1}\left(\sum_{h=1}^n\theta_{1,2}(h)\frac{k_1}{n}\right)\left(\sum_{h=1}^n\theta_{2}(h)\frac{k_2}{n}\right)^{i-1-s}k_2f_2(k_2)\nonumber\\
&\overrightarrow{n\rightarrow \infty}\ \ \ \xi_1^{s-1}\cdot \xi_{1,2}\cdot \xi_2^{i-1-s}k_2f_2(k_2),\label{form2}
\end{align}
if $i\in [2s]_{s+1}$. 
Now, for any fixed $s\geq 2$, we can compute $\lim_{n\rightarrow\infty}{\sf SUM}({\bm\sigma_{s,n}})$. If Case 2 of Lemma \ref{lemrou1} holds, we have that $\xi_1>0$ and $\xi_2<0$. Thus, by using similar arguments as in Theorem \ref{thm1} (in particular, as in equality \eqref{equeq}), and by using \eqref{form1} and \eqref{form2}, we get 
\begin{align*}
\lim_{n\rightarrow\infty}{\sf SUM}({\bm\sigma_{s,n}})
&=\lim_{n\rightarrow\infty}\sum_{i=1}^{2s}{\sf SUM}_i({\bm\sigma_{s,n}})\\
&=\sum_{i=0}^{s-1}\xi_1^{i}k_1f_1(k_1)+\xi_1^{s-1}\cdot \xi_{1,2}\sum_{i=0}^{s-1}\xi_2^{i}k_2f_2(k_2)\\
&=\left(\frac{\xi_1^s-1}{\xi_1-1}\right)k_1f_1(k_1)+\xi_1^{s-1}\xi_{1,2}\left(\frac{1-\xi_2^s}{1-\xi_2}\right)k_2f_2(k_2);
\end{align*}
analogously, if Case 1 of Lemma \ref{lemrou1} holds, we have that $\lim_{n\rightarrow\infty}{\sf SUM}({\bm\sigma_{s,n}})=2s kf(k)$ if $\beta_{\sf W}({\sf CR}_\epsilon^s,k,o,f)=0$ (as $\xi_1=\xi_2=1$), and $\lim_{n\rightarrow\infty}{\sf SUM}({\bm\sigma_{s,n}})=\frac{1-\xi_1^s}{1-\xi_1}kf(k)$ if $\beta_{\sf W}({\sf CR}_\epsilon^s,k,o,f)>0$ (as $\xi_1=\xi_2<1$). 

By using analogous steps as in (\ref{form1}) and (\ref{form2}), one can show that the weighted total latency ${\sf SUM}_i(\sg_{s,n}^*)$ in $\sg^*_{s,n}$ of resources at level $i\in [2s-1]_2$ can be obtained by replacing each $k_jf_j(k_j)$ with $f_j(1)$ in (\ref{form1}) and (\ref{form2}) and, analogously to Theorem~\ref{thm1}, one can also compute ${\sf SUM}(\sg_{s,n}^*)$, that is equal to $\sum_{i=2}^{2s-1}{\sf SUM}_i(\sg_{s,n}^*)+[<]_s\left(\sum_{i=2}^{2s-1}{\sf SUM}_i(\sg_{s,n}^*)\right)$ (as in Theorem~\ref{thm1}, the contribution of resources at level $1$ and $2s$ to the social cost is not significant). At this point, by using the same proof arguments of Theorem~\ref{thm1} (in particular, as in \eqref{finale}), we get  
\begin{align*}
\lim_{s\rightarrow \infty}\lim_{n\rightarrow \infty}{\sf CR}_\epsilon^s(\LB_{s,n}(k_1,k_2,f_1,f_2))
&\geq \lim_{s\rightarrow \infty}\lim_{n\rightarrow \infty}\frac{{\sf SUM}({\bm\sigma_{s,n}})}{{\sf SUM}(\sg_{s,n}^*)}\\
&\geq \frac{\alpha_1k_1f_1(k_1)+\alpha_2k_2f_2(k_2)}{\alpha_1f_1(1)+\alpha_2f_2(1)}\\
&>M,
\end{align*}
where the last inequality comes from Lemma \ref{lemrou1}. Thus, the claim holds for the case of selfish players. 

For cooperative players, let $k_1,k_2,o_1,o_2,f_1,f_2,\alpha_1,\alpha_2$ as in Lemma~\ref{lemrou1} (remove index $j\in [2]$ and set $\alpha=1$ if Case 1  of Lemma~\ref{lemrou1} is verified). By Remark~\ref{remalem}, we assume without loss of generality that $o_1=o_2=1$. Consider a load balancing graph $\LB_{s,n}(k_1,k_2,f_1,f_2)$ as that defined above, but with $n=1$ and $\theta_{i,j}(1):=\frac{f_i(k_i)}{(1+\epsilon)((k_j+1)f_j(k_j+1)-k_jf_j(k_j))}$. Let $\sg_{s,1}$, $\bm \tau_{s,1}$, and $\sg^*_{s,1}$ be defined as in the case of selfish players (with $n=1$). Similarly as in the case of selfish players, one can show that $\bm \tau_{s,1}$ is an $\epsilon$-approximate one-round walk (involving cooperative players) that generates strategy profile $\sg_{s,1}$, and we can show again that 
\begin{align*}
\lim_{s\rightarrow \infty}\lim_{n\rightarrow \infty}{\sf CR}_\epsilon^c(\LB_{s,n}(k_1,k_2,f_1,f_2))
&\geq \lim_{s\rightarrow \infty}\lim_{n\rightarrow \infty}\frac{{\sf SUM}({\bm\sigma_{s,n}})}{{\sf SUM}(\sg_{s,n}^*)}\\
&\geq \frac{\alpha_1k_1f_1(k_1)+\alpha_2k_2f_2(k_2)}{\alpha_1f_1(1)+\alpha_2f_2(1)}\\
&>M;
\end{align*}
this shows that ${\sf CR}_\epsilon^c(\LB_{s,1}(k_1,k_2,f_1,f_2))>M$ for a sufficiently large $s$, and the claim follows. \qed
\end{proof}
\begin{remark}\label{remalowweig}
Fix a metric ${\sf EM}\in\{{\sf PoA}_\epsilon,{\sf CR}^s_{\epsilon},{\sf CR}^c_{\epsilon}\}$, a class of latency functions $\mathcal{C}$, and let $(k_1,k_2,o_1,o_2,f_1,f_2)$ (resp. $(k,o,f)$) be a tuple such that values $\alpha_1,\alpha_2$ considered in Case~2 of Lemma~\ref{lemrou1} are positive (resp. such that the inequality of Case 1 corresponding to the considered metric {\sf EM} is satisfied, i.e., $\beta_{\sf W}({\sf EM},k,o,f)\geq 0$). By Remark \ref{remalem}, we assume without loss of generality that $o_1=o_2=1$ (resp. $o=1$). By inspecting the proofs of Theorem~\ref{thm1} and \ref{thm2}, we observe that the set of load balancing instances of type $\LB_{s,n}(k_1,k_2,f_1,f_2)$ (resp. $\LB_{s,n}(k,k,f,f)$) guarantees a performance of at least $\frac{\alpha_1k_1 f_1(k_1)+\alpha_2 k_2 f_2(k_2)}{\alpha_1f(1)+\alpha_2f(2)}$ (resp. $\frac{k f(k)}{f(1)}$) under the considered metric ${\sf EM}$. Thus, in the case we are not able to quantify the exact value of $\gamma_{\sf W}({\sf EM}, {\sf W}(\mathcal{C}))$, we can still find a tuple $(k_1,k_2,f_1,f_2)$ (resp. $(k,f)$) that leads to good (and possibly tight) lower bounds; furthermore, the tightness can be shown by resorting to the characterization provided in Remark \ref{rema_lemrou}.
\end{remark}

\subsection{Application to Polynomial Latency Functions}\label{wplf}
Consider the class $\mathcal{P}(d)$ of polynomials with non-negative coefficients and maximum degree $d$.  By Lemma~\ref{lemupp}, $\gamma_{\sf W}({\sf EM},{\sf W}(\mathcal{P}(d)))$ is an upper bound on the efficiency metrics ${\sf EM}\in\{{\sf PoA}_\epsilon,{\sf CR}^s_{\epsilon},{\sf CR}^c_{\epsilon}\}$ for weighted congestion games and load balancing games with polynomial latency functions of maximum degree $d$. 

By exploiting the definition of $\gamma_{\sf W}({\sf EM},{\sf W}(\mathcal{P}(d)))$ given at the beginning of this section, and by applying to these definitions similar arguments as those used in \cite{ADGMS11,CMS12} we can compute the exact value of $\gamma_{\sf W}({\sf EM},{\sf W}(\mathcal{P}(d)))$ for any ${\sf EM}\in\{{\sf PoA}_\epsilon,{\sf CR}^s_{\epsilon},{\sf CR}^c_{\epsilon}\}$\footnote{In \cite{ADGMS11,CMS12}, the upper bounds on the performance of congestion games have been represented in terms of smoothness inequalities (see Section \ref{frame} of the appendix), but can be easily rewritten in terms of the upper bound $\gamma_{\sf W}({\sf EM},{\sf W}(\mathcal{P}(d)))$. Thus, since the exact quantification of $\gamma_{\sf W}({\sf EM},{\sf W}(\mathcal{P}(d)))$ relies on results and/or proof arguments already appeared, we omit the mathematical analysis and we give the final numerical values. Furthermore, we point out that the main focus of this subsection is not computing $\gamma_{\sf W}({\sf EM},{\sf W}(\mathcal{P}(d)))$, but applying Theorem \ref{thm1} and \ref{thm2} to provide tight bounds on the performance of load balancing games with polynomial latency functions. }. In particular, we have that $\gamma_{\sf W}({\sf EM},{\sf W}(\mathcal{P}(d)))$ is equal to $\left(\varphi_{\epsilon,d}\right)^{d}$, where $\varphi_{\epsilon,d}$ is defined as the unique solution $k\geq 0$ of equation $-k^{d+1}+(1+\epsilon)(k+1)^d=0$ if ${\sf EM}=\poa_\epsilon$, of equation $-\frac{k^{d+1}}{d+1}+(1+\epsilon)(k+1)^d=0$ if ${\sf EM}={\sf CR}^s_{\epsilon}$, and of equation $-(2+\epsilon)k^{d+1}+(1+\epsilon)(k+1)^{d+1}=0$ if ${\sf EM}={\sf CR}^c_{\epsilon}$.

Some of the above upper bounds have already appeared in works based on the study of congestion games. $\gamma_{\sf W}({\sf PoA}_{\epsilon},{\sf W}(\mathcal{P}(d)))$ has been already evaluated in \cite{ADGMS11,CKS11}. The bounds related to the $\epsilon$-approximate one-walks generated by  selfish players generalize the results obtained in \cite{CMS12}, in which the same bounds have been shown for affine latency functions and $\epsilon=0$, only; for more general polynomial latency functions and $\epsilon=0$, the bounds related to the $\epsilon$-approximate one-walks have been re-obtained and shown in more detail in \cite{Klimm19}, subsequently to the preliminary version of our work. 

Finally, for $\epsilon=0$ and general polynomial latencies, the bounds related to cooperative players can be equivalently derived from the analysis provided in \cite{C08}, in which the greedy algorithm has been applied to minimize the $L_p$ norm in load balancing problems. 

A way to interpret the above tight bounds is provided by Remark~\ref{rema_lemrou}. In particular, one can show that the exact value of $\gamma_{\sf W}({\sf EM},{\sf W}(\mathcal{P}(d)))$ is attained by $\frac{k f(k)}{f(1)}$, where $f$ is the monomial function defined as $f(t)=t^d$ for any $t\geq 0$, and $k$ is the unique value satisfying constraint $\beta_{\sf W}({\sf EM},k,1,f)\geq 0$ at equality.

Since the class of polynomial latency functions $\mathcal{P}(d)$ satisfies the hypothesis of Theorem~\ref{thm1} and \ref{thm2}, as a corollary we have that the values $\gamma_{\sf W}({\sf EM},{\sf W}(\mathcal{P}(d)))$ considered above are tight bounds on the performance of load balancing instances, and even of symmetric load balancing instances if the efficiency metric is the $\epsilon$-price of anarchy. Thus, for polynomial latency functions, the performance does not improve when assuming singleton strategies. Furthermore, our findings can be used to close the gap between upper and lower bounds on the competitive ratio of exact one-round walks generated by selfish players, for congestion games with affine latency functions. Indeed,  \cite{CMS12} showed that $\gamma_{\sf W}({\sf CR}^s_{0},{\sf W}(\mathcal{P}(1)))=2\sqrt{3}+4\approx 7.464$ is an upper bound, and \cite{BFFM09} provided a lower bound of $2+\sqrt{5}\approx 4.236$ (holding even for unweighted congestion games); closing the above gap has been left as an open question. Our results directly imply that the upper bound $\approx 7.464$ provided by \cite{CMS12} is tight, even for load balancing games. 
See Table \ref{figurab1} for some numerical comparisons. 
\begin{figure}[!t]
\begin{center}
\begin{tabular}{|c|c|c|c|}
  \hline
  $d$ & ${\sf PoA}_0({\sf W}(\mathcal{P}(d)))$ \cite{ADGMS11,BGR10}  & ${\sf CR}_0^s({\sf W}(\mathcal{P}(d)))$ & ${\sf CR}_0^c({\sf W}(\mathcal{P}(d)))$ \cite{C08} \\\hline
  1 & 2.618 \cite{AAE05,CFKKM11} & 7.464 \cite{CMS12} & 5.828 \cite{CFKKM11}  \\
  2 & 9.909 &  90.3 & 56.94 \\
  3 & 47.82 & 1,521 & 780.2 \\
  4 & 277 & 32,896 & 13,755 \\
  5 & 1,858 & 868,567 & 296,476\\
  6 & 14,099  & 27,089,557 & 7,553,550 \\
  7 & 118,926  & 974,588,649 & 222,082,591\\
  8 & 1,101,126  & 39,729,739,895 & 7,400,694,480\\
  $\vdots$ &  $\vdots$ &  $\vdots$ &  $\vdots$ \\
  $\infty$ & $(\Theta(d/\log(d)))^{d+1}$  & $(\Theta(d))^{d+1}$ & $(\Theta(d))^{d+1}$\\
  \hline
\end{tabular}

\caption{The price of anarchy and the competitive ratio of exact one-round walks (generated by selfish or cooperative players) in weighted congestion games or load balancing games with polynomial latency functions of maximum degree $d$. Our findings provide tight lower bounds for the competitive ratio under selfish players, holding even for load balancing games.}\label{figurab1}
\end{center}
\end{figure}

\section{Unweighted Load Balancing Games}\label{secunweigh}
In this section, we show that, under mild assumptions on the latency functions, the $\epsilon$-approximate price of anarchy and the competitive ratio of selfish or cooperative $\epsilon$-approximate one-round walks in unweighted congestion games cannot improve even when restricting to load balancing games. 
\subsection{Preliminary Definitions and Technical Lemmas}
Similarly as in Section \ref{secweigh}, we start by providing some definitions and two technical lemmas (Lemmas \ref{lemupp2} and \ref{lemrou2}), which are variants of Lemmas~\ref{lemupp} and \ref{lemrou1}, but applied to unweighted games. We omit the proofs of such lemmas, since they can be obtained by using the same proof arguments  as in Lemmas~\ref{lemupp} and \ref{lemrou1}, with the only difference that the weights are unitary and the congestions are integers. For completeness, we give a sketch of the proofs in the appendix. 

Given $x\geq 1$, $k\in \Z$, $o\in \NN$, and a latency function $f$, let 
\begin{equation}\label{def_alpha_unw}
\beta_{\sf U}({\sf EM},k,o,f):=
\begin{cases}
-kf(k)+(1+\epsilon)of(k+1) & \text{if }{\sf EM}=\poa_\epsilon ,\\
-\sum_{h=1}^kf(h)+(1+\epsilon)of(k+1) & \text{if }{\sf EM}={\sf CR}_\epsilon^s,\\
-kf(k)+(1+\epsilon)((k+1)f(k+1)-kf(k)) & \text{if }{\sf EM}={\sf CR}_\epsilon^c.
\end{cases}
\end{equation}
Given ${\sf EM}\in \{\poa_{\epsilon},{\sf CR}^s_\epsilon, {\sf CR}^c_\epsilon\}$, let 
\begin{equation}\label{def_gamma_prel_unw}
\gamma_{\sf U}({\sf EM},x,k,o,f):=\frac{kf(k)+x\cdot \beta_{\sf U}({\sf EM},k,o,f)}{of(o)};
\end{equation}
furthermore, given a class of unweighted congestion games $\mathcal{G}$, let 
\begin{equation}\label{def_gamma_unw}
\gamma_{\sf U}({\sf EM},\mathcal{G}):=\inf_{x\geq 1}\sup_{f\in\mathcal{C}(\mathcal{G}),k\in \Z, o\in \NN}\gamma_{\sf U}({\sf EM},x,k,o,f).
\end{equation}
\begin{lemma}\label{lemupp2}
Let $\mathcal{G}$ be a class of unweighted congestion games. For any  ${\sf EM}\in \{\poa_{\epsilon},{\sf CR}^s_\epsilon\}$,  we have that ${\sf EM}(\mathcal{G})\leq \gamma_{\sf U}({\sf EM},\mathcal{G})$. This fact holds for ${\sf EM}={\sf CR}^c_\epsilon$ if the latency functions of $\mathcal{C}(\mathcal{G})$ are semi-convex.
\end{lemma}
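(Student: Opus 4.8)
The plan is to establish the bound by the same primal--dual argument sketched for Lemma~\ref{lemupp}, specialised to unit weights and integer congestions. Fix an arbitrary unweighted game $\CG\in\mathcal{G}$, a profile $\bm\sigma$ realising the metric $\EM$ (an $\epsilon$-approximate equilibrium, or the endpoint of a selfish/cooperative $\epsilon$-approximate one-round walk) and a social optimum $\bm\sigma^*$; write $k_e,o_e\in\Z$ for the integer congestions of resource $e$ in $\bm\sigma$ and $\bm\sigma^*$. I would set up a primal linear program in the variables $(\alpha_e)_{e\in E}$ maximising the social cost $\sum_{e}\alpha_e k_e\ell_e(k_e)$ subject to a per-player relaxed optimality constraint and the normalisation $\sum_e\alpha_e o_e\ell_e(o_e)=1$. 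Taking $\alpha_e\equiv 1$ reproduces the true game, so the optimum of this program upper-bounds $\EM(\CG)$. The per-player constraint depends on $\EM$: for $\poa_\epsilon$ it is the $\epsilon$-Nash inequality $\sum_{e\in\sigma_i}\alpha_e\ell_e(k_e)\le(1+\epsilon)\sum_{e\in\sigma_i^*}\alpha_e\ell_e(k_e+1)$ (using $w_i=1$, and monotonicity to replace the post-deviation congestion by $k_e+1$); for the one-round walks it encodes, for the $i$-th arriving player, that her realised personal (resp. marginal) cost is within a factor $1+\epsilon$ of the cost of deviating to $\sigma_i^*$ at her arrival, itself bounded from above, by monotonicity of $\ell_e$, by its value at the final congestion.

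Next I would relax the program by replacing the per-player constraints with their sum. The combinatorial core is that, with unit weights, each resource $e$ occurs in exactly $k_e$ of the strategies $\sigma_i$ and in exactly $o_e$ of the $\sigma_i^*$; hence the summation regroups by resource into a single inequality $\sum_e\alpha_e\,\beta_{\sf U}(\EM,k_e,o_e,\ell_e)\ge 0$, with $\beta_{\sf U}$ as in \eqref{def_alpha_unw}. For $\poa_\epsilon$ this is immediate. For ${\sf CR}^s_\epsilon$ one additionally orders, on each resource, the players by arrival time: the $h$-th arrival pays $\ell_e(h)$, so the telescoped equilibrium cost on $e$ is $\sum_{h=1}^{k_e}\ell_e(h)$, which is exactly the negative part of $\beta_{\sf U}({\sf CR}^s_\epsilon,\cdot)$ and replaces the integral appearing in the weighted $\beta_{\sf W}$. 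For ${\sf CR}^c_\epsilon$ the marginal increase caused by a deviation is taken at the arrival-time congestion; here semi-convexity of $\ell_e$ (convexity of $x\ell_e(x)$) is what lets me bound that marginal by $(k_e+1)\ell_e(k_e+1)-k_e\ell_e(k_e)$, evaluated at the final congestion, which is precisely why the cooperative case requires the semi-convexity hypothesis.

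I would then pass to the dual, whose variables are $x\ge 0$ (for the aggregated constraint) and $\gamma\in\R$ (for the normalisation); its constraints read $\gamma\,o_e\ell_e(o_e)\ge k_e\ell_e(k_e)+x\,\beta_{\sf U}(\EM,k_e,o_e,\ell_e)$ for every $e$, with objective $\min\gamma$. To finish, for any $x\ge 1$ I set $\gamma:=\sup_{f\in\mathcal{C}(\mathcal{G}),\,k\in\Z,\,o\in\NN}\gamma_{\sf U}(\EM,x,k,o,f)$; by the very definition \eqref{def_gamma_prel_unw} this satisfies every dual constraint associated with a resource having $o_e\ge 1$ (the game's congestions are integers, hence lie in the range over which the supremum is taken), and optimising over $x\ge 1$ yields dual value $\gamma_{\sf U}(\EM,\mathcal{G})$ by \eqref{def_gamma_unw}. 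Weak duality then gives $\EM(\CG)\le\gamma_{\sf U}(\EM,\mathcal{G})$, and since $\CG$ was arbitrary the claim follows.

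The step I expect to be delicate is the verification of dual feasibility for resources carrying equilibrium load but unused by the optimum, i.e. $o_e=0$: the normalisation term vanishes and the constraint degenerates to $k_e\ell_e(k_e)+x\,\beta_{\sf U}(\EM,k_e,0,\ell_e)\le 0$, which must be handled separately from the generic supremum over $o\ge 1$, relying on the restriction $x\ge 1$, monotonicity, and (for the cooperative metric) semi-convexity. The only genuine differences from Lemma~\ref{lemupp} are this discretisation of the congestions and the replacement of $\int_0^k\ell_e$ by $\sum_{h=1}^k\ell_e(h)$; every other step transcribes essentially verbatim, which is why the full details are relegated to the appendix.
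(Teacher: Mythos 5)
Your proposal follows the same primal--dual route as the paper's own proof: the same primal LP in the $\alpha_e$'s normalised by $\sum_e\alpha_e o_e\ell_e(o_e)=1$, the same aggregation of the per-player constraints into the single resource-indexed $\beta_{\sf U}$-inequality (with the arrival-ordering/telescoping argument yielding $\sum_{h=1}^{k_e}\ell_e(h)$ for ${\sf CR}^s_\epsilon$ and semi-convexity for ${\sf CR}^c_\epsilon$), and the same passage to the dual followed by weak duality. The two discretisation points you single out --- the $o_e=0$ constraints handled via $x\ge 1$, and the replacement of $\int_0^{k}\ell_e(t)\,\mathrm{d}t$ by $\sum_{h=1}^{k}\ell_e(h)$ --- are precisely the only places where the paper's unweighted argument departs from Lemma~\ref{lemupp}, so the proposal is correct and essentially identical to the paper's proof.
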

\begin{lemma}\label{lemrou2}
Let $\mathcal{G}$ be a class of unweighted congestion games. For each $M<\gamma_{\sf U}({\sf EM},\mathcal{G})$, with ${\sf EM}\in \{\poa_{\epsilon},{\sf CR}^s_\epsilon\}$, one of the following cases holds:
\begin{itemize}
\item[$\bullet$] {\bf Case 1:} there exist a latency function $f\in\mathcal{C}(\mathcal{G})$ and two integers $k,o>0$ such that $
M<\frac{kf(k)}{of(o)}$ and $\beta_{\sf U}({\sf EM},k,o,f)\geq 0$. 
\item[$\bullet$] {\bf Case 2:} there exist two latency functions $f_1,f_2\in \mathcal{C}(\mathcal{G})$ and four integers $k_1>0,k_2\geq 0, o_1>0,o_2>0$ such that $M<\frac{\alpha_1k_1f_1(k_1)+\alpha_2k_2f_2(k_2)}{\alpha_1o_1f_1(o_1)+\alpha_2o_2f_2(o_2)}$, where $\alpha_1:=\beta_{\sf U}({\sf EM},k_2,o_2,f_2)>0$ and $\alpha_2 :=-\beta_{\sf U}({\sf EM},k_1,o_1,f_1)>0$.
\end{itemize}
This fact holds for ${\sf EM}={\sf CR}_\epsilon^c$ if the latency functions of $\mathcal{C}(\mathcal{G})$ are semi-convex. 
\end{lemma}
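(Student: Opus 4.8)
The plan is to mirror the reverse-duality argument sketched for Lemma~\ref{lemrou1}, replacing the continuous quantity $\beta_{\sf W}$ by its discrete counterpart $\beta_{\sf U}$ and letting the congestion parameters range over integers; the underlying linear-programming machinery is insensitive to these two changes. I would read the identity $\gamma_{\sf U}({\sf EM},\mathcal{G})=\inf_{x\geq 1}\sup_{f,k,o}\gamma_{\sf U}({\sf EM},x,k,o,f)$ as the optimal value of a semi-infinite program $\overline{\sf DLP}$ in the variables $(x,\gamma)$, namely $\min\gamma$ subject to $\gamma\cdot of(o)\geq kf(k)+x\,\beta_{\sf U}({\sf EM},k,o,f)$ for every admissible triple $(f,k,o)$ with $f\in\mathcal{C}(\mathcal{G})$, $k\in\Z$ and $o\in\NN$, together with $x\geq 0$. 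For ${\sf EM}={\sf CR}^c_\epsilon$ it is the semi-convexity hypothesis that makes $\gamma_{\sf U}({\sf CR}^c_\epsilon,\mathcal{G})$ coincide with this value, exactly as in Lemma~\ref{lemupp2}; for the remaining two metrics no such assumption is needed. Fixing $M<\gamma_{\sf U}({\sf EM},\mathcal{G})$, the point $\gamma=M$ is infeasible for every $x$, so the optimal value of $\overline{\sf DLP}$ lies strictly above $M$.

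Next I would pass to the dual $\overline{\sf LP}$, the semi-infinite program in the nonnegative variables $(\alpha_\tau)_\tau$, one per admissible triple $\tau=(f,k,o)$, that maximizes $\sum_\tau\alpha_\tau\,k_\tau f_\tau(k_\tau)$ subject to the single sign constraint $\sum_\tau\alpha_\tau\,\beta_{\sf U}({\sf EM},\tau)\geq 0$ and the normalization $\sum_\tau\alpha_\tau\,o_\tau f_\tau(o_\tau)=1$. By LP duality its value equals that of $\overline{\sf DLP}$, hence exceeds $M$. Since $\overline{\sf LP}$ has only two nontrivial constraints, a basic feasible solution of value larger than $M$---obtained, if necessary, after first restricting the index set to a finite subfamily of triples whose value already exceeds $M$---has support of size at most two. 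This Carath\'eodory-type reduction is precisely steps (ii)--(iii) of the sketch of Lemma~\ref{lemrou1} and is where the problem gets cut down to (at most) two triples.

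It then remains to read the two cases off the support of this basic solution, which is step (iv). If the support is a single triple $(f,k,o)$, then $\alpha_\tau>0$ forces $\beta_{\sf U}({\sf EM},k,o,f)\geq 0$, while the normalized objective gives $\frac{kf(k)}{of(o)}>M$: this is Case~1. If the support consists of two triples, then a vertex of a two-constraint program makes both constraints tight, so $\sum_\tau\alpha_\tau\,\beta_{\sf U}({\sf EM},\tau)=0$; as the surviving $\alpha_\tau$ are positive, the two values of $\beta_{\sf U}$ must have opposite signs. Labelling the triples so that $\beta_{\sf U}({\sf EM},k_1,o_1,f_1)<0<\beta_{\sf U}({\sf EM},k_2,o_2,f_2)$ and rescaling the multipliers to $\alpha_1=\beta_{\sf U}({\sf EM},k_2,o_2,f_2)$ and $\alpha_2=-\beta_{\sf U}({\sf EM},k_1,o_1,f_1)$ keeps the sign constraint tight and, upon substitution, turns the normalized objective into $\frac{\alpha_1k_1f_1(k_1)+\alpha_2k_2f_2(k_2)}{\alpha_1o_1f_1(o_1)+\alpha_2o_2f_2(o_2)}>M$: this is Case~2, with the integer ranges $k_1>0$, $k_2\geq 0$ and $o_1,o_2>0$ inherited from the index set of $\overline{\sf LP}$.

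I expect the main obstacle to be the rigorous treatment of the semi-infinite duality in the presence of a discrete index set: the suprema defining $\gamma_{\sf U}$ need not be attained and $k,o$ are integers, so one cannot simply invoke finite-dimensional strong duality. The cleanest remedy is the restriction-to-a-finite-subfamily step used above---legitimate because $M<\gamma_{\sf U}$ is strict---after which the finite vertex argument applies verbatim; alternatively one can reason directly with the convex lower envelope of the affine maps $x\mapsto\gamma_{\sf U}({\sf EM},x,\tau)$ and locate its minimum on $[1,\infty)$. A secondary point is the boundary case $k_2=0$ allowed in Case~2, which is harmless under the convention $\ell_e(0)=\lim_{x\to 0^+}\ell_e(x)$ but must be tracked so that no ratio of the form $c/0$ arises; and, for ${\sf EM}={\sf CR}^c_\epsilon$, one should check that semi-convexity enters only through Lemma~\ref{lemupp2}, guaranteeing that $\overline{\sf LP}$ faithfully represents the metric, exactly as in the weighted setting.
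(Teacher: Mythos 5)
Your overall strategy is the same primal--dual reversal the paper uses: pass from the infimum defining $\gamma_{\sf U}$ to a dual-type program, cut it down to two constraints, dualize back, and read the two cases off the support and the tightness of the sign constraint. Where you genuinely diverge is in \emph{how} the reduction to two triples is achieved. The paper never invokes a Carath\'eodory/vertex argument on a restricted primal; instead (see the full proof of Lemma~\ref{lemrou1}, which Lemma~\ref{lemrou2} mirrors) it splits the index set into $U_\geq$ and $U_<$ according to the sign of $\beta$, observes that the envelopes $\gamma_\geq(x)$ and $\gamma_<(x)$ are respectively non-decreasing and non-increasing in $x$, proves a dedicated continuity lemma (Lemma~\ref{lem_appe}) for $\gamma_\geq$, and locates the crossing point $x^*$ of the two envelopes (with a three-way case analysis $x^*=0$, $x^*\in(0,\infty)$, $x^*=\infty$) to exhibit two explicit tuples whose affine maps $x\mapsto\gamma_{\sf U}({\sf EM},x,\tau)$ cross above $M$ at some common $\overline{x}$; only then is the two-constraint $\overline{\sf DLP}$ written down, with $(\overline{x},\overline{\gamma})$ already known to be optimal. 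Your route buys a cleaner finite-dimensional LP argument at the end, but it shifts all the difficulty into the step you yourself flag as the main obstacle and then only assert: that a \emph{finite} subfamily of triples already forces the dual value above $M$. That assertion is true but is not a consequence of the strictness of $M<\gamma_{\sf U}$ alone; it needs an argument. One that works: the sets $V_\tau=\{x\geq 0:\gamma_{\sf U}({\sf EM},x,\tau)>M\}$ are relatively open and cover $[0,\infty)$; picking one triple $\tau^+$ with $\beta_{\sf U}(\tau^+)>0$ (e.g.\ $k=0$, $o=1$), $V_{\tau^+}$ contains a neighbourhood of $+\infty$, so a compact covering argument on the remaining bounded interval yields the finite subfamily, and the minimum of the resulting finite max of affine functions is attained and exceeds $M$. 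Without some such argument your proof has a hole exactly where the paper does its real work. Two smaller loose ends in your step (iv): when the support has size two and the sign constraint is tight, the two $\beta$-values could both be zero rather than of opposite signs, in which case Case~2 as stated fails and you must fall back to Case~1 via the mediant inequality (at least one of the two individual ratios exceeds $M$, and both triples have $\beta\geq 0$); and in the single-support case you should rule out $k=0$ (which would give ratio $0$) to match the requirement $k>0$ in Case~1.
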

\begin{remark}\label{rema_lemrou_unw}
By exploiting Lemma \ref{lemupp2}, given an arbitrary choice of the input parameters $k,o,f$ (resp. $k_1,k_2,o_1,o_2,f_1,f_2$) such that $\beta_{\sf U}({\sf EM},k,o,f)\geq 0$ (resp. $\beta_{\sf U}({\sf EM},k_2,o_2,f_2)>0$ and $-\beta_{\sf U}({\sf EM},k_1,o_1,f_1)>0$), we have that $\frac{kf(k)}{of(o)}$ (resp. $\frac{\alpha_1k_1f_1(k_1)+\alpha_2k_2f_2(k_2)}{\alpha_1o_1f_1(o_1)+\alpha_2o_2f_2(o_2)}$) is a lower bound on $\gamma_{\sf U}({\sf EM},\mathcal{G})$; furthermore, by definition of $\gamma_{\sf U}({\sf EM},\mathcal{G})$, we have that $\sup_{f\in\mathcal{C}(\mathcal{G}),k\in \Z, o\in \NN}\gamma_{\sf U}({\sf EM},x,k,o,f)$ is an upper bound on $\gamma_{\sf U}({\sf EM},\mathcal{G})$ for any $x\geq 1$. Thus, if such upper bound (for a suitable choice of $x$) is equal to the former lower bound (for a suitable choice of the input parameters), we necessarily have that they are both equal to $\gamma_{\sf U}({\sf EM},\mathcal{G})$.
\end{remark}
\subsection{Main Theorems}
In the following theorem, we prove that, under mild assumptions on the latency functions, no improvements are possible for $\epsilon$-approximate pure Nash equilibria when restricting to load balancing games.
\begin{theorem}\label{thm3}
Let $\mathcal{C}$ be a class of latency functions that is closed under ordinate scaling. Then $\gamma_{\sf U}({\sf PoA}_{\epsilon},{\sf U}(\mathcal{C}))=\poa_{\epsilon}({\sf U}(\mathcal{C}))=\poa_{\epsilon}({\sf ULB}(\mathcal{C}))$.
\end{theorem}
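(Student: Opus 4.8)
The plan is to run the very same scheme as in the proof of Theorem~\ref{thm1}, but with the unweighted machinery of Lemmas~\ref{lemupp2} and~\ref{lemrou2} in place of Lemmas~\ref{lemupp} and~\ref{lemrou1}, and exploiting the fact that in ${\sf ULB}(\mathcal{C})$ every player owns exactly two singleton strategies. I would fix $M<\gamma_{\sf U}(\poa_\epsilon,{\sf U}(\mathcal{C}))$ and, by Lemma~\ref{lemupp2}, reduce the statement to exhibiting an unweighted load balancing game $\LB\in{\sf ULB}(\mathcal{C})$ with $\poa_\epsilon(\LB)>M$: indeed this gives $M<\poa_\epsilon({\sf ULB}(\mathcal{C}))\leq \poa_\epsilon({\sf U}(\mathcal{C}))\leq \gamma_{\sf U}(\poa_\epsilon,{\sf U}(\mathcal{C}))$, and arbitrariness of $M$ yields the three-way equality. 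Lemma~\ref{lemrou2} then splits the argument into Case~1 (a single $f\in\mathcal{C}$ and integers $k,o>0$ with $\beta_{\sf U}(\poa_\epsilon,k,o,f)\geq 0$ and $kf(k)/(of(o))>M$) and Case~2 (two latencies $f_1,f_2$ and four integers realizing the convex-combination ratio).

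The combinatorial core is realizing the prescribed \emph{integer} congestions with unit weights. In the weighted proof this was done through geometric weight scaling together with abscissa scaling (cf.\ Remark~\ref{remalem}); here neither tool is available, so instead I would build a layered directed multigraph on $2s$ levels and let the resource \emph{multiplicities} carry the load. Place $N_i$ copies of the resource on level~$i$, and give every level-$i$ resource the latency $g_i(x)=\theta^{\,i-1}f(x)$ with the single ordinate-scaling factor $\theta:=\frac{f(k)}{(1+\epsilon)f(k+1)}$ (for Case~2 one switches from $f_1$ to $f_2$ at level $s$ and uses the factors $\theta_1,\theta_{1,2},\theta_2$ exactly as in Theorem~\ref{thm1}). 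Closure of $\mathcal{C}$ under ordinate scaling guarantees $g_i\in\mathcal{C}$, and the argument of $f$ is never rescaled, which is precisely why ordinate scaling alone suffices and why the symmetric strengthening is not claimed. Each player is an edge from level $i$ to level $i+1$; imposing first-strategy out-degree $k$ and second-strategy in-degree $o$ at every internal resource forces $kN_i=oN_{i+1}$, hence the recursion $N_{i+1}=(k/o)N_i$, which I satisfy by setting $N_1=o^{2s}$ so that all $N_i=o^{2s-i+1}k^{i-1}$ are integers, and by assigning second strategies round-robin (parallel edges are allowed), giving each level-$(i+1)$ resource exactly $o$ second-strategy players.

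Let $\bm\sigma_s$ and $\bm\sigma^*_s$ be the profiles in which all players play, respectively, their first and their second strategy. Since each player has only two strategies, the equilibrium check (the analogue of Lemma~\ref{equi_claim}) collapses to the single deviation of a level-$i$ player to her level-$(i+1)$ resource, raising its congestion from $k$ to $k+1$; the choice of $\theta$ makes $g_i(k)=(1+\epsilon)g_{i+1}(k+1)$, so $\bm\sigma_s\in{\sf NE}_\epsilon(\LB)$ with the equilibrium inequality tight. Moreover $\beta_{\sf U}(\poa_\epsilon,k,o,f)\geq 0$ is equivalent to $k\theta/o\leq 1$, which keeps the social sums in the controlled (convergent or linear-in-$s$) geometric regime of Theorem~\ref{thm1}. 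Writing ${\sf SUM}(\bm\sigma_s)=\sum_i N_i\,k\,g_i(k)=N_1kf(k)\sum_i(k\theta/o)^{i-1}$ and ${\sf SUM}(\bm\sigma^*_s)=N_1of(o)\sum_i(k\theta/o)^{i-1}+(\text{boundary})$, the common geometric factor cancels, and letting $s\to\infty$ (so the level-$1$ and level-$2s$ boundary terms become negligible, tracked by the $[<]_s(\cdot)$ notation) the ratio tends to a value at least $kf(k)/(of(o))$ in Case~1, and to the Case~2 convex combination $\frac{\alpha_1k_1f_1(k_1)+\alpha_2k_2f_2(k_2)}{\alpha_1o_1f_1(o_1)+\alpha_2o_2f_2(o_2)}$, both exceeding $M$ by Lemma~\ref{lemrou2}.

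The step I expect to be the main obstacle is the first-level/last-level boundary accounting combined with the degree-sequence realization. One must verify that the round-robin (or any degree-prescribed) bipartite assignment between consecutive levels can be chosen consistently across the switch at level $s$ (where the ratio becomes $k_1/o_2$) so that \emph{every} internal resource genuinely attains congestion $k$ in $\bm\sigma_s$ and $o$ in $\bm\sigma^*_s$, while level~$1$ carries no optimum load and level~$2s$ is closed off by the self-loop/truncation convention. Crucially, in the divergent Case~2 regime ($k_1\theta_1/o_1>1$, $k_2\theta_2/o_2<1$) one has to certify that these boundary contributions stay of strictly smaller order than the dominant $(k_1\theta_1/o_1)^{s}$ term, exactly as quantified by the $[<]_s$ estimates already used in the proof of Theorem~\ref{thm1}.
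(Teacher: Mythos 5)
Your proposal is correct and follows essentially the same route as the paper's proof: the same reduction via Lemmas~\ref{lemupp2} and~\ref{lemrou2}, the same $2s$-level multi-partite load balancing graph whose resource multiplicities satisfy $kN_i=oN_{i+1}$ (the paper uses $o_1^{s-i}k_1^{i-1}o_2^{s}$ copies at level $i\in[s]$, differing from your $N_i$ only by an irrelevant global factor), the same ordinate-only rescaled latencies $g_i$ with the tight deviation identity $g_i(k)=(1+\epsilon)g_{i+1}(k+1)$ in the asymmetric two-strategy game, and the same geometric-sum limit as $s\to\infty$ with $[<]_s$ boundary control. The degree-realization issue you flag is handled in the paper by the observation that constant bilateral degrees with matching edge counts are always realizable as a bipartite multigraph, so it poses no actual obstacle.
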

\begin{proof}
Fix $M<\gamma_{\sf U}(\poa_{\epsilon},{\sf U}(\mathcal{C}))$. Similarly as in Theorem \ref{thm1}, we will show that there exists a load balancing instance $\LB\in{\sf ULB}(\mathcal{C})$ such that $\poa_\epsilon(\LB)>M$. By Lemma~\ref{lemupp2}, this will imply that $M<\poa_{\epsilon}(\LB)\leq \poa_{\epsilon}({\sf U}(\mathcal{C}))\leq \gamma_{\sf U}(\poa_{\epsilon},{\sf U}(\mathcal{C}))$, and by the arbitrariness of $M<\gamma_{\sf U}(\poa_{\epsilon},{\sf U}(\mathcal{C}))$, this fact will show the claim. 

Suppose that Case 2 of Lemma~\ref{lemrou2} is verified, and let $k_1,k_2,o_1,o_2,$ $f_1,f_2,\alpha_1,\alpha_2$ be defined as in Lemma~\ref{lemrou2}. As in Theorem \ref{thm1}, we resort to a load balancing graph representation. Consider a load balancing game defined by a multi-partite directed graph $\LB_s(k_1,k_2,o_1,o_2,f_1,f_2)$ as load balancing graph, organized in $2s$ levels, numbered from $1$ to $2s$, and defined as follows. For each $i\in [s]$ (resp. $i\in [2s]_{s+1}$), there are $o_1^{s-i}k_1^{i-1}o_2^{s}$ (resp. $o_2^{2s-i}k_2^{i-s-1}k_1^s$) nodes/resources. Edges can only connect nodes of consecutive levels, except for nodes at level $2s$, each of which has $k_2$ self-loops. The out-degree of each node at level $i\in [s]$ (resp. $i\in [2s]_{s+1}$) is $k_1$ (resp. $k_2$), and the in-degree of each node at level $i\in [s]_2$ (resp. $i\in [2s]_{s+1}$ without considering self-loops) is $o_1$ (resp. $o_2$); observe that this configuration can be realized since the total number of nodes at level $i\in [s-1]$ (resp. $i=s$, resp. $i\in [2s-1]_{s+1}$) multiplied by $k_1$ (resp. $k_1$, resp. $k_2$) is equal to the number of nodes at level $i+1$ multiplied by $o_1$ (resp. $o_2$, resp. $o_2$). See the example depicted in Figure \ref{fig:2}. For $i,j\in [2]$, define $\theta_{i,j}:=\frac{f_i(k_i)}{(1+\epsilon)f_j(k_j+1)}$ and $\theta_i:=\theta_{i,i}$. Each resource at level $i$ has latency function $g_i(x):=\theta_{1}^{i-1}f_1\left(x\right)$  if $i\in [s]$, and $g_i(x):= \theta_{1}^{s-1}\theta_{1,2}\theta_{2}^{i-s-1}f_2\left(x\right)$ otherwise. We assume that each player can only choose among her first and her second strategy, and let $\bm\sigma_s$ and $\bm\sigma^{*}_s$ be the strategy profiles in which all players select their first and second strategy, respectively. Analogously to Theorem~\ref{thm1}, one can show that $\bm\sigma_s$ is an $\epsilon$-approximate pure Nash equilibrium.

In the remainder of the proof, with the aim at estimating the approximate price of anarchy of $\LB_{s}(k_1,k_2,o_1,o_2,f_1,f_2)$, we compare the social value of the $\epsilon$-pure Nash equilibrium $\bm\sigma_{s}$, with that of the strategy profile $\bm\sigma^*_{s}$ in which all players select their second strategy. By exploiting similar arguments as in Theorem \ref{thm1}, we get
\begin{align}
&{\sf SUM}(\bm\sigma_s)\nonumber\\
&=\sum_{i=1}^s\left(o_1^{s-i}k_1^{i-1}o_2^{s}\right)k_1g_i(k_1)+\sum_{i=s+1}^{2s}\left(o_2^{2s-i}k_2^{i-s-1}k_1^s\right)k_2g_i(k_2)\nonumber\\
&=\sum_{i=1}^s\left(o_1^{s-i}k_1^{i-1}o_2^{s}\right)k_1\theta_{1}^{i-1}f_1(k_1)\nonumber\\
&\ \ \ \ \ \ \ +\sum_{i=s+1}^{2s}\left(o_2^{2s-i}k_2^{i-s-1}k_1^s\right)k_2\theta_{1}^{s-1}\theta_{1,2}\theta_{2}^{i-s-1}f_2(k_2)\nonumber\\
&=\sum_{i=1}^so_1^{s-1}o_2^s \left(\frac{k_1\theta_{1}}{o_1}\right)^{i-1}k_1f_1(k_1)\nonumber\\
&\ \ \ \ \ \ \ +\sum_{i=s+1}^{2s}o_1^{s-1}o_2^s\left(\frac{k_1\theta_{1}}{o_1}\right)^{s-1}\frac{k_1\theta_{1,2}}{o_2}\left(\frac{k_2\theta_{2}}{o_2}\right)^{i-s-1}k_2f_2(k_2)\nonumber\\
&=o_1^{s-1}o_2^s\sum_{i=0}^{s-1}\left(\frac{k_1\theta_{1}}{o_1}\right)^{i}k_1f_1(k_1)\nonumber\\
&\ \ \ \ \ \ \ +o_1^{s-1}o_2^s\left(\frac{k_1\theta_{1}}{o_1}\right)^{s-1}\frac{k_1\theta_{1,2}}{o_2}\sum_{i=0}^{s-1}\left(\frac{k_2\theta_{2}}{o_2}\right)^{i}k_2f_2(k_2)\label{conteq2}\\
&=o_1^{s-1}o_2^s\left(\frac{\left(\frac{k_1\theta_{1}}{o_1}\right)^s-1}{\frac{k_1\theta_{1}}{o_1}-1}\right)k_1f_1(k_1)\nonumber\\
&\ \ \ \ \ \ \ +o_1^{s-1}o_2^s\left(\frac{k_1\theta_{1}}{o_1}\right)^{s-1}\frac{k_1\theta_{1,2}}{o_2}\left(\frac{1-\left(\frac{k_2\theta_{2}}{o_2}\right)^s}{1-\frac{k_2\theta_{2}}{o_2}}\right)k_2f_2(k_2),\label{equeq2}
\end{align}
and
\begin{align}
&{\sf SUM}(\bm\sigma^*_s)\nonumber\\
&=\sum_{i=2}^{s}\left(o_1^{s-i}k_1^{i-1}o_2^{s}\right)o_1g_i(o_1)+\sum_{i=s+1}^{2s-1}\left(o_2^{2s-i}k_2^{i-s-1}k_1^s\right)o_2g_i(o_2)\nonumber\\
&\ \ \ \ \ \ \ +\left(k_2^{s-1}k_1^s\right)(k_2+o_e)g_{2s}(k_2+o_2)\nonumber\\
&=\sum_{i=2}^{s}\left(o_1^{s-i}k_1^{i-1}o_2^{s}\right)o_1\theta_{1}^{i-1}f_1(o_1)\nonumber\\
&\ \ \ \ \ \ \ +\sum_{i=s+1}^{2s-1}\left(o_2^{2s-i}k_2^{i-s-1}k_1^s\right)o_2\theta_{1}^{s-1}\theta_{1,2}\theta_{2}^{i-s-1}f_2(o_2)\nonumber\\
&\ \ \ \ \ \ \ +\left(k_2^{s-1}k_1^s\right)(k_2+o_e)\theta_{1}^{s-1}\theta_{1,2}\theta_{2}^{s-1}f_2(k_2+o_2)\label{contopt2}\\
&=o_1^{s-1}o_2^s\sum_{i=1}^{s}\left(\frac{k_1\theta_{1}}{o_1}\right)^{i-1}o_1f_1(o_1)-\overbrace{\left(o_1^{s-1}o_2^{s}\right)o_1f_1(o_1)}^{[<]_s\left(o_1^{s-1}o_2^s\left(\frac{k_1\theta_{1}}{o_1}\right)^{s-1}\right)}\nonumber\\
&\ \ \ \ \ \ \ +o_1^{s-1}o_2^s\sum_{i=s+1}^{2s-1}\left(\frac{k_1\theta_{1}}{o_1}\right)^{s-1}
\frac{k_1\theta_{1,2}}{o_2}\left(\frac{k_2\theta_{2}}{o_2}\right)^{i-s-1}o_2f_2(o_2)\nonumber\\
&\ \ \ \ \ \ \ +\overbrace{o_1^{s-1}o_2^s\left(\frac{k_1\theta_{1}}{o_1}\right)^{s-1}\frac{k_1\theta_{1,2}}{o_2}\left(\frac{k_2\theta_{2}}{o_2}\right)^{s-1}(k_2+o_e)f_2(k_2+o_e)}^{[<]_s\left(o_1^{s-1}o_2^s\left(\frac{k_1\theta_{1}}{o_1}\right)^{s-1}\right)}\nonumber\\
&=o_1^{s-1}o_2^s\sum_{i=0}^{s-1}\left(\frac{k_1\theta_{1}}{o_1}\right)^{i}o_1f_1(o_1)\nonumber\\
&\ \ \ \ \ \ \ +o_1^{s-1}o_2^s\left(\frac{k_1\theta_{1}}{o_1}\right)^{s-1}\frac{k_1\theta_{1,2}}{o_2}\sum_{i=0}^{s-1}\left(\frac{k_2\theta_{2}}{o_2}\right)^{i}o_2f_2(o_2)\nonumber\\
&\ \ \ \ \ \ \ +[<]_s\left(o_1^{s-1}o_2^s\left(\frac{k_1\theta_{1}}{o_1}\right)^{s-1}\right)\label{opteq2}\\
&=o_1^{s-1}o_2^s\left(\frac{\left(\frac{k_1\theta_{1}}{o_1}\right)^s-1}{\frac{k_1\theta_{1}}{o_1}-1}\right)o_1f_1(o_1)\nonumber\\
&\ \ \ \ \ \ \ +o_1^{s-1}o_2^s\left(\frac{k_1\theta_{1}}{o_1}\right)^{s-1}\frac{k_1\theta_{1,2}}{o_2}\left(\frac{1-\left(\frac{k_2\theta_{2}}{o_2}\right)^s}{1-\frac{k_2\theta_{2}}{o_2}}\right)o_2f_2(o_2)\nonumber\\
&\ \ \ \ \ \ \ +[<]_s\left(o_1^{s-1}o_2^s\left(\frac{k_1\theta_{1}}{o_1}\right)^{s-1}\right).\label{optzero1}
\end{align}
By using equalities (\ref{equeq2}) and (\ref{optzero1}), and the fact that $\frac{k_1\theta_{1}}{o_1}>1$ and $\frac{k_2\theta_{2}}{o_2}<1$ (since, by Lemma~\ref{lemrou2}, we have $\alpha_1,\alpha_2>0$) we get 
\begin{align}
&\sup_{s\in \NN}\poa_\epsilon(\LB_s(k_1,k_2,o_1,o_2,f_1,f_2))\nonumber\\
&\geq \lim_{s\rightarrow \infty}\frac{{\sf SUM}(\bm\sigma_s)}{{\sf SUM}(\bm\sigma^*_s)}\nonumber\\
&=\lim_{s\rightarrow \infty}\frac{{\sf SUM}(\bm\sigma_s) \left(o_1^{s-1}o_2^s\left(\frac{k_1\theta_{1}}{o_1}\right)^{s-1}\right)}{{\sf SUM}(\bm\sigma^*_s) \left(o_1^{s-1}o_2^s\left(\frac{k_1\theta_{1}}{o_1}\right)^{s-1}\right)}\nonumber\\
&=\frac{\left(\frac{\frac{k_1\theta_{1}}{o_1}}{\frac{k_1\theta_{1}}{o_1}-1}\right)k_1f_1(k_1)+\frac{k_1\theta_{1,2}}{o_2}
\left(\frac{1}{1-\frac{k_2\theta_{2}}{o_2}}\right)k_2f_2(k_2)}
{\left(\frac{\frac{k_1\theta_{1}}{o_1}}{{\frac{k_1\theta_{1}}{o_1}-1}}\right)o_1f_1(o_1)+\frac{k_1\theta_{1,2}}{o_2}\left(\frac{1}{1-\frac{k_2\theta_{2}}{o_2}}\right)o_2f_2(o_2)}\nonumber\\
&=\frac{\left(\frac{k_1f_1(k_1)}{\alpha_2}\right)k_1f_1(k_1)+\left(\frac{k_1f_1(k_1)}{\alpha_1}\right)k_2f_2(k_2)}{\left(\frac{k_1f_1(k_1)}{\alpha_2}\right)o_1f_1(o_1)+\left(\frac{k_1f_1(k_1)}{\alpha_1}\right)o_2f_2(o_2)}\nonumber\\
&=\frac{\alpha_1k_1f_1(k_1)+\alpha_2 k_2f_2(k_2)}{\alpha_1o_1f_1(o_1)+\alpha_2o_2f_2(o_2)}\nonumber\\
&>M\label{finaleunw}
\end{align}
where (\ref{finaleunw}) comes from Lemma~\ref{lemrou2}. Then, there exists an integer $s$ such that $\poa_\epsilon(\LB_s(k_1,k_2,o_1,o_2,f_1,f_2))>M$, and this shows the claim. 

Finally, suppose that Case 1 of Lemma~\ref{lemrou2} holds and let $k,o,f$ be the parameters defined for that case. Consider the load balancing graph $\LB_s(k,k,o,o,f,f)$. Observe that, analogously to the previous cases, $\bm\sigma_s$ is an $\epsilon$-approximate pure Nash equilibrium. Indeed, it suffices considering the previous proofs with $k_1=k_2=k$. To evaluate ${\sf SUM}(\bm\sigma_s)$, observe that all the equalities up to (\ref{conteq2}) hold. Therefore, by continuing from (\ref{conteq2}) we get $${\sf SUM}(\bm\sigma_s)=o_1^{s-1}o_2^s 2skf(k)$$ if $\beta_{\sf U}(\poa_\epsilon,k,o,f)=0$ (as in such case we get $\frac{k_1\theta_{1}}{o_1}=\frac{k_2\theta_{2}}{o_2}=1$), and
$${\sf SUM}(\bm\sigma_s)=o_1^{s-1}o_2^s\left(\frac{1-\left(\frac{kf(k)}{(1+\epsilon)of(k+1)}\right)^{2s}}{1-\frac{kf(k)}{(1+\epsilon)of(k+1)}}\right)kf(k)$$ if $\beta_{\sf U}(\poa_\epsilon,k,o,f)>0$ (as in such case we get $\frac{k_1\theta_{1}}{o_1}=\frac{k_2\theta_{2}}{o_2}<1$). To evaluate ${\sf SUM}(\bm\sigma^*_s)$, observe that all the equalities up to (\ref{contopt2}) hold. Therefore, by continuing from (\ref{contopt2}), we get $${\sf SUM}(\bm\sigma^*_s)=o_1^{s-1}o_2^s(2sof(o)+[\leq]_s(1))$$  if $\beta_{\sf U}(\poa_\epsilon,k,o,f)=0$, and
$${\sf SUM}(\bm\sigma^*_s)=o_1^{s-1}o_2^s\left(\left(\frac{1-\left(\frac{kf(k)}{(1+\epsilon)of(k+1)}\right)^{2s}}{1-\frac{kf(k)}{(1+\epsilon)of(k+1)}}\right)of(o)-of(o)+[<]_s(1)\right)$$ if $\beta_{\sf U}(\poa_\epsilon,k,o,f)>0.$
Then:
\begin{equation*}
\sup_{s\in \NN}\poa_\epsilon(\LB_s(k,k,o,o,f,f))\geq \lim_{s\rightarrow \infty}\frac{{\sf SUM}(\bm\sigma_s)}{{\sf SUM}(\bm\sigma^*_s)}\geq \frac{kf(k)}{of(o)}>M,
\end{equation*}
where the last inequality follows from Lemma \ref{lemrou2}. Thus, there exists an integer $s$ such that $\poa_\epsilon(\LB_s(k_1,k_2,o_1,o_2,f_1,f_2))>M$, and this shows the claim if the first Case of Lemma \ref{lemrou2} holds. \qed
\end{proof}

\begin{figure}[h]
\centering
\includegraphics[scale=0.65]{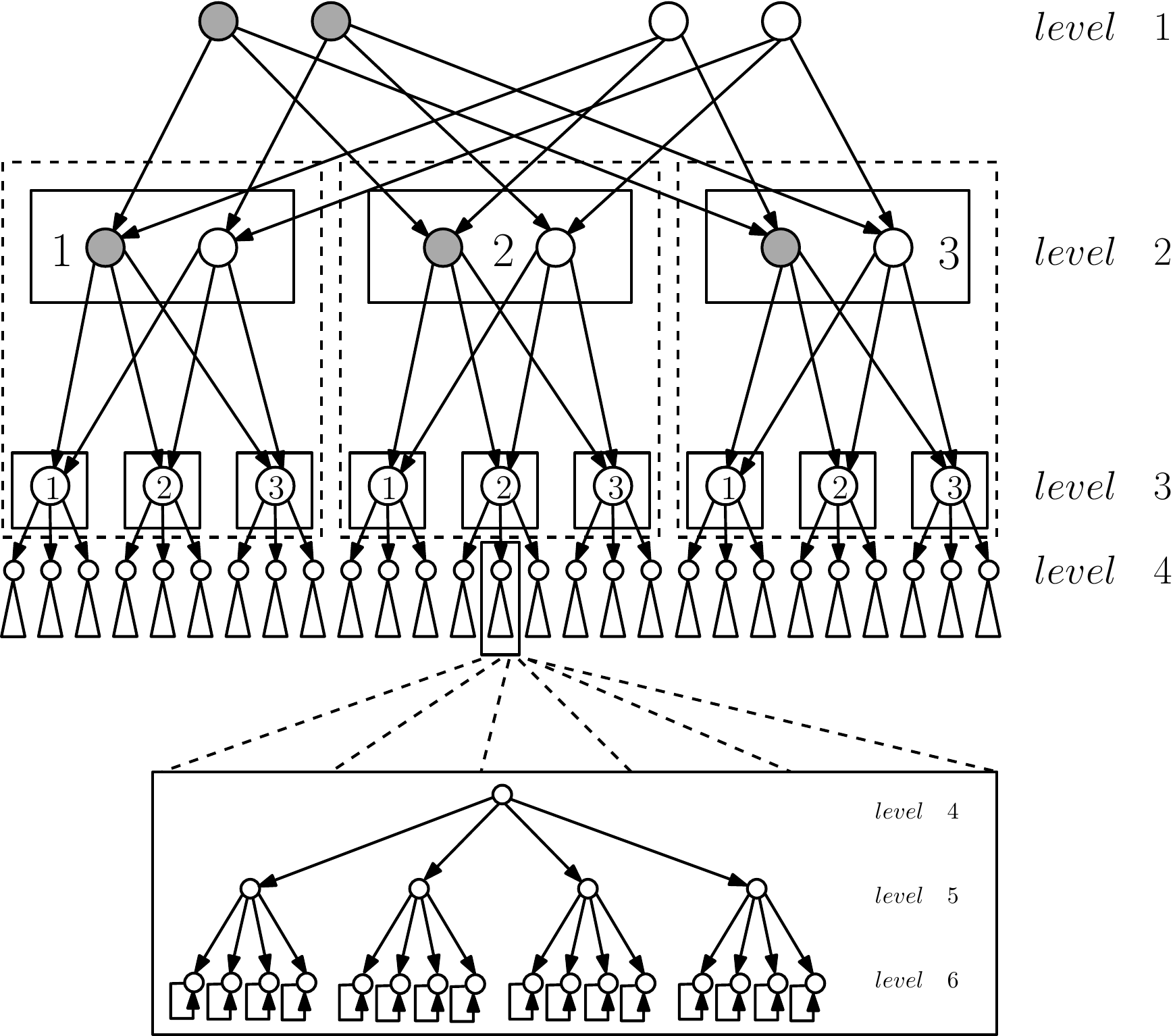}
\caption{The load balancing graph described in the proof of Theorems \ref{thm3} and \ref{thm4}, with $s=3$, $k_1=3$, $o_1=2$, $k_2=4$ and $o_2=1$. We also describe the partitioning and labeling structures used in the proof of Theorem~\ref{thm4}. }\label{fig:2}
\end{figure}

Then, we prove a similar limitation for approximate one-round walks.

\begin{theorem}\label{thm4}
Let $\mathcal{C}$ be a class of latency functions that is closed under ordinate scaling. Then $\gamma_{\sf U}({\sf CR}^s_{\epsilon},{\sf U}(\mathcal{C}))={\sf CR}^s_{\epsilon}({\sf U}(\mathcal{C}))={\sf CR}^s_{\epsilon}({\sf ULB}(\mathcal{C}))$. If the functions of $\mathcal{C}$ are semi-convex, we have that $\gamma_{\sf U}({\sf CR}^c_{\epsilon},{\sf U}(\mathcal{C}))={\sf CR}^c_{\epsilon}({\sf U}(\mathcal{C}))={\sf CR}^c_{\epsilon}({\sf ULB}(\mathcal{C}))$.
\end{theorem}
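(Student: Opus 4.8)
The plan is to mirror the proof of Theorem~\ref{thm2}, transporting its one-round-walk machinery onto the unweighted multipartite construction of Theorem~\ref{thm3}. The upper bound is free: Lemma~\ref{lemupp2} gives ${\sf CR}^s_\epsilon({\sf ULB}(\mathcal{C}))\leq {\sf CR}^s_\epsilon({\sf U}(\mathcal{C}))\leq \gamma_{\sf U}({\sf CR}^s_\epsilon,{\sf U}(\mathcal{C}))$, and the same chain holds for ${\sf CR}^c_\epsilon$ once $\mathcal{C}$ is semi-convex, which is precisely the hypothesis under which Lemma~\ref{lemupp2} applies to ${\sf CR}^c_\epsilon$. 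Hence I would only prove $\gamma_{\sf U}({\sf EM},{\sf U}(\mathcal{C}))\leq {\sf EM}({\sf ULB}(\mathcal{C}))$ for ${\sf EM}\in\{{\sf CR}^s_\epsilon,{\sf CR}^c_\epsilon\}$. Fixing an arbitrary $M<\gamma_{\sf U}({\sf EM},{\sf U}(\mathcal{C}))$, I would invoke Lemma~\ref{lemrou2} to obtain either the single tuple $(k,o,f)$ of Case~1 or the tuple $(k_1,k_2,o_1,o_2,f_1,f_2)$ of Case~2 with $\alpha_1,\alpha_2>0$, and then exhibit an instance in ${\sf ULB}(\mathcal{C})$ whose competitive ratio exceeds $M$; by Lemma~\ref{lemupp2} this forces the claimed equalities.

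For the selfish metric I would take the multipartite graph $\LB_s(k_1,k_2,o_1,o_2,f_1,f_2)$ of Theorem~\ref{thm3} and refine it with the partitioning and labelling drawn in Figure~\ref{fig:2}, the unweighted counterpart of the labels $h(v)$ of Theorem~\ref{thm2}. Concretely, at every level I would label the $k_1$ (resp.\ $k_2$) out-edges of each node by a position $h\in[k_1]$ (resp.\ $[k_2]$) and partition the next level so that the $o_1$ (resp.\ $o_2$) in-edges of each node all carry the same out-label $h(v)$; the degree counts of Theorem~\ref{thm3} make this realizable, since for a fixed $h$ the label-$h$ out-edges leaving a parent-class of size $|C|$ land on exactly $|C|/o_1$ nodes of the next level. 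Setting $\theta_{i,j}(h):=\tfrac{f_i(h)}{(1+\epsilon)f_j(k_j+1)}$ and defining the scalings recursively by $A_v=\theta_{1}(h(v))\,A_u$ (well defined because all parents of $v$ lie in one class and share a common $A$-value), I would process players by non-increasing level and, within a level, by increasing position $h$.

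The verification then splits exactly as in Theorem~\ref{thm2}. When the player at position $h$ of a level-$i$ node $u$, whose second strategy is the label-$h$ child $v$, is processed, the congestion already on $v$ equals its out-degree and that on $u$ equals $h$, so the recursive definition yields $g_u(h)=A_uf_1(h)=(1+\epsilon)\theta_1(h)A_uf_1(k_j+1)=(1+\epsilon)g_v(k_j+1)$ for the appropriate type $j$; hence each one-round inequality binds and the walk produces $\bm\sigma_s$. Summing the final costs $k_eg_e(k_e)$ over a level, the class structure gives the induction $S_i=(\xi_1/o_1)\,S_{i-1}$, so $\sum_{v\text{ at level }i}A_v=(\xi_1/o_1)^{i-1}o_1^{s-1}o_2^s$ with the \emph{discrete} factor $\xi_1=\sum_{h=1}^{k_1}\theta_1(h)=\tfrac{\sum_{h=1}^{k_1}f_1(h)}{(1+\epsilon)f_1(k_1+1)}$; this is exactly where the sum $\sum_{h=1}^{k}f(h)$ of $\beta_{\sf U}({\sf CR}^s_\epsilon,\cdot)$ replaces the $kf(k)$ of $\beta_{\sf U}(\poa_\epsilon,\cdot)$. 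The two geometric summations over the $2s$ levels, with the $[<]_s(\cdot)$ boundary bookkeeping of \eqref{equeq2}--\eqref{optzero1}, then give $\lim_{s\to\infty}{\sf SUM}(\bm\sigma_s)/{\sf SUM}(\bm\sigma^*_s)\geq \tfrac{\alpha_1k_1f_1(k_1)+\alpha_2k_2f_2(k_2)}{\alpha_1o_1f_1(o_1)+\alpha_2o_2f_2(o_2)}>M$ in Case~2 (and $\geq \tfrac{kf(k)}{of(o)}>M$ in Case~1).

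For the cooperative metric I would exploit that $\beta_{\sf U}({\sf CR}^c_\epsilon,\cdot)$ does not depend on $o$, so the supremum defining $\gamma_{\sf U}$ is attained at $o=1$; the construction therefore collapses to a \emph{tree} (in-degree one) carrying position-dependent marginal scalings $\theta_{1}(h):=\tfrac{hf_1(h)-(h-1)f_1(h-1)}{(1+\epsilon)((k_1+1)f_1(k_1+1)-k_1f_1(k_1))}$. Here $\sum_{h=1}^{k_1}\theta_1(h)$ \emph{telescopes} to $\tfrac{k_1f_1(k_1)}{(1+\epsilon)((k_1+1)f_1(k_1+1)-k_1f_1(k_1))}$, reproducing the $-k_1f_1(k_1)$ term of $\beta_{\sf U}({\sf CR}^c_\epsilon,\cdot)$, while semi-convexity of $xf(x)$ makes the per-player marginal costs non-decreasing in $h$, so the binding arrival is the last one and each one-round inequality again holds with equality. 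I expect the genuine obstacle to be the selfish case: setting up the layered partition on the \emph{multipartite} graph so that $A_v$ is constant across the $o_1$ (or $o_2$) parents of every node and the induction $S_i=(\xi_1/o_1)S_{i-1}$ propagates cleanly through all $2s$ levels, with the required divisibilities $o_1\mid |C|$. Once this labelling bookkeeping is in place, all remaining steps are the discrete analogues—sums in place of the integrals of Theorem~\ref{thm2}—of computations already carried out in Theorems~\ref{thm2} and~\ref{thm3}.
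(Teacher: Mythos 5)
Your proposal is correct and follows essentially the same route as the paper: the upper bound via Lemma~\ref{lemupp2}, and for the lower bound the multipartite graph of Theorem~\ref{thm3} enriched with the level/label partition and recursive scalings $A_v=\theta(h(v))A_u$ of Theorem~\ref{thm2}, with the discrete sums $\sum_{h=1}^{k}\theta(h)$ replacing the integrals. The only (valid) deviation is in the cooperative case, where you collapse to $o_1=o_2=1$ using the $o$-independence of $\beta_{\sf U}({\sf CR}^c_\epsilon,\cdot)$, whereas the paper simply reuses the same multipartite graph with the telescoping marginal-cost ratios $\theta_{i,j}(h)$.
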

\begin{proof}
Fix $M<\gamma_{\sf U}({\sf EM},{\sf U}(\mathcal{C}))$, where ${\sf EM}\in \{{\sf CR}^s_{\epsilon},{\sf CR}^c_{\epsilon}\}$ is the considered efficiency metric. As in Theorem \ref{thm3}, by Lemma \ref{lemupp2}, it is sufficient showing that there exists a load balancing game $\LB\in {\sf ULB}(\mathcal{C})$ such that its competitive ratio is at least $M$. 

Let us start with the case of selfish players. Let $k_1,k_2,o_1,o_2,f_1,f_2,\alpha_1,\alpha_2$ be the parameters defined as in Lemma~\ref{lemrou2}, for which we remove index $j\in [2]$ if the first case of Lemma~\ref{lemrou2} is verified. Define $j(i)=1$ if $i\in [s]$ and $j(i)=2$ otherwise. We extend the load balancing graph ${\sf LB}(k_1,k_2,o_1,o_2,f_1,f_2)$ used in the proof of Theorem~\ref{thm3} according to the following recursive procedure.
\begin{itemize}
\item[$\bullet$] {\bf Base Case:} partition the resources of the first level (resp. second level) of the load balancing graph in $o_{j(2)}$ (resp. $k_{j(1)}$) groups of equal size, and add edges from the first level to the second one in such a way that each resource in the first level has exactly $k_{j(1)}$ outgoing edges, each ending in a different group of the second level, and each resource in the second level has exactly $o_{j(2)}$ incoming edges, each coming from a different group of the first level; number the groups of the second level from $1$ to $k_{j(1)}$ and label each resource with the number associated to the group it belongs to. For an illustrating example see Figure \ref{fig:2}, where resources belonging to different groups at level $1$ are represented with different colors, resources belonging to different groups at level $2$ belong to different rectangles and they are labeled with the number of the rectangle they belong to.
\item[$\bullet$] {\bf Inductive Case:} as inductive hypothesis, suppose that resources at level $i\in [2s-1]$ have been partitioned into $m(i)$ groups of equal size and labeled with values from $1$ to $k_{j(i-1)}$, where each label is assigned to $m(i)/k_{j(i-1)}$ distinct groups, and that all the edges from level $i-1$ to level $i$ have been added. Partition resources at level $i+1$ in a temporary partition of $m(i)$ groups of equal size, and consider a bijective correspondence between groups at level $i$ and groups at level $i+1$ (in Figure \ref{fig:2}, groups at levels $2$ and $3$ which are in bijective correspondence have been depicted in the same dashed square). Partition each group at level $i$ into $o_{j(i+1)}$ subgroups of equal size, and the corresponding group at level $i+1$ (according to the bijective correspondence considered above) into $k_{j(i)}$ subgroups of equal size, and add edges from each group at level $i$ to the corresponding group at level $i+1$ in the same way as described in the base case, i.e., each resource in the group at level $i$ has exactly $k_{j(i)}$ outgoing edges, each ending in a different subgroup of the corresponding group at level $i+1$, and each resource in this group has exactly $o_{j(i+1)}$ incoming edges, each coming from a different subgroup of the considered group at level $i$. For each group at level $i+1$, number its subgroups with values from $1$ to $k_{j(i)}$ and label each resource with the number associated to the subgroup which the considered resource belongs to. The final partitioning of nodes at level $i+1$ is made of all the $m(i+1):=m(i)k_{j(i)}$ subgroups considered above, these groups have equal size, they are labeled with values from $1$ to $k_{j(i)}$, each label is assigned to $m(i+1)/k_{j(i-1)}=m(i)$ distinct groups, and all the edges from level $i$ to level $i+1$ have been added. Thus the inductive step is well-defined and allows to construct the structure of the load balancing graphs, starting from the lowest levels, to the higher ones.

For instance, Figure \ref{fig:2} depicts the recursive procedure used to construct a load balancing graph. In Figure \ref{fig:2}, each arbitrary dashed square includes a group at level $2$ and a group at level $3$ which are in bijective correspondence. Analogously to the base case, resources belonging to different subgroups of the considered group at level $2$ (resp. at level $3$)  are represented with different colors (resp. belong to different squares and are labeled with the number of the square they belong to).
\end{itemize}
Again, all the players can select their first and second strategies, only. Let $i(v)$ be  the level of resource $v$ and $h(v)\in [k_{j(i(v))}]$ be the label assigned to resource $v$ by the previous recursive procedure. 

For $i,j\in [2]$ and $h\in k_i$, define $\theta_{i,j}(h):=\frac{f_i(h)}{(1+\epsilon)f_j(k_j+1)}$ and $\theta_i(h):=\theta_{i,i}(h)$. Resource $v$ has latency function $g_v(x):=f_1(x)$ if $i(v)=1$,
$g_v(x):=\underbrace{\theta_1(h(v)) A_u}_{A_v} f_1(x)$ if $i(v)\in [s]_2$,
$g_v(x):=\underbrace{\theta_{1,2}(h(v)) A_u}_{A_v}  f_2(x)$ if $i(v)=s+1$, and
$g_v(x):=\underbrace{\theta_2(h(v)) A_u}_{A_v}  f_2(x)$ otherwise,
where $(u,v)$ is an arbitrary incoming edge of $v$ and $A_v$ is recursively defined on the basis of $A_u$ by setting $A_v=1$ for $i(v)=1$.
By using the recursive structure of the load balancing graph, one can prove that $A_u=A_{u'}$ if $(u,v)$ and $(u',v)$ are both edges of the load balancing graph, so that the definition of $A_v$ (and then $g_v$) is independent of the particular incoming edge of $v$.

Consider the online process $\bm\tau_s$ in which: (i) players enter the game in non-increasing order of level (with respect to their first strategy) and, within the same level, players are processed in non-decreasing order of position defined by labelling function $h$; (ii) each player selects her first strategy. Similarly as in Lemma \ref{oneround_claim}, one can show that $\bm \tau_s$ is an $\epsilon$-approximate one-round walk generated by selfish players. To show this, let $z$ be an arbitrary player entering the game according to $\bm \tau_s$, and let $u$ and $v$ be her first and second strategy, respectively. One can show that, when $z$ is entering the game, there are $h(v)-1$ players already assigned to $u$, and $k_{j(i(v))}$ players assigned to $v$; thus, similarly as in Lemma \ref{oneround_claim}, and by using the recursive definition of the latency functions, one can show that the individual cost of player $z$ is exactly $(1+\epsilon)$ times the cost resulting from a deviation in favour of her second strategy. 

Now, let $\sg$ and ${\bm \sigma}^*_s$ be the strategy profile generated by the $\epsilon$-approximate one-round walk $\bm \tau_s$ and the optimal strategy profile, respectively.  In the remainder of the proof, with the aim of estimating the competitive ratio of $\LB_{s,n}(k_1,k_2,o_1,o_2,f_1,f_2)$, we compare the social value of the strategy profile $\sg_s$ with that of $\sg^*_s$. Let ${\sf SUM}_i({\bm\sigma_s})$ denote the total latency of resources under strategy profile $\sg$. By using similar approaches as in Theorem \ref{thm2} (since the metric is the competitive ratio) and Theorem \ref{thm3} (since the game is unweighted), one can prove that 
\begin{align*}
{\sf SUM}_i({\bm\sigma_s})
&=\sum_{v\text{ in level }i}k_v(\bm\sigma_s)g_{v}\left(k_v(\bm\sigma_s)\right)\\
&=\left(o_1^{s-i}o_2^{s}\right)\sum_{(h_1,h_2,\ldots, h_{i-1})\in [k_1]^{i-1}}\left(\prod_{t=1}^{i-1}\theta_1(h_t)\right)k_1f_1(k_1)\\
&=\left(o_1^{s-1}o_2^{s}\right)\left(\frac{\sum_{h=1}^{k_1}\theta_1(h)}{o_1}\right)^{i-1}k_1f_1(k_1)\\
&=\left(o_1^{s-1}o_2^{s}\right)\left(\frac{\sum_{h=1}^{k_1}f_1(h)}{o_1f_1(k_1+1)}\right)^{i-1}k_1f_1(k_1)
\end{align*}
for any $i\in [s]$, and 
\begin{align*}
{\sf SUM}_i({\bm\sigma_s})
&=\left(o_1^{s-1}o_2^{s}\right)\left(\frac{\sum_{h=1}^{k_1}\theta_1(h)}{o_1}\right)^{s-1}\left(\frac{\sum_{h=1}^{k_1}\theta_{1,2}(h)}{o_2}\right)\\
&\ \ \ \ \ \ \ \left(\frac{\sum_{h=1}^{k_2}\theta_{2}(h)}{o_2}\right)^{i-1-s}k_2f_2(k_2)\\
&=\left(o_1^{s-1}o_2^{s}\right)\left(\frac{\sum_{h=1}^{k_1}f_1(h)}{o_1f_1(k_1+1)}\right)^{s-1}\left(\frac{\sum_{h=1}^{k_1}f_1(h)}{o_2f_2(k_2+2)}\right)\\
&\ \ \ \ \ \ \ \left(\frac{\sum_{h=1}^{k_2}f_2(h)}{o_2f_2(k_2+2)}\right)^{i-1-s}k_2f_2(k_2)
\end{align*}
Thus, we can compute ${\sf SUM}({\bm\sigma_s})=\sum_{i=1}^{2s}{\sf SUM}_i({\bm\sigma_s})$ and, analogously, we can compute ${\sf SUM}({\bm\sigma}_s^*)$. At this point, by using the same proof arguments of Theorems~\ref{thm2} and \ref{thm3}, we can prove that
\begin{align}
\lim_{s\in \NN}{\sf CR}_\epsilon^s(\LB_{s,n}(k_1,k_2,o_1,o_2,f_1,f_2))
&\geq \lim_{s\rightarrow \infty}\frac{{\sf SUM}({\bm\sigma_s})}{{\sf SUM}({\bm\sigma_s}^*)}\nonumber\\
&\geq \frac{\alpha_1k_1f_1(k_1)+\alpha_2k_2f_2(k_2)}{\alpha_1o_1f_1(o_1)+\alpha_2o_2f_2(o_2)}\nonumber\\
&>M,\label{unwfin4}
\end{align}
where the last inequality comes from Lemma \ref{lemrou2}, and this shows the claim in the case of selfish players. 

For the case of cooperative players, it suffices considering the same load balancing graph with $\theta_{i,j}(h)=\frac{h_if_i\left(h_i\right)-(h_i-1)f_i(h_i-1)}{(1+\epsilon)((k_j+1)f_j(k_j+1)-k_jf_j(k_j))}$. Let $\sg_s$ and $\sg^*_s$ be the strategy profiles in which each player chooses her first and second strategy, respectively. By using the same arguments as in the previous proof, one can show that inequality \eqref{unwfin4} holds for the case of cooperative players, too.\qed 
\end{proof}
\begin{remark}\label{remalowunw}
Fix a metric ${\sf EM}\in\{{\sf PoA}_\epsilon,{\sf CR}^s_{\epsilon},{\sf CR}^c_{\epsilon}\}$, a class of latency functions $\mathcal{C}$, and let $(k_1,k_2,o_1,o_2,f_1,f_2)$ (resp. $(k,o,f)$) be a tuple such that values $\alpha_1,\alpha_2$ considered in Case~2 of Lemma~\ref{lemrou2} are positive (resp. such that the inequality of Case 1 corresponding to the considered metric {\sf EM} is satisfied, i.e., $\beta_{\sf U}({\sf EM},k,o,f)\geq 0$). As in Remark~\ref{remalowweig}, by inspecting the proofs of Theorems~\ref{thm3} and \ref{thm4}, we can construct a load balancing instance parametrized by the above tuple whose performance is at least $\frac{\alpha_1k_1 f_1(k_1)+\alpha_2 k_2 f_2(k_2)}{\alpha_1o_1 f_1(o_1)+\alpha_2 o_2 f_2(o_2)}$ (resp. $\frac{k f(k)}{o f(o)}$). Thus, in case we are not able to quantify the exact value of $\gamma_{\sf U}({\sf EM}, {\sf U}(\mathcal{C}))$, we can still find a choice of the above parameters that leads to good (and possibly tight) lower bounds; the tightness can be shown by resorting to the characterization provided in Remark \ref{rema_lemrou_unw}. 
\end{remark}
\subsection{Application to Polynomial Latency Functions}
Let $\gamma_{\sf U}({\sf EM},{\sf U}(\mathcal{P}(d)))$ with ${\sf EM}\in\{{\sf PoA}_\epsilon,{\sf CR}^s_{\epsilon},{\sf CR}^c_{\epsilon}\}$ be the quantity defined at the beginning of the section, where $\mathcal{P}(d)$ is the class of polynomial latency functions of maximum degree $d$. By Lemma \ref{lemupp2}, $\gamma_{\sf U}({\sf EM},{\sf U}(\mathcal{P}(d)))$ constitutes an upper bound on the efficiency metrics ${\sf EM}\in\{{\sf PoA}_\epsilon,{\sf CR}^s_{\epsilon},{\sf CR}^c_{\epsilon}\}$ for unweighted congestion games and load balancing games with polynomial latency functions of maximum degree $d$.

$\gamma_{\sf U}({\sf PoA}_{\epsilon},{\sf U}(\mathcal{P}(d)))$ has been already evaluated in \cite{ADGMS11,CKS11}. $\gamma_{\sf U}({\sf CR}^c_{\epsilon},{\sf U}(\mathcal{P}(d)))$ has been evaluated in \cite{CFKKM11} for exact pure Nash equilibria and affine functions;\footnote{As in Subsection \ref{wplf}, the value of $\gamma_{\sf U}({\sf EM},{\sf U}(\mathcal{P}(d)))$ can be easily derived from equivalent results appearing in the cited works, thus the analysis leading to such quantification has been omitted.} by using similar arguments as in \cite{ADGMS11}, one can easily compute the exact value of $\gamma_{\sf U}({\sf CR}^c_{\epsilon},{\sf U}(\mathcal{P}(d)))$ for any $\epsilon>0$. 

The above values of ${\sf EM}\in\{{\sf PoA}_\epsilon,{\sf CR}^c_{\epsilon}\}$ can be also represented by resorting to the characterization provided by Remark \ref{rema_lemrou_unw}. In particular, let $k$ be the unique real solution of equation $\beta_{\sf U}({\sf EM},k,1,f)=0$, where $f$ is the monomial function defined as $f(t)=t^d$. If $k$ is not integer we have that $\gamma_{\sf U}({\sf EM},{\sf U}(\mathcal{P}(d)))$ is equal to $\frac{\alpha_1k_1f_1(k_1)+\alpha_2k_2f_2(k_2)}{\alpha_1o_1f_1(o_1)+\alpha_2o_2f_2(o_2)}$, where $(k_1,k_2,o_1,o_2):=(\lfloor k\rfloor +1,\lfloor k\rfloor,1,1)$, $f_1:=f_2:=f$, and $\alpha_1$ and $\alpha_2$ are defined as in Case 2 of Lemma \ref{lemrou2} with respect to parameters $k_1,k_2,o_1,o_2,f_1,f_2$. Instead, if $k$ is integer, we have that $\gamma_{\sf U}({\sf EM},{\sf U}(\mathcal{P}(d)))$ is equal to $\frac{kf(k)}{f(1)}$, i.e., Case 1 of Lemma \ref{lemrou2} is satisfied by tuple $(k,1,f)$. 

Since the class of polynomial latency functions $\mathcal{P}(d)$ satisfies the hypothesis of Theorem~\ref{thm3} and \ref{thm4}, as a corollary we have that the values of $\gamma_{\sf U}({\sf EM},{\sf U}(\mathcal{P}(d)))$ considered above are tight upper bounds on the performance of load balancing games, thus matching the performance of general congestion games; this generalizes/improves some results of \cite{CFKKM11,GS07}, in which the considered lower bounds hold for load balancing games under some restrictions (e.g., $d=1$ for the competitive ratio, and $\epsilon=0$ for the price of anarchy). 

Regarding the $\epsilon$-approximate one-round walks generated by selfish players, we have that the exact value of $\gamma_{\sf U}({\sf CR}^s_{\epsilon},{\sf U}(\mathcal{P}(d)))$ has been provided in \cite{B12} for $d\in [3]$ and $\epsilon=0$, thus, by Theorem \ref{thm4}, such value is tight even for load balancing games; even for the case of affine latencies (i.e., $d=1$), this fact also improves a result of \cite{BFFM09}, in which a tight lower bound has been provided for general congestion games only, and the case of singleton strategies was left open.

For more general polynomial latency functions with maximum degree $d>3$ and for any $\epsilon\geq 0$, an upper bound on $\gamma_{\sf U}({\sf CR}^s_{\epsilon},{\sf U}(\mathcal{P}(d)))$ can be trivially obtained by reusing the upper bound shown in Subsection \ref{wplf}, that holds for more general weighted games; for exact one-round walks, a better upper bound has been recently provided in \cite{Klimm19}. By Remark \ref{remalowunw}, we can get good lower bounds on the performance of $\epsilon$-approximate one-round walks, having a similar representation as in the cases of ${\sf EM}\in\{{\sf PoA}_\epsilon,{\sf CR}^c_{\epsilon}\}$. In particular, let $k$ be the highest non-negative integer such that $\beta_{\sf U}({\sf CR}_\epsilon^s,k,1,f)\geq 0$, where $f$ is the monomial function defined as $f(t)=t^d$; if $\beta_{\sf U}({\sf CR}_\epsilon^s,k,1,f)>0$ (resp. $\beta_{\sf U}({\sf CR}_\epsilon^s,k,1,f)=0$), we can consider the lower bounding instances defined in Theorem \ref{thm4}, parametrized by $(k+1,k,1,1,f,f)$ (resp. $(k,k,1,1,f,f)$); by Remark \ref{remalowunw}, the performance of such load balancing instances is equal to $\frac{\alpha_1k_1f_1(k_1)+\alpha_2k_2f_2(k_2)}{\alpha_1o_1f_1(o_1)+\alpha_2o_2f_2(o_2)}$ (resp. $\frac{k f(k)}{o f(o)}$), where $\alpha_1$ and $\alpha$ are defined as in Lemma \ref{lemrou2}. Such load balancing instances  improve the lower bounds obtained in \cite{BV16}, where it is showed that the $\epsilon$-approximate competitive ratio of general unweighted congestion games is at least equal to the $(d+1)$-th Geometric polynomial evaluated in $(1+\epsilon)$; furthermore, for $d>3$, we conjecture that the performance of the obtained load balancing instances match  the exact value of $\gamma_{\sf U}({\sf CR}^s_{\epsilon},{\sf U}(\mathcal{P}(d)))$, and in such case, they would match the competitive ratio of general congestion games with polynomial latency functions of maximum degree $d$. 
See Table \ref{figurab2} for some numerical comparisons. 
\begin{figure}[!t]
\begin{center}
\begin{tabular}{|c|c|c|c|}
  \hline
  $d$ & ${\sf PoA}_0({\sf U}(\mathcal{P}(d)))$ \cite{ADGMS11,GS07}  & ${\sf CR}_0^s({\sf U}(\mathcal{P}(d)))$ & ${\sf CR}_0^c({\sf U}(\mathcal{P}(d)))$ \\\hline
  1 & 2.5 \cite{CK05,AAE05,CFKKM11} &  4.236 \cite{CMS12,BFFM09} & 5.66 \cite{CFKKM11} \\
  2 & 9.583 & 37.58 \cite{B12} & 55.46 \\
  3 & 41.54 & 527.3 \cite{B12} & 755.2 \\
  4 & 267.6 & 9,387 (L.B.) & 13,170\\
  5 & 1,514  & 201,401 (L.B.) & 289,648 \\
  6 & 12,345  & 5,276,150 (L.B.) & 7,174,495\\
  7 & 98,734  & 151,192,413 (L.B.)& 220,349,064\\
  8 & 802,603  & 5,287,749,084 (L.B.) & 7,022,463,077\\
  $\vdots$ &  $\vdots$ &  $\vdots$ &  $\vdots$ \\
  $\infty$ & $(\Theta(d/\log(d)))^{d+1}$  & $(\Theta(d))^{d+1}$ & $(\Theta(d))^{d+1}$\\
  \hline
\end{tabular}
\caption{The competitive ratio of exact one-round walks generated by cooperative players in unweighted load balancing games
with polynomial latency functions of maximum degree $d$. For $d\geq 2$, our findings provide tight lower bounds on the competitive ratio involving cooperative players, holding even for load balancing games. For $d\in [3]$, we show that the existing upper bounds on the competitive ratio involving selfish players are tight even for load balancing games; instead, for $d>3$, we provide almost tight lower bounds (see Table \ref{figurab1} for the upper bounds). We observe that, when considering the solution concept of one-round walk, the selfish behaviour induce better outcomes than the cooperative one (in contrast to what happens for weighted games).}\label{figurab2}
\end{center}
\end{figure}

\section{The Case of Identical Resources}\label{sec_ident}
\subsection{The Approximate Price of Anarchy of Weighted Games}
In this section, we characterize the approximate price of anarchy of weighted symmetric load balancing games with identical resources having semi-convex latency functions. Given $\epsilon\geq 0$,  $\lambda\in (0,1)$, a latency function $f$, and $x>0$, let 
\begin{align*}
&[x]_{\epsilon,f}:=\inf\{t\geq 0: f(x)\leq (1+\epsilon)f(x/2+t)\},\\
&\gamma_{\epsilon,f}(x,\lambda):=\frac{\lambda xf(x)+(1-\lambda)[x]_{\epsilon,f} f([x]_{\epsilon,f})}{(\lambda x+(1-\lambda)[x]_{\epsilon,f}) f(\lambda x+(1-\lambda)[x]_{\epsilon,f})}.
\end{align*}
In Theorem \ref{thm5} and \ref{thm6}, we provide respectively upper and lower bounds on the $\epsilon$-approximate price of anarchy (which are tight, under mild assumptions), and we generalize the results obtained in \cite{LMMR08,GLMM06}, holding for the restricted case of affine and monomial latency functions with $\epsilon=0$. The high-level idea of the proofs, which are given in the appendix, is showing that the price of anarchy of class ${\sf WSLB}(\{f\})$, where $f$ is a  semi-convex latency function, is given by instances verifying the following conditions: (i) the number $m$ of resources and the total weight $W$ of players tends to infinite, in such a way that the ratio $W/m$ tends to $\lambda x+(1-\lambda)[x]_{\epsilon,f}$, for some $x>0$ and $\lambda\in (0,1)$; (ii) there is an optimal strategy profile in which all the resource congestions are equal (i.e., equal to $W/m$); (iii) at the worst-case equilibrium there are two groups $A$ and $B$ of resources, such that $|A|/m$ and $|B|/m$ tend respectively to $\lambda$ and $1-\lambda$, as $m$ tends to infinite; (iv) at the worst-case equilibrium, each resource of group $A$ has congestion $x>0$, and each resource of group $B$ has congestion $[x]_{\epsilon,f}\leq x$, that can be equivalently defined as the minimum possible congestion guaranteeing the equilibrium conditions (while keeping the congestions in $A$ equal to $x$). According to the above properties, the equilibrium and the optimal cost increase (with respect to $m$) as $\lambda m x f(x)+(1-\lambda)m [x]_{\epsilon,f} f([x]_{\epsilon,f})$ and $(\lambda x+(1-\lambda)[x]_{\epsilon,f})m$, respectively. Thus, we get that the approximate price of anarchy is equal to the the highest value of $\frac{\lambda xf(x)+(1-\lambda)[x]_{\epsilon,f} f([x]_{\epsilon,f})}{(\lambda x+(1-\lambda)[x]_{\epsilon,f}) f(\lambda x+(1-\lambda)[x]_{\epsilon,f})}$, over $x>0$ and $\lambda\in (0,1)$.
\begin{theorem}\label{thm5}
Let $f:\RP\rightarrow \RP$ be a semi-convex latency function, and let $\epsilon\geq 0$. We have that 
\begin{equation}\label{eq-upp}
 \poa_{\epsilon}({\sf WSLB}(\{f\}))\leq\sup_{x>0}\max_{\lambda\in (0,1)}\gamma_{\epsilon,f}(x,\lambda).
\end{equation}
\end{theorem}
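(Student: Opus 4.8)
The plan is to bound an arbitrary $\epsilon$-approximate pure Nash equilibrium $\bm\sigma$ of a game in ${\sf WSLB}(\{f\})$ against a social optimum $\bm\sigma^*$, controlling the numerator ${\sf SUM}(\bm\sigma)$ from above and the denominator ${\sf SUM}(\bm\sigma^*)$ from below. Write $g(t):=tf(t)$, which is convex since $f$ is semi-convex, non-decreasing since $f\geq 0$ is non-decreasing, and superadditive (a convex function with $g(0)=0$). Let $m$ be the number of resources and $W=\sum_i w_i$. I would record two lower bounds on the optimum: the \emph{balancing bound} ${\sf SUM}(\bm\sigma^*)=\sum_e g(o_e)\geq m\,g(W/m)=Wf(W/m)$, from Jensen applied to $\sum_e o_e=W$, and the \emph{individual bound} ${\sf SUM}(\bm\sigma^*)=\sum_i w_if(o_{\sigma^*_i})\geq \sum_i w_if(w_i)$, since each player lies on a resource of congestion at least its own weight.

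First I would read off the equilibrium structure. Let $x:=\max\{k_e:\text{$e$ carries at least two players}\}$ be the largest congestion among the \emph{shared} resources (if none exists, every resource is a singleton and ${\sf SUM}(\bm\sigma)=\sum_i w_if(w_i)\leq{\sf SUM}(\bm\sigma^*)$, so $\poa_\epsilon\leq 1$ and we are done). A shared resource of congestion $x$ hosts a player of weight $w_i\leq x/2$; applying the $\epsilon$-equilibrium condition to this player and to an arbitrary resource $e'$ gives $f(x)\leq(1+\epsilon)f(k_{e'}+w_i)\leq(1+\epsilon)f(k_{e'}+x/2)$, and since $t\mapsto f(x/2+t)$ is non-decreasing the definition of $[x]_{\epsilon,f}$ forces $k_{e'}\geq[x]_{\epsilon,f}=:y$ for every $e'$. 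Hence every congestion lies in $[y,x]$, except possibly at resources carrying a single player of weight $>x$; call this set $S$ (the \emph{solo-heavy} resources), with associated players $P_S$, and let $\bar S$ be its complement. Note $y\leq x/2<x$ because $t=x/2$ always satisfies the defining inequality, and $\bar S\neq\emptyset$ as it contains the shared resources.

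On $\bar S$ all congestions lie in $[y,x]$, so the chord inequality $g(k_e)\leq\frac{x-k_e}{x-y}g(y)+\frac{k_e-y}{x-y}g(x)$ for the convex $g$, summed over $\bar S$, gives
\begin{equation*}
B:=\sum_{e\in\bar S}g(k_e)\leq m_B\bigl(\lambda g(x)+(1-\lambda)g(y)\bigr),\qquad \lambda:=\frac{W_B/m_B-y}{x-y}\in[0,1],
\end{equation*}
with $m_B:=|\bar S|\geq 1$, $W_B:=\sum_{e\in\bar S}k_e$, and $\lambda x+(1-\lambda)y=W_B/m_B$. Dividing by $\beta:=m_B\,g(W_B/m_B)$ reproduces exactly $B\leq\gamma_{\epsilon,f}(x,\lambda)\,\beta\leq\Gamma\beta$, where $\Gamma:=\sup_{x>0}\max_{\lambda\in(0,1)}\gamma_{\epsilon,f}(x,\lambda)$; the cases $\lambda\in\{0,1\}$ are harmless since then $B=\beta$ and $\Gamma\geq1$ (itself from $\lambda g(x)+(1-\lambda)g(y)\geq g(\lambda x+(1-\lambda)y)$). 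Setting $A:=\sum_{e\in S}g(k_e)=\sum_{i\in P_S}g(w_i)$, the target ${\sf SUM}(\bm\sigma)=A+B\leq\Gamma\,{\sf SUM}(\bm\sigma^*)$ reduces, via $B\leq\Gamma\beta$ and $\Gamma\geq1$, to the single inequality
\begin{equation*}
{\sf SUM}(\bm\sigma^*)\ \geq\ A+\beta\ =\ \sum_{i\in P_S}g(w_i)+m_B\,g(W_B/m_B).
\end{equation*}

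The main obstacle is precisely this last inequality, a refined lower bound on the optimum that must account for the \emph{indivisibility} of the heavy jobs: the balancing bound $Wf(W/m)$ alone is too weak here, as a single very heavy player already shows it can underestimate ${\sf SUM}(\bm\sigma^*)$ by an arbitrary factor. I would prove it by relaxing the light jobs to a divisible fluid of total mass $W_B$ while keeping the $|P_S|$ heavy jobs indivisible; enlarging the feasible set only lowers the optimum, so the relaxed value bounds ${\sf SUM}(\bm\sigma^*)$ from below, and integrality of the true light jobs can only increase the cost by convexity. A water-filling/exchange argument, crucially using that every heavy weight satisfies $w_i>x\geq W_B/m_B$ together with superadditivity and increasing marginals of $g$, then shows the relaxed optimum isolates each heavy job on its own resource and spreads the fluid evenly over the remaining $m_B$ resources, yielding exactly $A+\beta$. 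Checking that neither piling two heavy jobs together (which raises cost by superadditivity) nor adding fluid to an already above-average heavy resource can beat isolation is the delicate step, and it is exactly what the inequalities $w_i>x\geq W_B/m_B$ are there to guarantee.
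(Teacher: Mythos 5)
Your argument is a genuinely different route from the paper's, and its architecture is sound. The paper also uses the balancing bound ${\sf SUM}(\bm\sigma^*)\geq Wf(W/m)$ and the chord/convexity step on the resources whose congestions lie in $[[x]_{\epsilon,f},x]$, but it disposes of the solo-heavy resources by an instance reduction rather than by refining the lower bound on the optimum: its Lemma 2 shows that if the maximum equilibrium congestion $k_e$ satisfies $[k_e]_{\epsilon,f}>\min_{e'}k_{e'}$, then $e$ carries a single player $j$ and (by semi-convexity, since $w_j=k_e\geq W/m$) there is an \emph{optimal} profile in which $j$ is also alone; deleting $j$ and $e$ subtracts the same quantity $k_ef(k_e)$ from both ${\sf SUM}(\bm\sigma)$ and ${\sf SUM}(\bm\sigma^*)$, which cannot decrease the ratio, so after finitely many deletions all congestions lie in $[[x]_{\epsilon,f},x]$ and only the balancing bound is needed. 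You instead keep the heavy players and compensate with the sharper bound ${\sf SUM}(\bm\sigma^*)\geq\sum_{i\in P_S}g(w_i)+m_B\,g(W_B/m_B)$. Both work; the paper's reduction avoids any refined analysis of the optimum at the price of an induction over instances, while yours is direct and the fluid lower bound is of independent interest.

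The one place where your write-up is not yet a proof is exactly the step you flag, and it is subtler than your parenthetical suggests. ``Piling two heavy jobs together raises cost by superadditivity'' does not settle the isolation claim, because piling also frees a resource for the fluid, and spreading mass $W_B$ over $m-h+1$ rather than $m-h$ resources strictly lowers the fluid cost by convexity. What must be shown is that the superadditivity defect dominates the fluid gain: with $u:=W_B/(m-h)\leq W_B/m_B\leq x$ and $A,B\geq x$ the loads of two heavy groups,
\begin{equation*}
g(A+B)-g(A)-g(B)\ \geq\ (m-h)\,g(u)-(m-h+1)\,g\!\left(\frac{(m-h)u}{m-h+1}\right).
\end{equation*}
This is true, and cleanly so: the left side is at least $g(2u)-2g(u)$ because the defect $g(s+t)-g(s)-g(t)$ is non-decreasing in each argument and $A,B\geq u$; the right side is at most $g(2u)-2g(u)$ because $u=\frac{1}{m-h+2}(2u)+\frac{m-h+1}{m-h+2}\cdot\frac{(m-h)u}{m-h+1}$ and Jensen gives $(m-h+2)g(u)\leq g(2u)+(m-h+1)g\bigl(\frac{(m-h)u}{m-h+1}\bigr)$. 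Together with the observation that, once every base load exceeds $u$, the optimal fluid allocation occupies only the empty resources (the marginal cost $g'$ there never exceeds $g'$ on a heavy resource), this completes the missing lemma and hence your proof. As written, though, that lemma is asserted rather than proved, and it is the crux of your approach.
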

\begin{theorem}[Lower Bound]\label{thm6}
For any $\epsilon\geq 0$ and for any $x\geq 0$, let $\lambda^*(x)\in \arg\max_{\lambda\in (0,1)}\gamma_{\epsilon,f}(x,\lambda)$. If (i) $\lambda^*(x)\leq \frac{1}{2}$, and (ii) either $\epsilon=0$, or $\lambda^*(x) x+(1-\lambda^*(x))[x]_{\epsilon,f}-x/2\geq 0$, for any $x>0$, then
\begin{equation}\label{fine}
\poa_{\epsilon}({\sf WSLB}(\{f\}))\geq \sup_{x>0}\max_{\lambda\in (0,1)}\gamma_{\epsilon,f}(x,\lambda).
\end{equation}
\end{theorem}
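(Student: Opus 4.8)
The plan is to establish \eqref{fine} by a limiting construction: for each fixed $x>0$ I would exhibit a sequence of games in ${\sf WSLB}(\{f\})$ whose $\epsilon$-approximate price of anarchy converges to $\max_{\lambda\in(0,1)}\gamma_{\epsilon,f}(x,\lambda)$, and then take the supremum over $x$. Throughout I set $\lambda:=\lambda^*(x)$ and $y:=[x]_{\epsilon,f}$, and I would first record that semi-convexity of $f$ (convexity of $kf(k)$) makes $f$ continuous on $(0,\infty)$, so the infimum defining $[x]_{\epsilon,f}$ is attained and the closed inequality $f(x)\le (1+\epsilon)f(x/2+y)$ holds; note also $y\le x/2$, with $y=x/2$ when $\epsilon=0$.

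The instance, indexed by an integer $m\to\infty$ and an auxiliary weight granularity $\delta\to 0$, would use $m$ identical resources with latency $f$ and a common strategy set consisting of all resources (so the game is symmetric), populated by two kinds of players: $2\lambda m$ \emph{big} players of weight $x/2$, and many \emph{tiny} players of weight $\delta$ with total weight $(1-\lambda)my$. The candidate equilibrium $\sg$ places two big players on each of $\lambda m$ resources (group $A$, congestion $x$) and fills each of the remaining $(1-\lambda)m$ resources (group $B$) with tiny players up to congestion $y$. I would verify $\sg$ is an $\epsilon$-approximate equilibrium: a big player on an $A$-resource incurs cost $f(x)$ and may deviate only to another $A$-resource (new congestion $3x/2$, not improving since $f$ is non-decreasing) or to a $B$-resource (new congestion $x/2+y$, where $f(x)\le(1+\epsilon)f(x/2+y)$ holds by definition of $[x]_{\epsilon,f}$); a tiny player can only move onto a resource of congestion at least $y$ and hence never strictly gains. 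This yields ${\sf SUM}(\sg)=\lambda m\,xf(x)+(1-\lambda)m\,yf(y)$.

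To upper bound the optimum I would exhibit a profile $\sg^*$ of near-uniform congestion $c:=\lambda x+(1-\lambda)y$. Because $\lambda\le \tfrac12$ (hypothesis (i)), at most one big player need be assigned per resource; because $c\ge x/2$ (hypothesis (ii), and automatic when $\epsilon=0$ since then $c=x/2+\lambda x/2$), each resource carrying a big player can be topped up with tiny players to reach $c$, while the remaining resources are filled to $c$ by tiny players alone. A direct count shows the required tiny weight is $2\lambda m(c-x/2)+(1-2\lambda)mc=m(c-\lambda x)=(1-\lambda)my$, exactly the available budget, so $\sg^*$ is realizable. Letting $\delta\to 0$ the congestions concentrate at $c$, whence ${\sf SUM}(\sg^*)\to m\,cf(c)$ and ${\sf SUM}(\text{opt})\le{\sf SUM}(\sg^*)$. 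Consequently $\poa_\epsilon({\sf WSLB}(\{f\}))\ge {\sf SUM}(\sg)/{\sf SUM}(\sg^*)\to \gamma_{\epsilon,f}(x,\lambda^*(x))=\max_{\lambda\in(0,1)}\gamma_{\epsilon,f}(x,\lambda)$, and taking the supremum over $x>0$ gives \eqref{fine}.

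The step I expect to be the main obstacle is the discretization underlying $\sg^*$: bin congestions can only take values in the lattice generated by $x/2$ and $\delta$, so $c$ is attained only approximately, and I must argue that the resulting error in ${\sf SUM}(\sg^*)$ is $o(m)$ as $\delta\to 0$, $m\to\infty$ (choosing $m$, $\delta$, and rational approximations of $x$ and $\lambda$ compatibly so that $2\lambda m$ and the per-bin tiny-player counts round consistently). This is precisely where hypotheses (i) and (ii) enter: they guarantee that a packing of the big players into bins of total weight $c$ exists with at most one big player per bin and with non-negative residual tiny weight, so that the convex (Jensen) envelope $cf(c)$ of the optimal cost is essentially achievable rather than merely a lower estimate; without them the uniform-congestion profile is infeasible and the ratio $\gamma_{\epsilon,f}(x,\lambda^*(x))$ need not be matched.
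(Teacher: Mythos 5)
Your construction is essentially the paper's: the same two big players of weight $x/2$ per resource on a $\lambda m$-fraction of the machines (congestion $x$), the same filling of the remaining machines to congestion $[x]_{\epsilon,f}$ with small players, the same near-uniform optimum at congestion $\lambda x+(1-\lambda)[x]_{\epsilon,f}$ with at most one big player per machine, and the same places where hypotheses (i) and (ii) are invoked (one big player per bin fits because $\lambda^*(x)\le 1/2$; the residual small weight per bin is non-negative because $opt-x/2\ge 0$). Your tiny-weight budget check $2\lambda m(c-x/2)+(1-2\lambda)mc=(1-\lambda)m[x]_{\epsilon,f}$ is exactly the identity the paper uses to argue its two subdivisions have equal total length.

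The one genuine difference is how the small players are realized, and it is precisely the step you flag as the obstacle. The paper does not take a $\delta\to 0$ limit over uniform tiny players; instead it defines the ``blue'' players by overlapping two subdivisions of a single segment of total weight $(m-h(m))[x]_{\epsilon,f}$ --- one into $m-h(m)$ blocks of weight $[x]_{\epsilon,f}$ (the equilibrium grouping) and one into the optimal grouping --- and assigns one player to each interval of the common refinement. This makes \emph{both} configurations exactly realizable by the same finite player set, so there is no rounding error at all and only the limit $m\to\infty$ (through $h(m)=\lceil m\lambda^*(x)\rceil$) remains. If you pursue your uniform-$\delta$ route, be careful at the equilibrium end rather than the optimum end: the inequality $f(x)\le(1+\epsilon)f(x/2+t)$ is tight at $t=[x]_{\epsilon,f}$ (it is an infimum), so if rounding leaves any $B$-resource with congestion strictly below $[x]_{\epsilon,f}$, a big player's deviation there may violate the $\epsilon$-approximate equilibrium condition; you must round the $B$-congestions up to at least $[x]_{\epsilon,f}$ and then absorb the resulting $O(m\delta)$ surplus into the optimum estimate. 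With that fix your argument goes through; the paper's subdivision trick simply buys you exactness and a shorter verification.
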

By exploiting (\ref{fine}), one can obtain tight bounds on the price of anarchy of weighted symmetric load balancing games with identical resources having polynomial latency functions. The same tight bounds have been given in \cite{LMMR08,GLMM06} for monomial latency functions. In the following corollary of Theorems~\ref{thm5} and \ref{thm6} we show that the same bounds hold for more general polynomial latency functions. 
\begin{corollary}\label{thmpoly}
Let $\mathcal{P}(d)$ be the class of polynomial latency functions of maximum degree $d$. Then,
$\poa_0({\sf WSLB}(\mathcal{P}(d)))=\frac{d^d(2^{d+1}-1)^{d+1}}{2^d(d+1)^{d+1}(2^d-1)^d}\in \Theta\left(\frac{2^d}{d}\right)$.
\end{corollary}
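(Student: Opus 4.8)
The plan is to combine the matching bounds of Theorems~\ref{thm5} and \ref{thm6} with the observation that, for identical resources, a game in ${\sf WSLB}(\mathcal{P}(d))$ is determined by a single latency $f\in\mathcal{P}(d)$, so $\poa_0({\sf WSLB}(\mathcal{P}(d)))=\sup_{f\in\mathcal{P}(d)}\poa_0({\sf WSLB}(\{f\}))$. Every $f\in\mathcal{P}(d)$ is semi-convex (because $tf(t)$ is again a polynomial with non-negative coefficients), so Theorem~\ref{thm5} gives the upper bound $\poa_0({\sf WSLB}(\{f\}))\le\sup_{x>0}\max_{\lambda\in(0,1)}\gamma_{0,f}(x,\lambda)$ for each such $f$. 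The first simplification I would record is that, for $\epsilon=0$ and any \emph{non-constant} $f\in\mathcal{P}(d)$ (which is strictly increasing on $(0,\infty)$), the definition of $[x]_{0,f}$ collapses to $[x]_{0,f}=x/2$; constant latencies only yield ratio $1$ and can be discarded. Substituting $[x]_{0,f}=x/2$ turns $\gamma_{0,f}(x,\lambda)$ into a ratio whose numerator and denominator are both linear in the coefficients of $f$.

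Writing $f(t)=\sum_{h=0}^d a_h t^h$ with $a_h\ge 0$ and using $\lambda x+(1-\lambda)(x/2)=x(\lambda+1)/2$, I would record the identity
\[
\gamma_{0,f}(x,\lambda)=\frac{\sum_{h=0}^d a_h\big(\lambda x^{h+1}+(1-\lambda)(x/2)^{h+1}\big)}{\sum_{h=0}^d a_h\,\big(x(\lambda+1)/2\big)^{h+1}},
\]
a mediant of the monomial ratios, so by the standard inequality $\frac{\sum_h a_h N_h}{\sum_h a_h D_h}\le\max_h\frac{N_h}{D_h}$ (with $a_h\ge 0$, $D_h>0$) the supremum over $f$ is attained at a pure monomial $t^h$. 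A direct computation shows $\gamma_{0,t^h}(x,\lambda)$ is independent of $x$ and equals $g_h(\lambda):=\frac{(2^{h+1}-1)\lambda+1}{(\lambda+1)^{h+1}}$. The decisive step is the \emph{pointwise} domination $g_{h+1}(\lambda)\ge g_h(\lambda)$ on $(0,1)$: clearing the common denominator reduces it to $(2^{h+2}-1)\lambda+1\ge\big((2^{h+1}-1)\lambda+1\big)(\lambda+1)$, and expanding the right-hand side shows the difference equals $(2^{h+1}-1)\lambda(1-\lambda)>0$. Hence $\max_{0\le h\le d}g_h(\lambda)=g_d(\lambda)$ for every $\lambda$, so the degree-$d$ monomial is the worst case; this sidesteps any need to prove monotonicity of the final closed-form in $d$.

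It then remains to maximize $g_d(\lambda)$ over $\lambda\in(0,1)$. Setting $g_d'(\lambda)=0$ yields the optimality condition $(2^{d+1}-1)(\lambda+1)=(d+1)\big((2^{d+1}-1)\lambda+1\big)$, whence $\lambda_d^*+1=\frac{2(d+1)(2^d-1)}{d(2^{d+1}-1)}$, and substituting back through $g_d(\lambda_d^*)=\frac{2^{d+1}-1}{(d+1)(\lambda_d^*+1)^d}$ produces exactly $P(d):=\frac{d^d(2^{d+1}-1)^{d+1}}{2^d(d+1)^{d+1}(2^d-1)^d}$. This gives $\poa_0({\sf WSLB}(\mathcal{P}(d)))\le P(d)$. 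For the matching lower bound I would apply Theorem~\ref{thm6} to $f(t)=t^d$: since $\epsilon=0$, hypothesis (ii) is vacuous, while hypothesis (i) reduces to $\lambda_d^*\le\tfrac12$, i.e. $2^{d+1}(2-d)\le d+4$, which holds for every $d\ge 1$ (trivially for $d\ge 2$ and by direct check at $d=1$). Theorem~\ref{thm6} then yields $\poa_0({\sf WSLB}(\{t^d\}))\ge\max_{\lambda}g_d(\lambda)=P(d)$, closing the gap.

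Finally, for the asymptotics I would estimate $P(d)$ factorwise: taking logarithms gives $\frac{(2^{d+1}-1)^{d+1}}{(2^d-1)^d}=2^{2d+1}(1+o(1))$, while $\frac{d^d}{(d+1)^{d+1}}=\frac{1}{d+1}\big(1-\tfrac{1}{d+1}\big)^d=\Theta(1/d)$, so $P(d)=2^{-d}\cdot 2^{2d+1}\cdot\Theta(1/d)=\Theta(2^d/d)$. I expect the only genuinely delicate points to be (a) justifying $[x]_{0,f}=x/2$ \emph{uniformly} over all non-constant $f\in\mathcal{P}(d)$, since this is what makes the denominator in the mediant identity $f$-independent and lets the reduction to monomials go through, and (b) verifying the hypothesis $\lambda_d^*\le\tfrac12$ required by Theorem~\ref{thm6}; the remaining optimization and asymptotic estimates are routine.
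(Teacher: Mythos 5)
Your proposal is correct and follows essentially the same route as the paper: substitute $[x]_{0,f}=x/2$, reduce to monomials via the mediant inequality, identify the worst degree as $d$, optimize $g_d$ over $\lambda$, verify $\lambda_d^*\le 1/2$ to invoke Theorem~\ref{thm6}, and estimate the asymptotics. The one place you genuinely diverge is in showing that degree $d$ dominates: the paper compares the optimal values $g_i(\lambda_i^*)$ via a somewhat laborious ratio computation, whereas you prove the pointwise inequality $g_{h+1}(\lambda)\ge g_h(\lambda)$ on $(0,1)$, which reduces to the identity $(2^{h+1}-1)\lambda(1-\lambda)\ge 0$ after clearing denominators --- a cleaner and more elementary argument for that step, with the rest of the computations (critical point, closed form, $\lambda_d^*\le 1/2$, asymptotics) matching the paper's.
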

In Table \ref{figura1}, we show a comparison between the cases of general and identical resources with respect to the price of anarchy for games with polynomial latency functions.
\begin{figure}[!t]
\begin{center}\footnotesize
\begin{tabular}{|c|c|c||c|c|c|}
  \hline
  $d$ & Identical& General & $d$ & Identical & General\\\hline
  1 & 1.125 & 2.618 & 6 & 7.544 & 14,099\\
  2 & 1.412 & 9.909 & 7 & 12.866 & 118,926\\
  3 & 1.946 & 47.82 & 8 & 22.478 & 1,101,126\\
  4 & 2.895 & 277 & $\vdots$ & $\vdots$ & $\vdots$\\
  5 & 4.571 & 1,858 & $\infty$ & $\Theta\left(\frac{2^d}{d}\right)$ & $\left(\Theta\left(\frac{d}{\log d}\right)\right)^{d+1}$\\
  \hline
\end{tabular}

\caption{The price of anarchy of weighted symmetric load balancing games
with polynomial latency functions of maximum degree $d$: a comparison
between the cases of identical and general
resources.}\label{figura1}
\end{center}
\end{figure}
As in Corollary \ref{thmpoly}, the general upper bound provided in Theorem \ref{thm5} is well-suited to derive (possibly non-tight) upper bounds on the $\epsilon$-approximate price of anarchy of symmetric polynomial load balancing games with identical resources for any $\epsilon>0$. 
\subsection{Lower Bounds for Exact One-Round Walks}
The following construction gives a class of lower bounds for exact one-round walks generated by selfish/cooperative players in unweighted load balancing games with identical resources having latency function $f$. Fix $n\in\mathbb{N}$ and a sequence of integers $1=o_1\leq o_2\leq \ldots\leq  o_n$.  Let $E=E_0\supset E_1\supset E_2 \supset\ldots \supset E_n\supset E_{n+1}=\emptyset$ be a sequence of sets of resources such that $(|E_{i-1}|-|E_{i}|) o_i=|E_{i}|$ (observe that such a sequence exists). For any $i\in [n]$, we have $|E_i|$ players of type $i$ whose set of strategies is $E_{i-1}$. Suppose that players enter the game in non-decreasing order with respect to their type. One can easily prove that the strategy profile ${\bm \sigma}$ in which each player of type $i$ selects a different resource $e\in E_i$ is a possible outcome for an exact one-round walk generated by selfish/cooperative players. Consider the strategy profile in which, for any resource $e\in E_{i-1}\setminus E_i$, there are exactly $o_i$ players of type $i$ selecting $e$. We get
\begin{equation}\label{equalid}
{\sf CR}^s_0(\{f\})\geq \frac{{\sf SUM}({\bm\sigma})}{{\sf SUM}({\bm\sigma}^*)}=\frac{\sum_{i=1}^n(|E_i|-|E_{i+1}|)if(i)}{\sum_{i=1}^n(|E_{i-1}|-|E_i|)o_if(o_i)}.
\end{equation}
For linear latency functions, by using $n=10^{13}$ and $o_i=\left\lfloor \frac{ 44411}{100000}i+1+\left\lfloor\frac{\sqrt{i}}{7}\right\rfloor\right\rfloor$ in (\ref{equalid}), we get a lower bound of at least $4.0009$ which improves the currently known lower bound of 4 given in \cite{CFKKM11}. We conjecture that a tight class of lower bounding instances for linear and more general polynomial latency functions is given by the union of all the instances described above, over all values of $n\in\mathbb{N}$ and all sequences $(o_i)_{i\in [n]}$.

\section{Open Problems and Research Directions}
We have investigated how the combinatorial structure of the players' strategy space impacts on the efficiency of some decentralized solutions in congestion games by focusing on the simplest possible situation: that of singleton strategies. All of our negative results clearly carry over to more general structures, such as in matroid congestion games and in network congestion games.

Our work leaves two main open problems. The first is to understand whether better performance are possible for approximate one-round walks in weighted symmetric load balancing games (we conjecture this is not the case), while the second is to give upper bounds on the performance of one-round walks in weighted and unweighted load balancing games with identical resources. Relatively to further research directions, we believe that the modus-operandi considered in this work can be efficiently used to find tight lower bounds in several variants of congestion games or scheduling games, and with respect to different solution concepts. For instance, after the appearance of the conference version of this work, similar ideas have been efficiently reused in \cite{BMV18,CGV17,V18,BiloV20,BMMV20,BMPV20,BV20sagt} to derive tight lower bounds on the efficiency of some variants of congestion games and load balancing games. 

\bibliographystyle{spmpsci}      
\bibliography{Bibliography}

\begin{thebibliography}{10}
\providecommand{\url}[1]{{#1}}
\providecommand{\urlprefix}{URL }
\expandafter\ifx\csname urlstyle\endcsname\relax
  \providecommand{\doi}[1]{DOI~\discretionary{}{}{}#1}\else
  \providecommand{\doi}{DOI~\discretionary{}{}{}\begingroup
  \urlstyle{rm}\Url}\fi

\bibitem{ARV08}
Ackermann, H., R{\"o}glin, H., V{\"o}cking, B.: On the impact of combinatorial
  structure on congestion games.
\newblock Journal of ACM \textbf{55}(6) (2008)

\bibitem{ARV09}
Ackermann, H., R{\"o}glin, H., V{\"o}cking, B.: Pure {Nash} equilibria in
  player-specific and weighted congestion games.
\newblock Theoretical Computer Science \textbf{410}(17), 1552--1563 (2009)

\bibitem{ADGMS11}
Aland, S., Dumrauf, D., Gairing, M., Monien, B., Schoppmann, F.: Exact price of
  anarchy for polynomial congestion games.
\newblock SIAM Journal on Computing \textbf{40}(5), 1211--1233 (2011)

\bibitem{AAE05}
Awerbuch, B., Azar, Y., Epstein, A.: The price of routing unsplittable flow.
\newblock SIAM Journal on Computing \textbf{42}(1), 160--177 (2013)

\bibitem{AAGKKV95}
Awerbuch, B., Azar, Y., Grove, E.F., Kao, M.Y., Krishnan, P., Vitter, J.S.:
  Load balancing in the $l_p$ norm.
\newblock In: Proceedings of the 36th Annual Symposium on Foundations of
  Computer Science (FOCS), pp. 383--391 (1995)

\bibitem{BMPV20}
Benita, F., Bil{\`{o}}, V., Monnot, B., Piliouras, G., Vinci, C.: Data-driven
  models of selfish routing: Why price of anarchy does depend on network
  topology.
\newblock In: Web and Internet Economics - 16th International Conference,
  {WINE}, Proceedings (2020)

\bibitem{BGR10}
Bhawalkar, K., Gairing, M., Roughgarden, T.: Weighted congestion games: price
  of anarchy, universal worst-case examples, and tightness.
\newblock ACM Transactions on Economics and Computation \textbf{2}(4), 1--23
  (2014)

\bibitem{B12}
Bil\`o, V.: A unifying tool for bounding the quality of non-cooperative
  solutions in weighted congestion games.
\newblock Theory of Computing Systems \textbf{62}(5), 1288--1317 (2018)

\bibitem{BFFM09}
Bil\`o, V., Fanelli, A., Flammini, M., Moscardelli, L.: Performances of
  one-round walks in linear congestion games.
\newblock Theory of Computing Systems \textbf{49}(1), 24--45 (2011)

\bibitem{BMMV20}
Bil{\`{o}}, V., Monaco, G., Moscardelli, L., Vinci, C.: Nash social welfare in
  selfish and online load balancing.
\newblock In: Web and Internet Economics - 16th International Conference,
  {WINE}, Proceedings (2020)

\bibitem{BMV18}
Bil\`o, V., Moscardelli, L., Vinci, C.: Uniform mixed equilibria in network
  congestion games with link failures.
\newblock In: Proceedings of the 45th International Colloquium on Automata,
  Languages and Programming ({ICALP}), pp. 146:1--146:14 (2018)

\bibitem{BV17}
Bil{\`o}, V., Vinci, C.: On the impact of singleton strategies in congestion
  games.
\newblock In: 25th Annual European Symposium on Algorithms, {ESA}, pp.
  17:1--17:14 (2017)

\bibitem{BV16}
Bil{\`{o}}, V., Vinci, C.: Dynamic taxes for polynomial congestion games.
\newblock {ACM} Transantions on Economics and Computation \textbf{7}(3),
  15:1--15:36 (2019)

\bibitem{BV20sagt}
Bil{\`{o}}, V., Vinci, C.: Congestion games with priority-based scheduling.
\newblock In: Algorithmic Game Theory - 13th International Symposium, {SAGT}
  2020, Proceedings, pp. 67--82 (2020)

\bibitem{BiloV20}
Bil{\`{o}}, V., Vinci, C.: The price of anarchy of affine congestion games with
  similar strategies.
\newblock Theor. Comput. Sci. \textbf{806}, 641--654 (2020)

\bibitem{C08}
Caragiannis, I.: Better bounds for online load balancing on unrelated machines.
\newblock In: Proceedings of the {ACM-SIAM} Symposium on Discrete Algorithms
  (SODA), pp. 972--981 (2008)

\bibitem{C13}
Caragiannis, I.: Efficient coordination mechanisms for unrelated machine
  scheduling.
\newblock Algorithmica \textbf{66}(3), 512--540 (2013)

\bibitem{CFGS11}
Caragiannis, I., Fanelli, A., Gravin, N., Skopalik, A.: Efficient computation
  of approximate pure nash equilibria in congestion games.
\newblock In: Proceedings of the 52nd Annual Symposium on Foundations of
  Computer Science (FOCS), pp. 532--541 (2011)

\bibitem{Caragiannis15}
Caragiannis, I., Fanelli, A., Gravin, N., Skopalik, A.: Approximate pure {Nash}
  equilibria in weighted congestion games: existence, efficient computation and
  structure.
\newblock ACM Transactions on Economics and Computation \textbf{3}(1) (2015)

\bibitem{CFKKM11}
Caragiannis, I., Flammini, M., Kaklamanis, C., Kanellopoulos, P., Moscardelli,
  L.: Tight bounds for selfish and greedy load balancing.
\newblock Algorithmica \textbf{61}(3), 606--637 (2011)

\bibitem{CGV17}
Caragiannis, I., Gkatzelis, V., Vinci, C.: Coordination mechanisms,
  cost-sharing, and approximation algorithms for scheduling.
\newblock In: Web and Internet Economics - 13th International Conference,
  {WINE}, pp. 74--87 (2017)

\bibitem{CK05}
Christodoulou, G., Koutsoupias, E.: The price of anarchy of finite congestion
  games.
\newblock In: Proceedings of the 37th Annual ACM Symposium on Theory of
  Computing (STOC), pp. 67--73 (2005)

\bibitem{CKS11}
Christodoulou, G., Koutsoupias, E., Spirakis, P.G.: On the performance of
  approximate equilibria in congestion games.
\newblock Algorithmica \textbf{61}(1), 116--140 (2011)

\bibitem{CMS12}
Christodoulou, G., Mirrokni, V.S., Sidiropoulos, A.: Convergence and
  approximation in potential games.
\newblock Theoretical Computer Science \textbf{438}, 13--27 (2012)

\bibitem{CddU15}
Correa, J., de~Jong, J., de~Keijzer, B., Uetz, M.: The curse of sequentiality
  in routing games.
\newblock In: Proceedings of the 11th International Conference on Web and
  Internet Economics (WINE), \emph{LNCS}, vol. 9470, pp. 258--271 (2015)

\bibitem{FPT04}
Fabrikant, A., Papadimitriou, C.H., Talwar, K.: The complexity of pure {Nash}
  equilibria.
\newblock In: Proceedings of the 36th Annual ACM Symposium on Theory of
  Computing (STOC), pp. 604--612 (2004)

\bibitem{FGKS17}
Feldotto, M., Gairing, M., Kotsialou, G., A., S.: Computing approximate pure
  nash equilibria in shapley value weighted congestion games.
\newblock In: Proceedings of the 13th International Conference on Web and
  Internet Economics (WINE), pp. 191--204 (2017)

\bibitem{F10}
Fotakis, D.: Stackelberg strategies for atomic congestion games.
\newblock Theor. Comp. Sys. \textbf{47}(1), 218--249 (2010)

\bibitem{FKS05}
Fotakis, D., Kontogiannis, S., Spirakis, P.: Selfish unsplittable flows.
\newblock Theoretical Computer Science \textbf{348}, 226--239 (2005)

\bibitem{GLMM06}
Gairing, M., L{\"u}cking, T., Mavronicolas, M., Monien, B.: The price of
  anarchy for polynomial social cost.
\newblock Theoretical Computer Science \textbf{369}(1--3), 116--135 (2006)

\bibitem{GLMMR08}
Gairing, M., L{\" u}cking, T., Mavronicolas, M., Monien, B., Rode, M.: {Nash}
  equilibria in discrete routing games with convex latency functions.
\newblock J. Comput. Syst. Sci. \textbf{74}(7), 1199--1225 (2008)

\bibitem{GS07}
Gairing, M., Schoppmann, F.: Total latency in singleton congestion games.
\newblock In: Proceedings of the Third International Workshop on Internet and
  Network Economics (WINE), \emph{LNCS}, vol. 4858, pp. 381--387 (2007)

\bibitem{Gian18}
Giannakopoulos, Y., Noarov, G., Schulz, A.S.: Computing approximate equilibria
  in weighted congestion games via best-responses.
\newblock Mathematics of Operations Research  (2021).
\newblock To appear

\bibitem{HK12}
Harks, T., Klimm, M.: On the existence of pure {Nash} equilibria in weighted
  congestion games.
\newblock Mathematics of Operations Research \textbf{37}(3), 419--436 (2012)

\bibitem{dKU16}
de~Jong, J., Klimm, M., Uetz, M.: Efficiency of equilibria in uniform matroid
  congestion games.
\newblock In: Proceedings of the 9th International Symposium on Algorithmic
  Game Theory (SAGT), \emph{LNCS}, vol. 9928, pp. 105--116 (2016)

\bibitem{KleerS17}
Kleer, P., Sch{\"{a}}fer, G.: Potential function minimizers of combinatorial
  congestion games: Efficiency and computation.
\newblock In: Proceedings of the 2017 {ACM} Conference on Economics and
  Computation (EC), pp. 223--240 (2017)

\bibitem{KleerS21}
Kleer, P., Sch{\"{a}}fer, G.: Computation and efficiency of potential function
  minimizers of combinatorial congestion games.
\newblock Math. Program. \textbf{190}(1), 523--560 (2021)

\bibitem{Klimm19}
Klimm, M., Schmand, D., T{\"{o}}nnis, A.: The online best reply algorithm for
  resource allocation problems.
\newblock In: Algorithmic Game Theory - 12th International Symposium, {SAGT},
  Proceedings, pp. 200--215 (2019)

\bibitem{KP99}
Koutsoupias, E., Papadimitriou, C.: Worst-case equilibria.
\newblock In: Proceedings of the 16th Annual Conference on Theoretical Aspects
  of Computer Science, STACS, pp. 404--413 (1999)

\bibitem{LMMR08}
L{\"u}cking, T., Mavronicolas, M., Monien, B., Rode, M.: A new model for
  selfish routing.
\newblock Theoretical Computer Science \textbf{406}(3), 187--2006 (2008)

\bibitem{N50}
{Nash}, J.F.: Equilibrium points in $n$-person games.
\newblock Proceedings of the National Academy of Science \textbf{36}(1), 48--49
  (1950)

\bibitem{PaccagnanCFM21}
Paccagnan, D., Chandan, R., Ferguson, B.L., Marden, J.R.: Optimal taxes in
  atomic congestion games.
\newblock {ACM} Trans. Economics and Comput. \textbf{9}(3), 19:1--19:33 (2021)

\bibitem{R73}
Rosenthal, R.W.: A class of games possessing pure-strategy {Nash} equilibria.
\newblock International Journal of Game Theory \textbf{2}, 65--67 (1973)

\bibitem{R15}
Roughgarden, T.: Intrinsic robustness of the price of anarchy.
\newblock J. {ACM} \textbf{62}(5), 32:1--32:42 (2015)

\bibitem{STZ07}
Suri, S., T{\'o}th, C., Zhou, Y.: Selfish load balancing and atomic congestion
  games.
\newblock Algorithmica \textbf{47}(1), 79--96 (2007)

\bibitem{VijayalakshmiS20}
Vijayalakshmi, V.R., Skopalik, A.: Improving approximate pure nash equilibria
  in congestion games.
\newblock In: Web and Internet Economics - 16th International Conference
  (WINE), pp. 280--294 (2020)

\bibitem{V18}
Vinci, C.: Non-atomic one-round walks in congestion games.
\newblock Theor. Comput. Sci. \textbf{764}, 61--79 (2019)

\end{thebibliography}

\appendix
\section{Connection with The Smoothness Framework}\label{frame}
The smoothness analysis \cite{R15} has been introduced to give almost tight upper bounds on the price of anarchy of several games, and applies successfully to the case of congestion games. In particular, an atomic congestion game $\CG$ is {\em $(\lambda,\mu)$-smooth} if, for any pair of strategy profiles $\sg,\sg^*$ of $\CG$, the following condition holds:
\begin{align}
\sum_{i\in \N}cost_i(\sg_{-i},\sg_i^*)\leq \lambda\cdot {\sf SUM}(\sg^*)+\mu\cdot {\sf SUM}(\sg).
\end{align}
A class $\mathcal{G}$ of congestions games is $(\lambda,\mu)$-smooth if any game $\CG\in \mathcal{G}$ is $(\lambda,\mu)$-smooth. The following results give a characterization of the  price of anarchy of atomic congestion games with respect to the social function ${\sf SUM}$.
\begin{theorem}[\cite{R15}]\label{thmsmoot}
(i) Given a $(\lambda,\mu)$-smooth class $\mathcal{G}$ of (weighted or unweighted) congestion games with $\mu\in [0,1]$ and $\lambda\geq 0$, then $\poa(\mathcal{G})\leq \frac{\lambda}{1-\mu}$. (ii) If $\mathcal{G}={\sf U}(\mathcal{C})$ for some class $\mathcal{C}$ of latency functions, we have a tight characterization of the price of anarchy: 
\begin{align}
&\poa(\mathcal{G})\nonumber\\
&=\inf\left\{\frac{\lambda}{1-\mu}:\mathcal{G}\text{ is $(\lambda,\mu)$-smooth}\right\}\nonumber\\
&=\inf_{\mu\in [0,1),\lambda\geq 0}\left\{\frac{\lambda}{1-\mu}:of(k+1)\leq \lambda of(o)+\mu kf(k), \forall f\in\mathcal{C},k\in\Z,o\in\Z\right\}.\label{formsmoothunw}
\end{align}
\end{theorem}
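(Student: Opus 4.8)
The plan is to prove the two parts separately, treating part~(i) as the familiar one-line smoothness argument and part~(ii) as the genuinely substantive claim, which I would reduce to the machinery already developed in Lemmas~\ref{lemupp2} and~\ref{lemrou2}.

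\emph{Part (i).} Fix a game $\CG\in\mathcal{G}$, let $\sg$ be a pure Nash equilibrium and $\sg^*$ a social optimum. The equilibrium condition gives $cost_i(\sg)\le cost_i(\sg_{-i},\sg_i^*)$ for every player $i$; weighting by $w_i$ and summing yields ${\sf SUM}(\sg)\le\sum_{i\in\N}w_i\,cost_i(\sg_{-i},\sg_i^*)$. Applying the $(\lambda,\mu)$-smoothness inequality to the right-hand side gives ${\sf SUM}(\sg)\le\lambda\,{\sf SUM}(\sg^*)+\mu\,{\sf SUM}(\sg)$, and since $\mu<1$ I may rearrange to ${\sf SUM}(\sg)/{\sf SUM}(\sg^*)\le\lambda/(1-\mu)$. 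Taking the supremum over equilibria and over $\CG\in\mathcal{G}$ proves $\poa(\mathcal{G})\le\lambda/(1-\mu)$.

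\emph{Part (ii).} Here I would establish the chain $\poa(\mathcal{G})\le\inf_{\mathrm{global}}\frac{\lambda}{1-\mu}\le\inf_{\mathrm{local}}\frac{\lambda}{1-\mu}\le\poa(\mathcal{G})$, where \emph{global} ranges over smooth pairs and \emph{local} over pairs satisfying the per-resource inequality $of(k+1)\le\lambda of(o)+\mu kf(k)$. The first inequality is part~(i). For the second, I would show that any locally feasible $(\lambda,\mu)$ is globally smooth: in an unweighted game a unilateral deviation raises each resource congestion by at most one, so monotonicity gives $cost_i(\sg_{-i},\sg^*_i)\le\sum_{e\in\sigma^*_i}\ell_e(k_e(\sg)+1)$; summing over $i$ turns the left side of the smoothness inequality into $\sum_{e\in E}o_e\ell_e(k_e+1)$, and summing the per-resource inequality over $e$ (with $f=\ell_e$, $k=k_e$, $o=o_e$) bounds this by $\lambda\,{\sf SUM}(\sg^*)+\mu\,{\sf SUM}(\sg)$.

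The crux is the last inequality $\inf_{\mathrm{local}}\frac{\lambda}{1-\mu}\le\poa(\mathcal{G})$, i.e.\ tightness. The key observation is the change of variables $x:=1/(1-\mu)\ge 1$ and $\lambda=\gamma(1-\mu)$, under which the per-resource constraint becomes exactly the dual constraint $\gamma\,of(o)\ge kf(k)+x\,\beta_{\sf U}(\poa_0,k,o,f)$ of the dual program underlying Lemma~\ref{lemupp2} (the unweighted analogue of \textsf{DLP} from the proof of Lemma~\ref{lemupp}), with $\lambda/(1-\mu)=\gamma$. Hence $\inf_{\mathrm{local}}\frac{\lambda}{1-\mu}=\gamma_{\sf U}(\poa_0,{\sf U}(\mathcal{C}))$, and the required lower bound is precisely the statement $\poa_0({\sf U}(\mathcal{C}))\ge\gamma_{\sf U}(\poa_0,{\sf U}(\mathcal{C}))$. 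When $\mathcal{C}$ is closed under ordinate scaling this is delivered verbatim by the worst-case construction of Theorem~\ref{thm3}, with the critical configurations extracted from Lemma~\ref{lemrou2}; for an arbitrary class $\mathcal{C}$ one replaces the ordinate-rescaled latencies by many unscaled copies of the resources realizing each dual configuration, which is the LP-duality argument of \cite{R15}. I expect this tightness step to be the main obstacle, since it is the only place requiring an explicit family of instances rather than a direct inequality manipulation.
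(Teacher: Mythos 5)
This theorem is quoted from \cite{R15}; the paper does not prove it, and its only commentary is the remark immediately following the statement, namely that the substitution $\gamma=\frac{\lambda}{1-\mu}$, $x=\frac{1}{1-\mu}$ turns the smoothness constraint into the dual constraints of Lemmas~\ref{lemupp}/\ref{lemupp2}, and that the tightness in \cite{R15} rests on a technical lemma analogous to Lemma~\ref{lemrou2}. Your proposal reconstructs exactly this: part (i) is the standard one-line argument, the ``local implies global'' step is correct (in unweighted games a deviation raises each congestion by at most one), and your change of variables correctly identifies $\inf_{\mathrm{local}}\frac{\lambda}{1-\mu}$ with $\gamma_{\sf U}(\poa_0,{\sf U}(\mathcal{C}))$, so the whole content collapses to $\poa_0({\sf U}(\mathcal{C}))\geq\gamma_{\sf U}(\poa_0,{\sf U}(\mathcal{C}))$. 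Two caveats. First, for the weighted half of part (i) the smoothness inequality as literally written in the paper omits the weights $w_i$ in front of $cost_i(\sg_{-i},\sg_i^*)$; your argument silently reads them in, which is the intended convention but should be stated. Second, and more substantively, the tightness step you actually carry out relies on Theorem~\ref{thm3}, whose construction multiplies the latencies by the factors $\theta_1^{i-1}$ and therefore genuinely needs $\mathcal{C}$ closed under ordinate scaling, whereas the theorem claims tightness for an arbitrary class $\mathcal{C}$; the fallback you name (replicating unscaled copies of the two critical resource configurations from Lemma~\ref{lemrou2} in proportions $\alpha_1:\alpha_2$, with general non-singleton strategies arranged cyclically) is the right construction and needs no closure assumption, but you leave it as a pointer to \cite{R15} rather than executing it. Since the statement is an imported result, this is an acceptable reduction, but it is the one step of your write-up that is a sketch rather than a proof.
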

One can easily see that, by setting $\gamma:=\frac{\lambda}{1-\mu}$ and $x:=\frac{1}{1-\mu}$ in the upper bound of Theorem~\ref{thmsmoot}, one can reobtain the upper bounds for (exact) Nash equilibria provided in Lemma~\ref{lemupp} and Lemma~\ref{lemupp2} for weighted and unweighted games, respectively. Furthermore, the approach considered in \cite{R15} to show the tight bound provided in (\ref{formsmoothunw}), resorts to a technical lemma similar to Lemma~\ref{lemrou2}. 

A tight characterization of the price of anarchy for weighted congestion games holds under mild assumptions on the considered latency functions. 
\begin{theorem}[\cite{BGR10}]\label{thmsmoot2}
Given a class of weighted congestion games $\mathcal{G}={\sf W}(\mathcal{C})$, where $\mathcal{C}$ is closed under abscissa scaling, the following characterization of the price of anarchy holds:
\begin{align}
&\poa({\sf W}(\mathcal{C}))\nonumber\\
&=\inf\left\{\frac{\lambda}{1-\mu}:{\sf W}(\mathcal{C})\text{ is $(\lambda,\mu)$-smooth}\right\}\nonumber\\
&=\inf_{\mu\in [0,1),\lambda\geq 0}\left\{\frac{\lambda}{1-\mu}:of(k+o)\leq \lambda of(o)+\mu kf(k), \forall f\in\mathcal{C},k\geq 0,o\geq 0\right\}\label{formsmoothw}
\end{align}
\end{theorem}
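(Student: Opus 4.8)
The plan is to prove both claimed equalities by sandwiching the price of anarchy between the two infima and then closing the chain with a matching lower-bound construction. Write $P:=\poa({\sf W}(\mathcal{C}))$, let $S$ denote the first infimum (over all $(\lambda,\mu)$ for which ${\sf W}(\mathcal{C})$ is $(\lambda,\mu)$-smooth), and let $R$ denote the second infimum (over $(\lambda,\mu)$ satisfying $of(k+o)\leq\lambda of(o)+\mu kf(k)$ for all $f\in\mathcal{C}$ and $k,o\geq 0$). I would establish $P\le S\le R$ and, separately, $R\le P$, which together force $P=S=R$.

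The inequality $P\leq S$ is exactly part (i) of Theorem \ref{thmsmoot}: for a pure Nash equilibrium $\sg$ one has $cost_i(\sg)\leq cost_i(\sg_{-i},\sigma_i^*)$ for every $i$, so multiplying by $w_i$, summing, and invoking smoothness gives ${\sf SUM}(\sg)\leq\lambda{\sf SUM}(\sg^*)+\mu{\sf SUM}(\sg)$, i.e.\ ${\sf SUM}(\sg)/{\sf SUM}(\sg^*)\le\lambda/(1-\mu)$. For $S\le R$ I would show that the per-resource inequality implies game smoothness. Fixing $\sg,\sg^*$, the congestion on a resource $e\in\sigma_i^*$ after $i$ deviates is at most $k_e(\sg)+w_i$, and since $w_i\le o_e$ whenever $e\in\sigma_i^*$ and $\ell_e$ is non-decreasing, summing $w_i\ell_e(k_e+w_i)$ over the players optimally using $e$ yields $\sum_{i:e\in\sigma_i^*}w_i\ell_e(k_e+w_i)\le o_e\ell_e(k_e+o_e)$. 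Hence $\sum_i w_i\,cost_i(\sg_{-i},\sigma_i^*)\le\sum_e o_e\ell_e(k_e+o_e)$, and applying the per-resource inequality term by term produces $\lambda{\sf SUM}(\sg^*)+\mu{\sf SUM}(\sg)$; thus every per-resource-feasible pair is game-smooth, so its feasible set is contained in that defining $S$ and therefore $R\ge S$.

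The heart of the argument is the reverse inequality $R\le P$. First I would reparametrize so as to identify $R$ with the quantity $\gamma_{\sf W}(\poa_0,{\sf W}(\mathcal{C}))$ of Lemma \ref{lemupp}: setting $\gamma=\lambda/(1-\mu)$ and $x=1/(1-\mu)$ (so $x\geq 1$, $\mu=1-1/x$, $\lambda=\gamma/x$), the inequality $of(k+o)\leq\lambda of(o)+\mu kf(k)$ is equivalent to $\gamma\,of(o)\ge kf(k)+x\,\beta_{\sf W}(\poa_0,k,o,f)$, i.e.\ to $\gamma\ge\gamma_{\sf W}(\poa_0,x,k,o,f)$; taking the supremum over $f,k,o$ and the infimum over $x\ge1$ gives $R=\gamma_{\sf W}(\poa_0,{\sf W}(\mathcal{C}))$. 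It then remains to exhibit, for every $M<R$, a game in ${\sf W}(\mathcal{C})$ with price of anarchy exceeding $M$: Lemma \ref{lemrou1} extracts worst-case parameters $(k,o,f)$ (Case 1) or $(k_1,k_2,o_1,o_2,f_1,f_2)$ (Case 2) of value above $M$, and a layered geometric instance in the spirit of the $\LB_{s,n}$ construction of Theorem \ref{thm1} realizes them, so that $P\ge M$ for all $M<R$ and hence $R\le P$.

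The main obstacle is that the construction of Theorem \ref{thm1} uses both ordinate and abscissa scaling (the $\theta$-factors inside the level latencies $g_i$), whereas the present hypothesis is closure under abscissa scaling only. I would therefore replace the ordinate-scaling step by the observation that, for weighted players and non-decreasing latencies, the equilibrium conditions reduce to comparisons of abscissa-scaled congestions: with a level-$\ell$ latency of the form $f(c_\ell x)$ and geometrically chosen weights, a deviation is non-improving precisely when the scaled arguments satisfy a linear inequality equivalent to $\beta_{\sf W}(\poa_0,k,o,f)\ge 0$, so the multiplicative consistency needed across levels can be carried by the abscissa scales $c_\ell$ and the weight geometry alone. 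The delicate point — and where I expect the real work to lie — is controlling the resulting drift of congestions across levels so that ${\sf SUM}(\sg)/{\sf SUM}(\sg^*)$ still converges to $R$ as the number of levels grows; this is exactly the respect in which the abscissa-only construction of \cite{BGR10} is more delicate than the two-scaling construction of Theorem \ref{thm1}.
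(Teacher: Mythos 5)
First, a framing remark: the paper does not prove Theorem~\ref{thmsmoot2} at all — it is quoted from \cite{BGR10}, with the paper only noting that the proof in \cite{BGR10} ``resorts to a technical lemma similar to Lemma~\ref{lemrou1}.'' Your skeleton is consistent with that: the chain $P\le S\le R$ is correct (the per-resource-to-game-smoothness step, using $w_i\le o_e$ for $e\in\sigma_i^*$ and monotonicity to get $\sum_{i:e\in\sigma_i^*}w_i\ell_e(k_e+w_i)\le o_e\ell_e(k_e+o_e)$, is exactly right), and your reparametrization $x=1/(1-\mu)$, $\gamma=\lambda/(1-\mu)$ correctly identifies $R$ with $\gamma_{\sf W}(\poa_0,{\sf W}(\mathcal{C}))$ from Lemma~\ref{lemupp}, so that Lemma~\ref{lemrou1} supplies the worst-case parameters.

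The genuine gap is in the lower-bound construction, and you have flagged it yourself without closing it: you propose to realize the parameters of Lemma~\ref{lemrou1} by an abscissa-only variant of the $\LB_{s,n}$ load balancing graph, and then concede that ``the real work'' — making the cross-level equilibrium conditions and the geometric series in ${\sf SUM}(\sg)/{\sf SUM}(\sg^*)$ hold without the ordinate prefactors $\theta_1^{i-1}$ — remains to be done. This is not a routine verification: in a singleton instance with level latencies $f(c_\ell x)$, abscissa scaling changes only the \emph{argument} of $f$, not its scale, so the tightness conditions of the form $f_1(k_1)=(1+\epsilon)\theta\, f_1(k_1+1)$ that drive both Lemma~\ref{equi_claim} and the telescoping cost computation cannot in general be met for arbitrary $f\in\mathcal{C}$; indeed the paper's own Theorem~\ref{thm1} needs \emph{both} closure properties precisely to build load balancing witnesses, and its Significance discussion makes clear that dropping one of the two scalings for singleton instances is delicate (open, in the ordinate-only direction). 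The natural repair, which your plan misses, is that Theorem~\ref{thmsmoot2} concerns $\poa({\sf W}(\mathcal{C}))$ — \emph{general} weighted congestion games — so strategies may contain several resources: bundling $m_\ell$ parallel copies of an abscissa-scaled resource into each level-$\ell$ strategy makes a player's cost contribution $m_\ell f(c_\ell x)$, i.e., it emulates ordinate scaling by positive integers, and suitable integer ratios $m_{\ell+1}/m_\ell$ approximate the needed factors $\theta$ to any precision. This is effectively how \cite{BGR10} dispense with ordinate scaling, and it is why their tightness statement holds for ${\sf W}(\mathcal{C})$ rather than ${\sf WLB}(\mathcal{C})$. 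As written, your proposal attempts a strictly stronger statement (singleton witnesses under abscissa scaling alone) and leaves its crux unproved, so the argument is incomplete at exactly the step that distinguishes this theorem from Theorem~\ref{thm1}.
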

The approach considered in \cite{BGR10} to show the tight bound provided in (\ref{formsmoothw}) resorts to a technical lemma similar to Lemma~\ref{lemrou1}. 
\section{Missing Proofs from Section~\ref{secweigh}}\label{missing}
\subsection{Proof of Lemma~\ref{lemupp}.}
We prove the lemma for ${\sf EM}=\poa_{\epsilon}$ by using the primal-dual approach  \cite{B12}, and we give a sketch for the other cases, since the proof is analogue. Let ${\sf CG}\in\mathcal{G}$, and let $\bm\sigma$ and $\bm\sigma^*$ be an $\epsilon$-approximate pure Nash equilibrium and an optimal strategy profile, respectively. 
We have that the maximum value of the following linear program in the variables $\alpha_e$'s is an upper bound on $\poa_\epsilon({\sf CG})$ (recall that $w_i$ is the weight of player $i$, $k_e$ and $o_e$ are the equilibrium and optimal congestion of resource $e$, respectively):\\
\begin{align}
{\sf LP1:}\quad \max\quad& \overbrace{\sum_{e\in E}\alpha_ek_e\ell_e(k_e)}^{{\sf SUM}(\bm\sigma)}\nonumber\\
s.t. \quad& \sum_{e\in \sigma_i}\alpha_e \ell_e(k_e)\leq (1+\epsilon)\sum_{e\in \sigma_i^*}\alpha_e \ell_e(k_e+w_i),\quad \forall i\in \N\label{const1}\\
&\overbrace{\sum_{e\in E}\alpha_eo_e\ell_e(o_e)}^{{\sf SUM}(\bm\sigma^*)}= 1\label{const1b}\\
&\alpha_e\geq 0,\quad\forall e\in E.\nonumber
\end{align}
Indeed, by setting $\alpha_e=1$ for any $e\in E$, we have that: (i) the objective function is the social cost at the equilibrium; (ii) the constraints in  (\ref{const1}) impose some relaxed $\epsilon$-approximate pure Nash equilibrium conditions (ensuring that each agent, at the equilibrium, does not get any benefit when deviating in favour of strategy $\sigma^*_i$); (iii) (\ref{const1b}) is the normalized optimal social cost (normalization is possible since there is some $o_e>0$, that implies $o_e\ell_e(o_e)>0$). 

By upper bounding all the $w_i$'s in (\ref{const1}) with $o_e$ (this is possible since $w_i\leq o_e$ for any $i\in N$ and $e\in E$ such that $e\in \sigma^*_i$) and by summing all the resulting constraints once scaled by a factor $w_i$, we obtain the following relaxation of ${\sf LP1}$:
\begin{align}
{\sf LP 2:}\quad \max\quad& \sum_{e\in E}\alpha_ek_e\ell_e(k_e)\nonumber\\
s.t. \quad &\sum_{e\in E}\alpha_e\beta_{\sf W}({\sf \poa_\epsilon},k_e,o_e,\ell_e)\geq 0\label{const2}\\
&\sum_{e\in E}\alpha_e o_e\ell_e(o_e)= 1\label{const2b}\\
&\alpha_e\geq 0,\quad\forall e\in E,\nonumber
\end{align}
where $\beta_{\sf W}({\sf \poa_\epsilon},k_e,o_e,\ell_e)$ is defined as in \eqref{def_alpha}. Indeed:
\begin{equation}
\sum_{i\in \N}w_i\sum_{e\in \sigma_i}\alpha_e\ell_e(k_e)=\sum_{e\in E}\sum_{i\in N:e\in \sigma_i}\alpha_ew_i\ell_e(k_e)=\sum_{e\in E}\alpha_ek_e\ell_e(k_e)\nonumber
\end{equation}
and 
\begin{align}
\sum_{i\in \N}w_i\sum_{e\in \sigma_i}\alpha_e\ell_e(k_e+w_i)
&\leq \sum_{i\in \N}w_i\sum_{e\in \sigma_i}\alpha_e\ell_e(k_e+o_e)\nonumber\\
&= \sum_{e\in E}\sum_{i\in N:e\in \sigma_i^*}\alpha_ew_i\ell_e(k_e+o_e)\nonumber\\
&=\sum_{e\in E}\alpha_eo_e\ell_e(k_e+o_e).\nonumber
\end{align}
Then, we get inequality $\sum_{e\in E}\alpha_e k_e\ell_e(k_e)\leq (1+\epsilon)\sum_{e\in E}\alpha_e o_e\ell_e(k_e+o_e)$, that is equivalent to constraint  \eqref{const2}. 
Consider the dual program of ${\sf LP2}$ in the dual variables $x$ and $\gamma$ (respectively associated to the dual constraints (\ref{const2}) and (\ref{const2b})):
\begin{align}
{\sf DLP:}\quad \min\quad & \gamma \nonumber\\
s.t.\quad & \gamma\cdot o_e\ell_e(o_e) \geq k_e\ell_e(k_e)+x\cdot \beta_{\sf W}({\sf PoA}_\epsilon,k_e,o_e,\ell_e),\quad \forall e\in E\label{constdual}\\
&x\geq 0,\gamma\in \R.\nonumber
\end{align}
In ${\sf DLP}$, one can assume that $x\geq 1$, otherwise, if $x<1$ and $o_e=0$ for some $e\in E$, the dual constraint in (\ref{constdual}) related to $e$ is not feasible. Now, one can assume that $o_e>0$, since the constraint (\ref{constdual}) corresponding to resource $e$ is always verified if $o_e=0$, thus we can remove such constraint. Let $\gamma^*$ be the optimal value of ${\sf DLP}$. By the Weak Duality Theorem, we get $\gamma^*\geq \poa_\epsilon(\mathcal{G})$. By (\ref{constdual}) and the optimality of $\gamma^*$, we get
\begin{equation}
\gamma^*=\min_{x\geq 1}\max_{e\in E}\gamma_{\sf W}(\poa_\epsilon,x,k_e,o_e,\ell_e)\leq \inf_{x\geq 1}\sup_{\substack{f\in\mathcal{C}(\mathcal{G}),\\k> 0, o>0}}\gamma_{\sf W}({\sf EM},x,k,o,f)\footnote{By exploiting the continuity of function $-kf(k)$ with respect to $k\geq 0$, we have that the supremum (appearing in the inequalities) can be equivalently taken over $k>0$ instead of $k\geq 0$, without decreasing its value.}=\gamma_{\sf W}(\poa_{\epsilon},\mathcal{G}),
\end{equation}
thus showing the claim for ${\sf EM}=\poa_{\epsilon}.$

For ${\sf EM}={\sf CR}^s_{\epsilon}$, one can use a similar proof as in the previous case. Let $\bm \tau=({\bm \sigma}^0,{\bm \sigma}^1,\ldots,{\bm \sigma}^n)$ be the $\epsilon$-approximate one-round walk involving selfish players that returns a strategy profile ${\bm\sigma}={\bm \sigma}^n$. We  consider the linear program {\sf LP1}, but with constraint 
\begin{equation}\label{oneroundineq}
\sum_{e\in \sigma_i}\alpha_e \ell_e(k_e({\bm \sigma }^i))\leq (1+\epsilon)\sum_{e\in \sigma_i^*}\alpha_e \ell_e(k_e({\bm \sigma}^{i-1})+w_i)
\end{equation}
 in place of (\ref{const1}), as constraint \eqref{oneroundineq} is necessarily satisfied by the greedy choice of each selfish player $i\in \N$ (when $\alpha_e=1$ for any $e\in E$); thus, by exploiting the same arguments as in the previous case, the maximum value of {\sf LP1} is an upper bound on ${\sf CR}^s_{\epsilon}(\CG)$. We have that 
\begin{align}
\sum_{i\in \N}w_i\sum_{e\in \sigma_i}\alpha_e \ell_e(k_e({\bm \sigma }^i))
&=\sum_{e\in E}\sum_{i:e\in \sigma_i}w_i\alpha_e \ell_e(k_e({\bm \sigma }^i))\nonumber \\
&=\sum_{e\in E}\sum_{i:e\in \sigma_i}\overbrace{\left(\int_{t=0}^{k_e}\chi_{[k_e({\bm \sigma }^{i-1}),k_e({\bm \sigma }^{i})]}(t)\text{dt}\right)}^{w_i}\alpha_e \ell_e(k_e({\bm \sigma }^i))\nonumber \\
& = \sum_{e\in E}\alpha_e\int_{t=0}^{k_e}\left(\sum_{i:e\in \sigma_i}\ell_e(k_e({\bm \sigma }^i))\chi_{[k_e({\bm \sigma }^{i-1}),k_e({\bm \sigma }^{i})]}(t)\right)\text{dt}\nonumber \\
& = \sum_{e\in E}\alpha_e\int_{t=0}^{k_e}\left(\sum_{i\in \N}\ell_e(k_e({\bm \sigma }^i))\chi_{[k_e({\bm \sigma }^{i-1}),k_e({\bm \sigma }^{i})]}(t)\right)\text{dt}\label{oneround_ineqq}\\
&\geq \sum_{e\in E}\alpha_e\int_{t=0}^{k_e}\left(\sum_{i\in \N}\ell_e(t)\chi_{[k_e({\bm \sigma }^{i-1}),k_e({\bm \sigma }^{i})]}(t)\right)\text{dt}\nonumber \\
&= \sum_{e\in E}\alpha_e \int_{t=0}^{k_e}\ell_e(t)\text{dt},\label{oneround_ineqq1}
\end{align}
where $\chi$ denotes the indicator function (i.e., $\chi_A(t)=1$ if $t\in A$, $\chi_A(t)=0$ otherwise), and \eqref{oneround_ineqq} holds since $\chi_{[k_e({\bm \sigma }^{i-1}),k_e({\bm \sigma }^{i})]}(t)$ is null if $e\notin \sigma_i$; furthermore, we have that 
\begin{align}
\sum_{i\in \N}w_i\sum_{e\in \sigma_i^*}\alpha_e \ell_e(k_e(\sg^{i-1})+w_i)&\leq  \sum_{i\in \N}w_i\sum_{e\in \sigma_i^*}\alpha_e \ell_e(k_e+o_e)=\sum_{e\in E}\alpha_e o_e\ell_e(k_e+o_e).\label{oneround_ineqq2}
\end{align}
Thus, by \eqref{oneroundineq}, \eqref{oneround_ineqq1}, and \eqref{oneround_ineqq2}, we get 
\begin{align*}
\sum_{e\in E}\alpha_e \int_{t=0}^{k_e}\ell_e(t)\text{dt}
&\leq \sum_{i\in \N}w_i\sum_{e\in \sigma_i}\alpha_e \ell_e(k_e({\bm \sigma }^i))\\
&\leq \sum_{i\in \N}w_i(1+\epsilon)\sum_{e\in \sigma_i^*}\alpha_e \ell_e(k_e(\sg^{i-1})+w_i)\\
&\leq \sum_{e\in E}\alpha_e (1+\epsilon)o_e\ell_e(k_e+o_e).
\end{align*}
Such inequality is equivalent to constraint (\ref{const2}), but with ${\sf CR}_\epsilon^s$ in place of $\poa_\epsilon$; thus, we can continue the analysis from a linear program analogue to {\sf LP2}, and by proceeding as in the case of the approximate price of anarchy, we can show the claim for $\EM={\sf CR}^s_{\epsilon}$. 

Finally, we consider the case $\EM={\sf CR}^c_{\epsilon}$. Let $\bm \tau=({\bm \sigma}^0,{\bm \sigma}^1,\ldots,{\bm \sigma}^n)$ be the $\epsilon$-approximate one-round walk involving cooperative players that returns a strategy profile ${\bm\sigma}={\bm \sigma}^n$. Let $\hat{\ell}_e$ denote the function such that $\hat{\ell}_e(x)=x\ell_e(x)$ for any $x\geq 0$. By hypothesis, $\hat{\ell}_e$ is a convex function for any $e\in E$. Again, we can consider the linear program {\sf LP1}, but with constraint 
\begin{equation}\label{coopineq}
\sum_{e\in E}\alpha_e(\hat{\ell}_e(k_e(\sg^i))-\hat{\ell}_e(k_e(\sg^{i-1})))\leq (1+\epsilon)\sum_{e\in E}\alpha_e(\hat{\ell}_e(k_e(\sg^{i-1}_{-i},\sigma_i^*))-\hat{\ell}_e(k_e(\sg^{i-1})))
\end{equation} 
in place of (\ref{const1}), as \eqref{coopineq} is necessarily satisfied by the greedy choice of each cooperative player $i\in \N$ (when $\alpha_e=1$ for any $e\in E$); thus, the maximum value of {\sf LP1} is an upper bound on ${\sf CR}^c_{\epsilon}(\CG)$. We have that
\begin{align}
&\sum_{e\in E}\alpha_e k_e\ell_e(k_e)\nonumber\\
&=\sum_{e\in E}\hat{\ell}_e(k_e)\nonumber\\
&=\sum_{e\in E}\alpha_e\sum_{i\in \N}(\hat{\ell}_e(k_e(\sg^i))-\hat{\ell}_e(k_e(\sg^{i-1})))\nonumber\\
&=\sum_{i\in \N}\sum_{e\in E}\alpha_e(\hat{\ell}_e(k_e(\sg^i))-\hat{\ell}_e(k_e(\sg^{i-1})))\nonumber\\
&\leq (1+\epsilon)\sum_{e\in E}\alpha_e(\hat{\ell}_e(k_e(\sg^{i-1}_{-i},\sigma_i^*))-\hat{\ell}_e(k_e(\sg^{i-1})))\label{coop2}\\
&\leq\sum_{i\in \N}(1+\epsilon)\sum_{e\in E}\alpha_e(\hat{\ell}_e(k_e(\sg^{i-1})+w_i\cdot \chi_{\sigma_i^*}(e))-\hat{\ell}_e(k_e(\sg^{i-1})))\nonumber\\
&=\sum_{i\in \N}(1+\epsilon)\sum_{e\in \sigma_i^*}\alpha_e(\hat{\ell}_e(k_e(\sg^{i-1})+w_i)-\hat{\ell}_e(k_e(\sg^{i-1})))\nonumber\\
&=\sum_{e\in E}\alpha_e\sum_{i\in \N:e\in \sigma^*_i}(1+\epsilon)\left(\hat{\ell}_e(k_e({\bm \sigma}^{i-1})+w_i)-\hat{\ell}_e(k_e({\bm \sigma}^{i-1}))\right)\nonumber\\
&\leq \sum_{e\in E}\alpha_e\sum_{i\in \N:e\in \sigma^*_i}(1+\epsilon)\left(\hat{\ell}_e(k_e({\bm \sigma})+w_i)-\hat{\ell}_e(k_e({\bm \sigma}))\right)\label{coopconv-}\\
&\leq \sum_{e\in E}\alpha_e\sum_{i\in \N:e\in \sigma^*_i}(1+\epsilon)\left(\hat{\ell}_e\left(k_e(\sg)+\sum_{h\in [i]:e\in\sigma_h^*}w_h\right)-\hat{\ell}_e\left(k_e(\sg)+\sum_{h\in[i-1]:e\in\sigma_h^*}w_h\right)\right)\label{coopconv}\\
&=\sum_{e\in E}\alpha_e(1+\epsilon)\left(\hat{\ell}_e\left(k_e(\sg)+\sum_{h\in \N:e\in\sigma_h^*}w_h\right)-\hat{\ell}_e\left(k_e(\sg)\right)\right)\label{coopconv+}\\
&=\sum_{e\in E}\alpha_e(1+\epsilon)\left((k_e+o_e)\ell_e(k_e+o_e)-k_e\ell_e(k_e)\right),\nonumber
\end{align}
where (\ref{coop2}) comes from (\ref{coopineq}), (\ref{coopconv-}) and (\ref{coopconv}) holds because of the convexity of the functions $\hat{\ell}_e$'s (as $F(x+y+z)-F(x+y)\geq F(y+z)-F(y)$ holds for any convex function $F$, and for any $x,y,z\in \RP$), and (\ref{coopconv+}) holds since the second sum in (\ref{coopconv}) is telescoping. Thus, inequality $$\sum_{e\in E}\alpha_e k_e\ell_e(k_e)\leq \sum_{e\in E}\alpha_e (1+\epsilon)((k_e+o_e)\ell_e(k_e+o_e)-k_e\ell_e(k_e))$$ holds. Again, this inequality is equivalent to constraint \eqref{const2}, but with ${\sf CR}_\epsilon^c$ in place of $\poa_\epsilon$; thus, we can continue the analysis from a linear program analogue to {\sf LP2}, and by proceeding as in the case of the approximate price of anarchy, we can show the claim for $\EM={\sf CR}^c_{\epsilon}$.
\subsection{Proof of Lemma~\ref{lemrou1}.}
Fix $M<\gamma_{\sf W}({\sf EM},\mathcal{G})$, where ${\sf EM}\in \{\poa_{\epsilon},{\sf CR}^s_\epsilon,{\sf CR}^c_\epsilon\}$ is the considered efficiency metric. 

First of all, we assume that, for any latency function $f\in\mathcal{C}(\mathcal{G})$ and for any $o>0$, there exists $k>0$ such that $\beta_{\sf W}({\sf EM},k',o,f)<0$ for any $k'>k$. Indeed, if this is not the case, there exist $o>0$, a latency function $f$, and a sufficiently large $k> 0$, such that Case 1  of the claim is verified by tuple $(k,o,f)$, where $f$ can be chosen as any non-constant function. 

Given ${\sf EM}\in \{{\sf CR}^s_\epsilon,{\sf CR}^c_\epsilon\}$, define
\begin{align*}
&U_{\geq }:=\left\{(k,o,f):k> 0, o>0, f\in\mathcal{C}(\mathcal{G}),\beta_{\sf W}({\sf EM},k,o,f)\geq 0\right\},\\
&U_{<}:=\left\{(k,o,f):k> 0, o>0, f\in\mathcal{C}(\mathcal{G}),\beta_{\sf W}({\sf EM},k,o,f)< 0\right\},\\
&\gamma_\geq(x):=\sup_{(k,o,f)\in U_\geq}\gamma_{\sf W}({\sf EM},x,k,o,f)\in \RP\cup\{\infty\}\\
&\gamma_<(x):=\sup_{(k,o,f)\in U_<}\gamma_{\sf W}({\sf EM},x,k,o,f)\in \R\cup\{\infty\}.
\end{align*}
One can easily observe that $U_{\geq 0}$ is non-empty, and because of our previous assumption, $U_<$ is non-empty, too. If ${\sf EM}=\poa_\epsilon$, we consider the same definitions $\gamma_\geq(x)$ and $\gamma_<(x)$, but we assume that the supremum is restricted to the set of non-constant functions only. By definition of $\gamma_{\sf W}({\sf EM},\mathcal{G})$, we have that 
\begin{equation}\label{form_appe0}
M<\gamma_{\sf W}({\sf EM},\mathcal{G})\leq \inf_{x\geq 1}\max\{\gamma_\geq(x),\gamma_<(x)\}=\inf_{x> 0}\max\{\gamma_\geq(x),\gamma_<(x)\}\footnote{For $x<1$, one can easily show that $\gamma_\geq(x)=\infty$. Thus, even if the infimum appearing in the inequality is calculated over $x>0$, its value does not decrease (with respect to the infimum over $x\geq 1$).}
\end{equation} 
for any $\{{\sf CR}^s_\epsilon,{\sf CR}^c_\epsilon\}$. Furthermore, we have that $\gamma_{\sf W}({\sf EM},\mathcal{G})\leq \inf_{x> 0}\max\{\gamma_\geq(x),\gamma_<(x)\}$ holds even for ${\sf EM}=\poa_\epsilon$. Indeed, for any $x> 0$, one can easily show that quantity $\gamma_{\sf W}(\poa_\epsilon,x,k,o,f)$ is maximized by a non-constant latency function~$f$. Thus, even if the supremum in $\gamma_<(x)$ and $\gamma_\geq(x)$ is calculated over the non-constant latency functions, inequality \eqref{form_appe0} holds as well. 

We observe that $\gamma_{\sf W}({\sf EM},x,k,o,f)$ is non-decreasing (resp. non-increasing) in $x\geq 0$ if $(k,o,f)\in U_\geq$ (resp. $(k,o,f)\in U_<$). Thus, we have the following remark:
\begin{remark}\label{remaincgam}
$\gamma_\geq(x)$ and $\gamma_<(x)$ are non-decreasing and non-increasing in $x\geq 0$, respectively.\qed
\end{remark}
Let $x^*:=\max\{0,\sup\{x> 0:\gamma_\geq(x)< \gamma_<(x)\}\}\in \RP\cup\{\infty\}.$ By exploiting \eqref{form_appe0}, the definition of $x^*$, and Remark \ref{remaincgam}, we get 
\begin{equation}\label{form_appe}
M<\gamma_{\sf W}({\sf EM},\mathcal{G})\leq \inf_{x> 0}\max\{\gamma_\geq(x),\gamma_<(x)\}\leq \min\left\{\lim_{x\rightarrow x^{*-}}\gamma_<(x),\lim_{x\rightarrow x^{*+}}\gamma_\geq (x)\right\},
\end{equation}
in which the left-hand (resp. right-hand) limit is not considered if $x^*=0$ (resp. $x^*=\infty$). In the following, we distinguish between the cases $x^*\in (0,\infty)$, $x^*=0$, and $x^*=\infty$. 

We first assume that $x^*\in (0,\infty)$, and to show the claim in such a case, we will proceed as follows:
\begin{itemize}
\item[$\bullet$] {\bf Step 1:} We show that there exist two tuples $(k_1,o_1,f_1)\in U_<$ and $(k_2,o_2,f_2)\in U_\geq$ such that $\overline{\gamma}:=\gamma_{\sf W}({\sf EM},\overline{x},k_1,o_1,f_1)=\gamma_{\sf W}({\sf EM},\overline{x},k_2,o_2,f_2)>~M$ for some $\overline{x}>0$.
\item[$\bullet$] {\bf Step 2:} We show that a dual program as {\sf DLP} in (the proof of) Lemma~\ref{lemupp}, having two constraints only (except to those imposing the non-negativity of the variables), associated, respectively, to tuples $(k_1,o_1,f_1)$ and $(k_2,o_2,f_2)$, has an optimal value higher than $M$. 
\item[$\bullet$] {\bf Step 3:}  By taking the dual of the above dual program, we obtain a primal program as ${\sf LP2}$ in the proof of Lemma~\ref{lemupp}, but with two constraints and two variables only, and whose optimal value is higher than $M$  (by the Strong Duality Theorem); the claim of Lemma~\ref{lemrou1} is obtained by characterizing the optimal solution of this primal program. 
\end{itemize} 
\paragraph*{Step 1:$\ $}To prove Step 1, we start with the following lemma:
\begin{lemma}\label{lem_appe}
We have that $\lim_{x\rightarrow \hat{x}^-}\gamma_\geq(x)=\gamma_\geq (\hat{x})=\lim_{x\rightarrow \hat{x}^+}\gamma_\geq(x)$ for any $\hat{x}>0$.\footnote{Equivalently, Lemma \ref{lem_appe} states that $\gamma_{\geq}(\hat{x})$, as function from $\RPP$ to $\RP\cup\{\infty\}$, is continuous in $\hat{x}>0$.}
\end{lemma}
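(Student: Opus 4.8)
The plan is to exploit the fact that, for every fixed tuple $(k,o,f)\in U_\geq$, the map $x\mapsto \gamma_{\sf W}({\sf EM},x,k,o,f)$ is \emph{affine} in $x$, with non-negative slope and non-negative intercept. Writing
\begin{equation*}
\gamma_{\sf W}({\sf EM},x,k,o,f)=\underbrace{\frac{kf(k)}{of(o)}}_{=:a}+x\cdot \underbrace{\frac{\beta_{\sf W}({\sf EM},k,o,f)}{of(o)}}_{=:b},
\end{equation*}
we have $of(o)>0$ (since $o>0$ and $f$ takes positive values) and $a\geq 0$, while $b\geq 0$ precisely because $(k,o,f)\in U_\geq$ forces $\beta_{\sf W}({\sf EM},k,o,f)\geq 0$. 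Hence $\gamma_\geq$ is a pointwise supremum of non-negative, non-decreasing affine functions; in particular it is non-decreasing (as already recorded in Remark~\ref{remaincgam}), so its one-sided limits at any $\hat x>0$ exist and satisfy $\lim_{x\to \hat x^-}\gamma_\geq(x)\leq \gamma_\geq(\hat x)\leq \lim_{x\to \hat x^+}\gamma_\geq(x)$. It therefore suffices to establish the two reverse inequalities.

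For left-continuity I would argue pointwise: for each $(k,o,f)\in U_\geq$ the affine function $a+bx$ is continuous and satisfies $\gamma_\geq(x)\geq a+bx$ for all $x$, so letting $x\to\hat x^-$ gives $\lim_{x\to\hat x^-}\gamma_\geq(x)\geq a+b\hat x=\gamma_{\sf W}({\sf EM},\hat x,k,o,f)$. Taking the supremum over all tuples in $U_\geq$ yields $\lim_{x\to\hat x^-}\gamma_\geq(x)\geq \gamma_\geq(\hat x)$, which closes this side. This step goes through verbatim whether $\gamma_\geq(\hat x)$ is finite or $+\infty$, since a supremum of lower bounds is preserved.

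For right-continuity I would split on whether $\gamma_\geq(\hat x)$ is finite. If $\gamma_\geq(\hat x)=+\infty$, monotonicity immediately gives $\lim_{x\to\hat x^+}\gamma_\geq(x)\geq \gamma_\geq(\hat x)=+\infty$, so the two agree. If $\gamma_\geq(\hat x)=:c<\infty$, the crucial observation is that non-negativity of the intercepts \emph{bounds the slopes}: every tuple in $U_\geq$ satisfies $a+b\hat x\leq c$ together with $a\geq 0$, whence $b\leq c/\hat x$ (using $\hat x>0$). Consequently, for every $x>\hat x$ and every tuple, $a+bx=(a+b\hat x)+b(x-\hat x)\leq c+\frac{c}{\hat x}(x-\hat x)$, so $\gamma_\geq(x)\leq c+\frac{c}{\hat x}(x-\hat x)$. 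Letting $x\to \hat x^+$ gives $\limsup_{x\to\hat x^+}\gamma_\geq(x)\leq c=\gamma_\geq(\hat x)$, and together with monotonicity this yields equality.

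The main obstacle is precisely the presence of the value $+\infty$: a priori a supremum of affine functions could jump upward to $+\infty$ (a finite convex function is automatically continuous, but $\gamma_\geq$ need not be finite). The decisive point that rules this out on the right is the non-negativity of the intercepts $a=kf(k)/(of(o))$, which converts a finite value $\gamma_\geq(\hat x)$ into the uniform slope bound $b\leq \gamma_\geq(\hat x)/\hat x$ over all active affine pieces, and hence into a local affine majorant of $\gamma_\geq$ near $\hat x$. I note that the reasoning is insensitive to the choice of ${\sf EM}\in\{\poa_\epsilon,{\sf CR}^s_\epsilon,{\sf CR}^c_\epsilon\}$, and that for ${\sf EM}=\poa_\epsilon$ it applies unchanged with the supremum restricted to non-constant latency functions, since this restriction affects neither affineness, nor the sign of $a$ and $b$, nor monotonicity.
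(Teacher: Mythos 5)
Your proof is correct and follows essentially the same route as the paper's: both arguments hinge on the observation that non-negativity of the intercepts $kf(k)/(of(o))$ converts finiteness of $\gamma_\geq$ into a uniform bound on the slopes $\beta_{\sf W}({\sf EM},k,o,f)/(of(o))$, which then yields an affine majorant controlling the increment of $\gamma_\geq$ near $\hat x$. The only difference is organizational --- you treat the left and right limits separately, whereas the paper sandwiches both at once via $0\leq \gamma_\geq(\hat x+\delta)-\gamma_\geq(\hat x-\delta)\leq 2\delta c$ --- and this does not change the substance.
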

\begin{proof}
Fix $\hat{x}>0$. If $\lim_{x\rightarrow \hat{x}^-}\gamma_\geq(x)=\infty$, we also get $\gamma_\geq(\hat{x})=\lim_{x\rightarrow \hat{x}+}\gamma_\geq(x)=~\infty$ (as $\gamma_\geq (x)$ is non-decreasing in $x$) and the claim follows. If $\lim_{x\rightarrow \hat{x}^-}\gamma_\geq(x)<\infty$, we necessarily have that there exists $c\geq 0$ such that $\frac{\beta_{\sf W}(\EM,k,o,f)}{of(o)}\leq c$ for any $(k,o,f)\in U_\geq$. Indeed, if it were not the case, we would have that $$\gamma_\geq(x)=\sup_{(k,o,f)\in U_\geq}\left(\frac{kf(k)}{of(o)}+\frac{x\beta_{\sf W}(\EM,k,o,f)}{of(o)}\right)\geq x\sup_{(k,o,f)\in U_\geq}\frac{\beta_{\sf W}(\EM,k,o,f)}{of(o)}=\infty,$$ for any $x>0$, and then we would be in case $\lim_{x\rightarrow \hat{x}^-}\gamma_\geq(x)=\infty$. Thus, for any sufficiently small $\delta>0$, we have that 
\begin{align*}
&\overbrace{\gamma_\geq(\hat{x}+\delta)}^{\geq \gamma_\geq(\hat{x}-\delta)}\\
&=\sup_{(k,o,f)\in U_\geq}\left(\frac{kf(k)+(\hat{x}-\delta)\beta_{\sf W}(\EM,k,o,f)}{of(o)}+\overbrace{\frac{2\delta\beta_{\sf W}(\EM,k,o,f)}{of(o)}}^{\geq 0}\right)\\
&\leq \sup_{(k,o,f)\in U_\geq}\left(\frac{kf(k)+(\hat{x}-\delta)\beta_{\sf W}(\EM,k,o,f)}{of(o)}\right)+2\delta\sup_{(k,o,f)\in U_\geq}\frac{\beta_{\sf W}(\EM,k,o,f)}{of(o)}\\
&= \gamma_\geq(\hat{x}-\delta)+2\delta\sup_{(k,o,f)\in U_\geq}\frac{\beta_{\sf W}(\EM,k,o,f)}{of(o)}\\
&\leq \gamma_\geq(\hat{x}-\delta)+2\delta c,
\end{align*}
that implies
\begin{equation}\label{disblem}
0\leq \gamma_\geq(\hat{x}+\delta)-\gamma_\geq(\hat{x}-\delta)\leq 2\delta c.
\end{equation}
By \eqref{disblem}, and by the arbitrariness of $\delta>0$, we necessarily have that $\lim_{x\rightarrow \hat{x}^-}\gamma_\geq(x)=\gamma_\geq (\hat{x})=\lim_{x\rightarrow \hat{x}^+}\gamma_\geq(x)$ (where the first equality holds since $\gamma_\geq$ is non-decreasing), and this shows the claim. \qed
\end{proof}
By applying Lemma \ref{lem_appe} to \eqref{form_appe}, we get that there exists a sufficiently small $\delta>0$ such that $M<\gamma_{\sf W}({\sf EM},\mathcal{G})\leq \gamma_\geq (x)$ for any $x>0$ with $|x-x^*|\leq \delta$; furthermore, by exploiting the definition of $x^*$, we have that $\gamma_<(x^*-\delta)>\gamma_\geq(x^*-\delta)$ and $\gamma_<(x^*+\delta/2)\leq \gamma_\geq (x^*+\delta/2)$. These facts, easily translate in the existence of two tuples $(k_1,o_1,f_1)\in U_<$ and $(k_2,o_2,f_2)\in U_\geq$ such that $\gamma_{\sf W}({\sf EM},x^*-\delta,k_1,o_1,f_1)>\gamma_{\sf W}({\sf EM},x^*-\delta,k_2,o_2,f_2)>M$, $\gamma_{\sf W}({\sf EM},x^*+\delta,k_1,o_1,f_1)< \gamma_{\sf W}({\sf EM},x^*+\delta,k_2,o_2,f_2)$,  and $\gamma_{\sf W}({\sf EM},x^*+\delta,k_2,o_2,f_2)>M$. Thus, we get $\overline{\gamma}:=\gamma_{\sf W}({\sf EM},\overline{x},k_1,o_1,f_1)=\gamma_{\sf W}({\sf EM},\overline{x},k_2,o_2,f_2)>~M$ for some $\overline{x}\in (x^*-\delta,x^*+\delta)$, and this concludes Step 1. 

\paragraph*{ Step 2:$\ $}
Consider a modification of the dual program {\sf DLP} of Lemma~\ref{lemupp}, in which there are two resources $e_1$ and $e_2$ only, whose characteristics are expressed respectively by the tuples $(k_1,o_1,f_1)$ and $(k_2,o_2,f_2)$ determined in Step 1. Let $\overline{{\sf DLP}}$ denote the new linear program in variables $x,\gamma$, that is explicitly defined as follows: 
\begin{align}
\overline{\sf DLP}:\quad \min\quad & \gamma \nonumber\\
s.t.\quad & \gamma o_1f_1(o_1) \geq k_1 f_1(k_1)+x\cdot \beta_{\sf W}({\sf EM},k_1,o_1,f_1),\nonumber\\
& \gamma o_2f_2(o_2) \geq k_2f_2(k_2)+x\cdot \beta_{\sf W}({\sf EM},k_2,o_2,f_2),\nonumber\\
&x\geq 0,\gamma \in \R.\nonumber
\end{align}
By exploiting the claim shown in Step 1, we get that $(\overline{x},\overline{\gamma})$ is the optimal solution of $\overline{\sf DLP}$, and since $\overline{\gamma}>M$, we conclude Step 2. 

\paragraph*{ Step 3:$\ $} The dual of $\overline{\sf DLP}$ is a linear program $\overline{\sf LP}$ similar to the program {\sf LP2} defined in the proof of Lemma \ref{lemupp}, but with resources $e_1$ and $e_2$ only. In particular, $\overline{\sf LP}$ is a linear program in variables $\alpha_1,\alpha_2$ explicitly defined as follows:
\begin{align}
\overline{\sf LP}:\quad \max\quad & \alpha_1k_1 f_1(k_1)+\alpha_2k_2 f_2(k_2)\nonumber\\
s.t. \quad &\alpha_1\beta_{\sf W}({\sf EM},k_1,k_1,f_1)+\alpha_2\beta_{\sf W}({\sf EM},k_2,o_2,f_2)\geq 0\label{const2appe}\\
& \alpha_1o_1 f_1(o_1)+\alpha_2o_2 f_2(o_2)= 1\label{const2bappe}\\
&\alpha_1,\alpha_2\geq 0\nonumber
\end{align}

By the Strong Duality Theorem, the value of the optimal solution of $\overline{\sf DLP}$ is equal to that of $\overline{\sf LP}$, that is higher than $M$. Furthermore, as the optimal solution $(\overline{x},\overline{\gamma})$ of $\overline{\sf DLP}$ is such that $\overline{x}>0$, by the complementary slackness conditions, constraint (\ref{const2appe}) is necessarily tight. Thus, the following optimization problem in variables $\alpha_1,\alpha_2$ is equivalent to $\overline{\sf LP}$:
\begin{align}
\overline{\sf OP}:\quad\max\quad & \frac{\alpha_1k_1 f_1(k_1)+\alpha_2k_2 f_2(k_2)}{\alpha_1o_1 f_1(o_1)+\alpha_2o_2 f_2(o_2)}\nonumber\\
s.t. \quad &\alpha_1\beta_{\sf W}({\sf EM},k_1,f_1,f_1)+\alpha_2\beta_{\sf W}({\sf EM},k_2,f_2,f_2)=0,\nonumber\\
&\alpha_1,\alpha_2\geq 0,\nonumber
\end{align}
and we observe that its optimal solution is obtained by setting
\begin{align*}
&\alpha_1:=\beta_{\sf W}({\sf EM},k_2,o_2,f_2)>0\\
&\alpha_2:=-\beta_{\sf W}({\sf EM},k_1,o_1,f_1)\geq 0.
\end{align*}
If $\alpha_2=0$, we are in Case 1 of the lemma, and if $\alpha_2>0$ we are in Case 2. Thus, the claim follows if $x^*\in (0,\infty)$

Now, we consider the case $x^*=0$, and we resort to similar proof arguments as in the above case. By our initial assumptions, we have that, for any latency function $f\in\mathcal{C}(\mathcal{G})$ and for any $o>0$, there exists $k>0$ such that $\beta_{\sf W}({\sf EM},k',o,f)<0$ for any $k'>k$. This shows that $\gamma_<(0)=\infty$. By applying Lemma \ref{lem_appe} to \eqref{form_appe}, we have that there exists a sufficiently small $\delta>0$ such that $\gamma_\geq (x)>M$ for any $x\in (0,\delta]$, and by exploiting the definition of $x^*$, we have that $\gamma_\geq (\delta/2)\geq \gamma_<(\delta/2)$. These facts translate into the existence of two tuples $(k_1,o_1,f_1)\in U_<$ and $(k_2,o_2,f_2)\in U_\geq$ such that $\gamma_{\sf W}({\sf EM},0,k_1,o_1,f_1)>\gamma_{\sf W}({\sf EM},0,k_2,o_2,f_2)>M$, $\gamma_{\sf W}({\sf EM},\delta,k_1,o_1,f_1)<\gamma_{\sf W}({\sf EM},\delta,k_2,o_2,f_2)$,  and $\gamma_{\sf W}({\sf EM},\delta,k_2,o_2,f_2)>M$. Thus, as in Step~1, we get $\overline{\gamma}:=\gamma_{\sf W}({\sf EM},\overline{x},k_1,o_1,f_1)=\gamma_{\sf W}({\sf EM},\overline{x},k_2,o_2,f_2)>~M$ for some $\overline{x}\in (0,\delta)$, and by proceeding as in Steps 2 and 3 of the above case, we can show the claim when $x^*=0$.

Finally, we consider the case $x^*=\infty$, i.e., when $\gamma_\geq(x)< \gamma_<(x)$ for any $x>0$. We observe that there exists a sufficiently large $\hat{x}>0$ such that $\gamma_\geq(\hat{x})>M$. Thus, there are two tuples $(k_1,o_1,f_1)\in U_<$ and $(k_2,o_2,f_2)\in U_\geq$ such that $\gamma_{\sf W}({\sf EM},\hat{x},k_1,o_1,f_1)>\gamma_{\sf W}({\sf EM},\hat{x},k_2,o_2,f_2)>M$. Then, there exists $\overline{x}>\hat{x}$ such that $\overline{\gamma}:=\gamma_{\sf W}({\sf EM},\overline{x},k_1,o_1,f_1)=\gamma_{\sf W}({\sf EM},\overline{x},k_2,o_2,f_2)>M$, and by proceeding as in Steps 2 and 3 of the above cases, we can show the claim when $x^*=\infty$.

%

\subsection{Proof of Lemma~\ref{equi_claim}.}
We consider an arbitrary player $z$ whose first strategy is a resource from a level $i$. Since the game is symmetric, we will show that: (i) if $i\in [2s-1]$ and the player deviates in favour of a resource from level $j=i+1$, her cost decreases exactly of a factor $1+\epsilon$; (ii) if $i\in [2s]$ and the player deviates in favour of a resource from level $j\leq i$, her cost does not decrease; (iii) if $i\in [2s-2]$ and the player deviates in favour of a resource from level $j>i+1$, for any sufficiently large $n$, her cost does not decrease. In the following, we prove separately the three different cases. Let ${\bm\sigma}^{j}_{s,n}$ denote the strategy profile obtained when player $z$, from level $i$, deviates in favour of a different resource at level $j$. 
\begin{description}
\item[(i)] Fix $j:=i+1$. If $i\in [s-1]$, we get 
\begin{align*}
cost_z(\bm\sigma_{s,n})
&=g_i(n\cdot w_i)\\
&=g_i\left(n\left(\frac{k_1}{n}\right)^i\right)\\
&=\theta_{1}^{i-1}f_1\left(\left(\frac{n}{k_1}\right)^{i-1}n\left(\frac{k_1}{n}\right)^i\right)\\
&=\theta_{1}^{i-1}f_1(k_1)\\
&=(1+\epsilon)\theta_{1}^{i}f_1(k_1+1)\\
&=(1+\epsilon)\theta_{1}^{i}f_1\left(\left(\frac{n}{k_1}\right)^{i}\left(n\left(\frac{k_1}{n}\right)^{i+1}+\left(\frac{k_1}{n}\right)^i\right)\right)\\
&=(1+\epsilon)g_{i+1}\left(n\left(\frac{k_1}{n}\right)^{i+1}+\left(\frac{k_1}{n}\right)^i\right)\\
&=g_{i+1}(n\cdot w_{i+1}+w_i)\\
&=(1+\epsilon)cost_z(\bm\sigma^{i+1}_{s,n});
\end{align*}
if $i=s$, we get 
\begin{align*}
cost_z(\bm\sigma_{s,n})
&=g_s\left(n\left(\frac{k_1}{n}\right)^s\right)\\
&=\theta_{1}^{s-1}f_1\left(\left(\frac{n}{k_1}\right)^{s-1}n\left(\frac{k_1}{n}\right)^s\right)\\
&=\theta_{1}^{s-1}f_1(k_1)\\
&=(1+\epsilon)\theta_{1}^{s-1}\theta_{1,2}f_2(k_2+1)\\
&=(1+\epsilon)\theta_{1}^{s-1}\theta_{1,2} f_2\left(\left(\frac{n}{k_1}\right)^{s}\left(n\left(\frac{k_1}{n}\right)^{s}\left(\frac{k_2}{n}\right)+\left(\frac{k_1}{n}\right)^s\right)\right)\\
&=(1+\epsilon)g_{s+1}\left(n\left(\frac{k_1}{n}\right)^{s}\left(\frac{k_2}{n}\right)+\left(\frac{k_1}{n}\right)^s\right)\\
&=(1+\epsilon)cost_z(\bm\sigma^{s+1}_{s,n});
\end{align*}
if $i\in [2s-1]_{s+1}$, we get 
\begin{align*}
&cost_z(\bm\sigma_{s,n})\\
&=g_i\left(n\left(\frac{k_1}{n}\right)^s\left(\frac{k_2}{n}\right)^{i-s}\right)
\\
&=\theta_{1}^{s-1}\theta_{1,2}\theta_{2}^{i-s-1}
f_2\left(\left(\frac{n}{k_1}\right)^{s}\left(\frac{n}{k_2}\right)^{i-s-1}n\left(\frac{k_1}{n}\right)^s\left(\frac{k_2}{n}\right)^{i-s}\right)\\
&=\theta_{1}^{s-1}\theta_{1,2}\theta_{2}^{i-s-1}f_2(k_2)\nonumber\\
&=(1+\epsilon)\theta_{1}^{s-1}\theta_{1,2}
\theta_{2}^{i-s}f_2(k_2+1)\\
&=(1+\epsilon)\theta_{1}^{s-1}\theta_{1,2}\theta_{2}^{i-s} f_2\nonumber\\
&\ \ \ \ \ \ \ \left(\left(\frac{n}{k_1}\right)^{s}\left(\frac{n}{k_2}\right)^{i-s}\left(n\left(\frac{k_1}{n}\right)^{s}\left(\frac{k_2}{n}\right)^{i+1-s}+\left(\frac{k_1}{n}\right)^s\left(\frac{k_2}{n}\right)^{i-s}\right)\right)\nonumber\\
&=(1+\epsilon)g_{i+1}\left(n\left(\frac{k_1}{n}\right)^{s}\left(\frac{k_2}{n}\right)^{i+1-s}+\left(\frac{k_1}{n}\right)^s\left(\frac{k_2}{n}\right)^{i-s}\right)\\
&=(1+\epsilon)cost_z(\bm\sigma^{i,i+1}_{s,n}).\nonumber
\end{align*}
\item[(ii)] Fix $j\leq i$. If $e$ is at level $i$ and $e'$ is at level $j$, we easily have that $g_e(k_e)\leq g_{e'}(k_e')$. Therefore $cost_z(\bm\sigma_{s,n})\leq (1+\epsilon)cost_z(\bm\sigma^{j}_{s,n})$.
\item[(iii)] Let $n$ be a sufficiently large integer such that
\begin{align}
&\max\{f_1(k_1),f_2(k_2)\}\nonumber\\
&\leq (1+\epsilon)\theta_1^{s-1}\theta_{1,2}\theta_2^{s-1}\min\left\{f_1\left(k_1+\frac{n}{\max\{k_1,k_2\}}\right),f_2\left(k_2+\frac{n}{\max\{k_1,k_2\}}\right)\right\}.\nonumber
\end{align}
Observe that $n$ exists since $f_1$ and $f_2$ are non-decreasing and unbounded. Informally, this value of $n$ is chosen in such a way that, when a player from level $i$ deviates in favor of a resource from the last level, her cost does not decrease, and one can easily see that this fact guarantees that $cost_z(\bm\sigma_{s,n})\leq (1+\epsilon)cost_z(\bm\sigma^{j}_{s,n})$ for any $j>i+1$ (and independently from index $i$). 
\end{description}
We conclude that $\bm\sigma_{s,n}$ is an $\epsilon$-approximate pure Nash equilibrium. 
\section{Missing Proofs from Section \ref{secunweigh}}
\subsection{Proof of Lemma \ref{lemupp2} (sketch).}
Let ${\sf EM}\in \{\poa_{\epsilon},{\sf CR}^s_\epsilon,{\sf CR}^c_\epsilon\}$. Let $\CG\in {\sf U}(\mathcal{G})$, and let $\sg$ and $\sg^*$ be an outcome satisfying the considered solution concept (e.g., approximate pure Nash equilibrium or approximate one-round walk), and an optimal strategy profile of $\CG$, respectively. As in Lemma \ref{lemupp}, we have that the optimal value of the following linear program in variables $\alpha_e$'s is an upper bound on the ${\sf EM}(\mathcal{G})$:
\begin{align}
{\sf LP:}\quad \max\quad& \sum_{e\in E}\alpha_ek_e\ell_e(k_e)\nonumber\\
s.t. \quad &\sum_{e\in E}\alpha_e\beta_{\sf U}({\sf \poa_\epsilon},k_e,o_e,\ell_e)\geq 0\label{const2unw}\\
&\sum_{e\in E}\alpha_e o_e\ell_e(o_e)= 1\label{const2bunw}\\
&\alpha_e\geq 0,\quad\forall e\in E,\nonumber
\end{align}
where $\beta_{\sf U}({\sf \poa_\epsilon},k_e,o_e,\ell_e)$ is defined as in \eqref{def_alpha_unw}. As in the weighted case analysed in Lemma~\ref{lemupp},  the objective function coincides with the social cost of outcome $\sg$, and constraint \eqref{const2bunw} is the normalized optimal social cost. Furthermore, we show that \eqref{const2bunw} is a constraint that is necessarily satisfied by the considered solution concept when $\alpha_e=1$ for any $e\in E$:
\begin{itemize}
\item[$\bullet$] If ${\sf EM}=\poa_\epsilon$, we have that $\sg$ is an $\epsilon$-approximate pure Nash equilibrium, thus we get $$\sum_{e\in \sigma_i}\alpha_e\ell_e(k_e)\leq \sum_{e\in \sigma_i^*}\alpha_e\ell_e(k_e+1)$$ for any $i\in \N$ (when $\alpha_e=1$ for any $e\in E$). By summing such inequality over all players $i\in \N$, we get constraint \eqref{const2unw}.
\item[$\bullet$] If ${\sf EM}={\sf CR}_\epsilon^s$, we have that $\sg$ is the strategy profile generated by some $\epsilon$-approximate one-round walk $\bm \tau=({\bm \sigma}^0,{\bm \sigma}^1,\ldots,{\bm \sigma}^n)$ involving selfish players. Thus we get $$\sum_{e\in \sigma_i}\alpha_e\ell_e(k_e(\sg^i))\leq (1+\epsilon)\sum_{e\in \sigma_i^*}\alpha_e\ell_e(k_e(\sg^{i-1})+1)$$ for any $i\in \N$ (when $\alpha_e=1$ for any $e\in E$). We have that $$\sum_{i\in \N}\sum_{e\in \sigma_i}\alpha_e\ell_e(k_e(\sg^i))=\sum_{e\in E}\alpha_e\sum_{i\in \N:e\in \sigma_i}\ell_e(k_e(\sg^i))=\sum_{e\in E}\alpha_e\sum_{h=1}^{k_e}\ell_e(h)$$ and $$\sum_{i\in \N}\sum_{e\in \sigma_i^*}\alpha_e\ell_e(k_e(\sg^{i-1})+1)\leq \sum_{i\in \N}\sum_{e\in \sigma^*_i}\alpha_e\ell_e(k_e+1)= \sum_{e\in E}\alpha_e o_e \ell_e(k_e+1).$$ Thus, by using the above inequalities we get 
\begin{align*}
\sum_{e\in E}\alpha_e\sum_{h=1}^{k_e}\ell_e(j)
&\leq \sum_{i\in \N}\sum_{e\in \sigma_i}\alpha_e\ell_e(k_e(\sg^i))\\
&\leq \sum_{i\in \N}(1+\epsilon)\sum_{e\in \sigma_i^*}\alpha_e\ell_e(k_e(\sg^{i-1})+1)\\
&\leq \sum_{e\in E}\alpha_e (1+\epsilon)o_e \ell_e(k_e+1),
\end{align*}
and this is equivalent to constraint \eqref{const2unw}. 
\item[$\bullet$] If ${\sf EM}={\sf CR}_\epsilon^c$, we have that $\sg$ is the strategy profile generated by some $\epsilon$-approximate one-round walk $\bm \tau=({\bm \sigma}^0,{\bm \sigma}^1,\ldots,{\bm \sigma}^n)$ involving cooperative players. Thus, denoting $x\ell_e(x)$ with $\hat{\ell}_e(x)$, we get $$\sum_{e\in E}\alpha_e(\hat{\ell}_e(k_e(\sg^i))-\hat{\ell}_e(k_e(\sg^{i-1})))\leq (1+\epsilon)\sum_{e\in E}\alpha_e(\hat{\ell}_e(k_e(\sg^{i-1})+ \chi_{\sigma_i^*}(e))-\hat{\ell}_e(k_e(\sg^{i-1})))$$ for any $i\in \N$ (when $\alpha_e=1$ for any $e\in E$), where $\chi$ denotes the indicator function. We get
\begin{align}
&\sum_{i\in \N}(1+\epsilon)\sum_{e\in E}\alpha_e(\hat{\ell}_e(k_e(\sg^{i-1})+ \chi_{\sigma_i^*}(e))-\hat{\ell}_e(k_e(\sg^{i-1})))\nonumber\\
&= \sum_{i\in \N}(1+\epsilon)\sum_{e\in \sigma_i^*}\alpha_e(\hat{\ell}_e(k_e(\sg^{i-1})+1)-\hat{\ell}_e(k_e(\sg^{i-1})))\nonumber\\
&\leq \sum_{i\in \N}(1+\epsilon)\sum_{e\in \sigma_i^*}\alpha_e(\hat{\ell}_e(k_e+1)-\hat{\ell}_e(k_e))\label{unwconv}\\
&=\sum_{e\in E}\alpha_e(1+\epsilon)o_e((k_e+1)\ell_e(k_e+1)-\ell_e(k_e)),\nonumber
\end{align}
where \eqref{unwconv} follows from the semi-convexity of the latency functions (i.e., the convexity of the functions $\hat{\ell}_e$'s). We conclude that
\begin{align*}
\sum_{e\in E}\alpha_e k_e\ell_e(k_e)
&=\sum_{e\in E}\alpha_e\sum_{i\in \N}(\hat{\ell}_e(k_e(\sg^i))-\hat{\ell}_e(k_e(\sg^{i-1})))\\
&=\sum_{i\in \N}\sum_{e\in E}\alpha_e(\hat{\ell}_e(k_e(\sg^i))-\hat{\ell}_e(k_e(\sg^{i-1})))\\
&\leq \sum_{i\in \N}(1+\epsilon)\sum_{e\in E}\alpha_e(\hat{\ell}_e(k_e(\sg^{i-1})+ \chi_{\sigma_i^*}(e))-\hat{\ell}_e(k_e(\sg^{i-1})))\\
&\leq \sum_{e\in E}\alpha_e(1+\epsilon)o_e((k_e+1)\ell_e(k_e+1)-\ell_e(k_e)),
\end{align*}
and such inequality is equivalent to constraint \eqref{const2unw}. 
\end{itemize}
We have shown that the optimal value of {\sf LP} is an upper bound on ${\sf EM}(\mathcal{G})$. Now, similarly as in the proof of Lemma \ref{lemupp2}, we proceed by analysing the dual of {\sf LP}, that is defined as follows:
\begin{align*}
{\sf DLP:}\quad \min\quad & \gamma \nonumber\\
s.t.\quad & \gamma\cdot o_e\ell_e(o_e) \geq k_e\ell_e(k_e)+x\cdot \beta_{\sf U}({\sf EM},k_e,o_e,\ell_e),\quad \forall e\in E\\
&x\geq 0,\gamma\in \R.\nonumber
\end{align*}
Analogously to Lemma \ref{lemupp}, we can show that maximum value of {\sf DLP} is at most $\gamma_{\sf U}({\sf EM},\mathcal{G})$, and by the Weak Duality Theorem, this is an upper bound on the optimal value of {\sf LP}, thus showing that $\gamma_{\sf U}({\sf EM},\mathcal{G})\geq {\sf EM}(\mathcal{G})$. 
\subsection{Proof of Lemma \ref{lemrou2} (sketch).}
As the claim of Lemma \ref{lemrou1} is obtained by reversing the proof arguments of Lemma \ref{lemupp}, the claim of Lemma~\ref{lemrou2} is analogously derived from Lemma~\ref{lemupp2}. In particular, we can show the claim of Lemma \ref{lemrou2} by resorting to the following steps: (i) we start from the upper bound $\gamma_{\sf U}({\sf EM},\mathcal{G})$ obtained in Lemma~\ref{lemupp2}; (ii) we derive a linear program $\overline{\sf DLP}$ similar as the dual program used in Lemma~\ref{lemupp2}, but with two constraints only (except for those imposing the non-negativity of the variables), and whose optimal value is higher than $M$; (iii) the dual of $\overline{\sf DLP}$ is similar to the linear program {\sf LP} used in Lemma~\ref{lemupp2}, but with two constraints and two variables only, and has the same optimal value (higher than $M$) as in $\overline{\sf DLP}$ (by the Strong Duality Theorem); (iv) finally, the claim of Lemma~\ref{lemrou2} is obtained by characterizing the optimal solution of $\overline{\sf LP}$. 
\section{Missing Proofs from Section \ref{sec_ident}}
\subsection{Proof of Theorem \ref{thm5}.}
Let $\LB(W,m)\in {\sf WSLB}(\{f\})$ be an arbitrary weighted load balancing game with identical resources such that $W$ is the total weight of players (i.e., $W=\sum_{i\in \sf N}w_i$), and the set of resources $E$ is equal to $[m]$ for some integer $m\geq 2$. Let $\bm\sigma$ be an $\epsilon$-approximate pure Nash equilibrium and $\bm\sigma^*$ be an optimal strategy profile. 
Let $C_1(W,m):=\left\{\bm x\in \RP^m:\sum_{e\in [m]} x_e=W\right\}$. 
\begin{lemma}\label{lemma1}
${\sf SUM}({\bm\sigma}^*)\geq W f\left(\frac{W}{m}\right)$.
\end{lemma}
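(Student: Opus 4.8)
The plan is to reduce the statement to a single convexity (Jensen) inequality, exploiting the semi-convexity of $f$. First I would rewrite the social cost of the optimal profile purely in terms of its congestion vector. Since the resources are identical with common latency $f$ and the game is a load balancing game, we have
\begin{equation*}
{\sf SUM}({\bm\sigma}^*)=\sum_{e\in[m]}o_e f(o_e)=\sum_{e\in[m]}h(o_e),\qquad h(x):=xf(x),
\end{equation*}
where $o_e=k_e({\bm\sigma}^*)$ is the optimal congestion on resource $e$. Because every player selects exactly one resource, the congestions satisfy $\sum_{e\in[m]}o_e=\sum_{i\in\N}w_i=W$, so the vector $(o_e)_{e\in[m]}$ lies in $C_1(W,m)$. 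Thus the claim becomes a lower bound on $\sum_{e\in[m]}h(o_e)$ over all vectors of $C_1(W,m)$.

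The key ingredient is the semi-convexity hypothesis on $f$, which by definition means precisely that $h(x)=xf(x)$ is convex on $\RP$. I would then apply Jensen's inequality to the convex function $h$ evaluated at the uniform average of the $o_e$'s:
\begin{equation*}
\frac{1}{m}\sum_{e\in[m]}h(o_e)\geq h\!\left(\frac{1}{m}\sum_{e\in[m]}o_e\right)=h\!\left(\frac{W}{m}\right).
\end{equation*}
Multiplying through by $m$ and using $h(W/m)=(W/m)f(W/m)$ gives $\sum_{e\in[m]}h(o_e)\geq Wf(W/m)$, which is exactly ${\sf SUM}({\bm\sigma}^*)\geq Wf\!\left(\frac{W}{m}\right)$.

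There is essentially no hard step here: the statement is a direct consequence of the convexity of $h$, and the only point worth stressing is that the bound is insensitive to integrality or realizability. Jensen's inequality delivers $\sum_{e\in[m]}h(o_e)\geq Wf(W/m)$ for \emph{every} feasible congestion vector in $C_1(W,m)$, regardless of whether the uniform split $o_e=W/m$ can actually be attained by an integer assignment of weighted players; hence the bound applies in particular to the optimal profile ${\bm\sigma}^*$ without any further argument. The semi-convexity assumption is exactly what licenses this move, and it is the same hypothesis already imposed in the statement of Theorem~\ref{thm5}.
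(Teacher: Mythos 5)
Your proof is correct. The only hypothesis it needs is that $h(x)=xf(x)$ is convex, which is exactly the semi-convexity of $f$ assumed in Theorem~\ref{thm5}, and the congestion vector $(o_e)_{e\in[m]}$ of ${\bm\sigma}^*$ does sum to $W$ because every player occupies exactly one resource; Jensen then gives the bound directly, with no need to argue about whether the uniform split is realizable. The paper reaches the same conclusion by a different (and heavier) route: it introduces the relaxed minimization $\min_{\bm x\in C_1(W,m)}\sum_e x_ef(x_e)$, takes a minimizer $\bm x$, observes that all cyclic shifts $\bm x^j$ of $\bm x$ are also minimizers by symmetry of the objective, and uses convexity of the set of minimizers of a convex function over a convex compact set to conclude that the average $\bm x^*=\frac{1}{m}\sum_j\bm x^j$, whose coordinates are all $W/m$, is itself a minimizer. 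That symmetrization argument is essentially a hand-rolled proof of the Jensen step in this special case, so your version is a strictly more economical derivation of the same inequality; the paper's formulation has the minor side benefit of exhibiting the uniform vector as an actual minimizer of the relaxation, but that extra information is not used in the proof of Theorem~\ref{thm5}.
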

\begin{proof}
Let $g_m(\bm x):=\sum_{e\in [m]} x_ef(x_e)$ for any $\bm x\in \RP^m$, and let $$opt(W,m):=\min_{\bm x\in C_1(W,m)}\sum_{e\in [m]}g_m(\bm x).$$ Observe that $opt(W,m)\leq {\sf SUM}({\bm\sigma}^*)$. Since $g_m(\bm x)$ is a convex function over $C_1(W,m)$, that is a compact and convex subset of $\R^{m}$, we have that $g_m(\bm x)$ admits at least one minimum point $\bm x:=(x_1,x_2,\ldots, x_m)$. Let $\bm x^j=(x^j_1,x^j_2,\ldots, x^j_m)$ be such that $x_e^j:=x_{e+j}$ for any $e\in [m]$, where the indices are cyclical with respect to $m$ (i.e., $j+m:=j$). Due to the symmetry of function $g_m(\bm x)$, all the $\bm x^j$'s are minima of $g_m$. Due to the convexity of the function $g_m$ and the set $C_1(W,m)$, we have that the set $C^*(W,m)$ of the minima of $g_m$ is a convex subset of $C_1(W,m)$, therefore $\bm x^*:=\sum_{j=0}^{m-1}\frac{\bm x^j}{m}\in C^*(W,m)$ (since $\bm x^*$ is convex combination of all the $\bm x^j$'s). Since $x^*_e=\frac{\sum_{e=1}^mx_e}{m}=\frac{W}{m}$ we have that $
W f\left(\frac{W}{m}\right)=g(\bm x^*)=opt(W,m)\leq {\sf SUM}({\bm\sigma}^*),$ thus concluding the proof.\qed
\end{proof}
By the previous lemma we have that 
\begin{equation}\label{prop4}
 \poa_{\epsilon}(\LB(W,m))\leq \frac{\max_{{\bm\sigma}\in {\sf NE}_{\epsilon}(\LB(W,m))}{\sf SUM}({\bm\sigma})}{W f\left(W/m\right)}.
\end{equation}

\begin{lemma}\label{lemma2}
Let $\bm\sigma$ be an $\epsilon$-approximate pure Nash equilibrium, and let $k_e$ and $k_{e'}$ be respectively the maximum and the minimum congestion in $\bm\sigma$. If $[k_e]_{\epsilon,f}> k_{e'}$, then (i) there is exactly one player $j$ selecting resource $e$ in $\sg$, and (ii) there exists an optimal strategy profile $\bm\sigma^*$ in which player $j$ is the unique player playing resource $e$.
\end{lemma}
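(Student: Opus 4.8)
The plan is to prove the two assertions separately: part (i) by a one-line contradiction drawn from the approximate equilibrium condition and the very definition of $[k_e]_{\epsilon,f}$, and part (ii) by an exchange argument that reuses the conclusion $w_j=k_e$ of part (i) together with a counting bound coming from the equilibrium.

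For part (i), I would argue by contradiction: suppose at least two players select $e$ in $\bm\sigma$, and let $j$ be a player on $e$ of minimum weight. Since the weights on $e$ sum to the congestion $k_e$ and there are at least two of them, $w_j\le k_e/2$. Because the game is symmetric, $j$ may deviate to the minimum-congestion resource $e'$, on which the congestion becomes $k_{e'}+w_j$; the $\epsilon$-approximate equilibrium condition then gives $f(k_e)\le(1+\epsilon)f(k_{e'}+w_j)$, and monotonicity of $f$ with $w_j\le k_e/2$ upgrades this to $f(k_e)\le(1+\epsilon)f(k_e/2+k_{e'})$. By definition of $[k_e]_{\epsilon,f}$ as an infimum, the value $t=k_{e'}$ already satisfies its defining inequality, so $[k_e]_{\epsilon,f}\le k_{e'}$, contradicting the hypothesis. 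Hence $e$ carries exactly one player $j$ (it carries at least one, as $k_e$ is the maximum over a nonempty player set), and therefore $w_j=k_e$. I note that the hypothesis also gives $k_{e'}<[k_e]_{\epsilon,f}\le k_e/2<k_e$, so $e'\neq e$ is a legitimate deviation target.

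For part (ii) I would carry over two facts from part (i): that $w_j=k_e$, so $j$ is a heaviest player, and that every congestion in $\bm\sigma$ is at most $k_e$, whence the total weight obeys $W=\sum_{e''}k_{e''}(\bm\sigma)\le m\,k_e=m\,w_j$. Writing $\hat f(x):=xf(x)$, semi-convexity makes $\hat f$ convex with $\hat f(0)=0$, so the increment $x\mapsto \hat f(x+a)-\hat f(x)$ is non-decreasing; in particular $\hat f(c'+a)-\hat f(c')\le \hat f(w_j+a)-\hat f(w_j)$ whenever $c'\le w_j$. Now take any optimal $\bm\sigma^*$, let $G$ be $j$'s resource with congestion $c$, and assume $j$ is not alone, so $a:=c-w_j>0$ is the total weight of the other players on $G$. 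Some resource $G'\neq G$ must have congestion $c'\le w_j$: otherwise all $m$ resources would carry congestion strictly above $w_j=k_e$, forcing $W>m\,k_e$ and contradicting $W\le m\,k_e$. Relocating the entire block $G\setminus\{j\}$ onto $G'$ changes the social cost by $[\hat f(w_j)+\hat f(c'+a)]-[\hat f(w_j+a)+\hat f(c')]\le 0$, by the increment inequality; the resulting profile is therefore again optimal and has $j$ alone on $G$. Relabelling the identical resources so that $G$ becomes $e$ gives the desired $\bm\sigma^*$.

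I expect part (ii) to be the main obstacle. The naive move of transferring $j$ itself to a least-loaded resource fails when the optimum has no empty resource, and in fact the heaviest player need \emph{not} be isolated in an optimum for a general convex $\hat f$ (e.g.\ weights clustered just below $w_j$). The decisive idea is thus to isolate $j$ by evacuating its co-residents rather than $j$, and to guarantee a sufficiently lightly loaded target exists purely from the counting inequality $W\le m\,k_e$ delivered by part (i); this is the only place, beyond part (i), where the equilibrium hypothesis is used, and pinning down the correct direction of the convexity increment inequality is the crux of the argument.
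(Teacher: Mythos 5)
Your proof is correct and follows essentially the same route as the paper's: part (i) is the identical contradiction via the approximate-equilibrium condition and the definition of $[k_e]_{\epsilon,f}$, and part (ii) is the same exchange argument based on the convexity of $x\mapsto xf(x)$ together with the bound $k_e=w_j\ge W/m$ guaranteeing a sufficiently lightly loaded target resource. The only cosmetic difference is that you evacuate all of $j$'s co-residents in one block onto a single such resource, whereas the paper moves them one at a time to the minimum-congestion resource of $\bm\sigma^*$; both versions of the exchange are valid.
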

\begin{proof}
We first show claim (i). Suppose, by way of contradiction, that there are at least two players using $e$ in $\bm\sigma$. Therefore, there exists a player $i$ among them with weight at most $k_e/2$. By using the fact that $\bm\sigma$ is an equilibrium and by using the definition of $[k_e]_{\epsilon,f}$, we have that $$cost_i({\bm\sigma})\leq (1+\epsilon)cost_i({\bm\sigma}_{-i},\{e'\})\leq (1+\epsilon)f(k_{e'}+k_e/2)<f(k_e)=cost_i({\bm\sigma}),$$
where the last inequality comes from $[k_e]_{\epsilon,f}> k_{e'}$. This fact leads to $cost_i({\bm\sigma})<cost_i({\bm\sigma})$, that is a contradiction. Thus, there is  exactly one player $j$ selecting resource $e$ in $\bm\sigma$.

To show claim (ii), consider an optimal strategy profile $\bm\sigma^*$ in which player $j$ selects resource $e$ (such an optimal strategy profile always exists) and assume that there is another player $i$ selecting $e$. Suppose, without loss of generality, that $e'$ is the resource having minimum congestion in the strategy profile $\bm\sigma^*$. We get
\begin{align}
&{\sf SUM}(\bm\sigma^*)-{\sf SUM}(\bm\sigma^*_{-i},\{e'\})\nonumber\\
=&k_e(\bm\sigma^*)f(k_e(\bm\sigma^*))+k_{e'}(\bm\sigma^*)f(k_{e'}(\bm\sigma^*))-(k_e(\bm\sigma^*)-w_i)f(k_e(\bm\sigma^*)-w_i)\nonumber\\
&-(k_{e'}(\bm\sigma^*)+w_i)f(k_{e'}(\bm\sigma^*)+w_i)\nonumber\\
=&\underbrace{k_e(\bm\sigma^*)f(k_e(\bm\sigma^*))-(k_e(\bm\sigma^*)-w_i)f(k_e(\bm\sigma^*)-w_i)}_{A\left(k_e(\bm\sigma^*)\right)}\nonumber\\
&-\underbrace{((k_{e'}(\bm\sigma^*)+w_i)f(k_{e'}(\bm\sigma^*)+w_i)-k_{e'}(\bm\sigma^*)f(k_{e'}(\bm\sigma^*)))}_{A(k_{e'}(\bm\sigma^*)+w_i)}.\label{inesemiconv}
\end{align}
Since $w_j$ is the maximum equilibrium congestion (because of the first claim of the lemma) and $k_{e'}(\bm\sigma^*)$ is the minimum optimal congestion, we necessarily have that $k_e(\bm\sigma^*)-w_i\geq w_j=k_e\geq W/m\geq k_{e'}(\bm\sigma^*)\Rightarrow k_e(\bm\sigma^*)\geq k_{e'}(\bm\sigma^*)+w_i$ (observe that  $k_e(\bm\sigma^*)-w_i\geq w_j$ holds since players $i$ and $j$ select resource $e$ in $\bm\sigma^*$).  Therefore, because of the semi-convexity of $f$, we have $A(k_e(\bm\sigma^*))-A(k_{e'}(\bm\sigma^*)+w_i)\geq 0$ (in particular, such inequality holds since $F(y+z)-F(x+z)-(F(y)-F(x))\geq 0$ for any convex function $F$, and $x,y,z\in \RP$ such that $x\leq y$). Then, because of (\ref{inesemiconv}), we get ${\sf SUM}(\bm\sigma^*_{-i},\{e'\})\leq {\sf SUM}(\bm\sigma^*)$, which proves the optimality of the strategy profile $(\bm\sigma^*_{-i},\{e'\})$. By applying iteratively the same proof arguments to the strategy profile $(\bm\sigma^*_{-i},\{e'\})$, at the end of the process we obtain an optimal strategy profile in which player $j$ is the unique player selecting resource $e$ in $\bm\sigma^*$, thus showing the claim. \qed
\end{proof}
Let $k_e$ be the maximum equilibrium congestion of $\LB(W,m)$. By the previous lemma, we have that, if there exists a resource $e'$ such that $[k_e]_{\epsilon,f}>k_{e'}$, there is a resource used by a unique player $j$, in both the equilibrium and the optimal strategy profile, denoted respectively with $\bm\sigma$ and $\bm\sigma^*$. Therefore, if we remove player $j$ and resource $e$ from $\LB(W,m)$, $\bm\sigma_{-j}$ and $\bm\sigma^*_{-j}$ are respectively an equilibrium and an optimal strategy profile in the new game $\LB'(W,m)$, verifying ${\sf SUM}(\bm\sigma)-k_e f(k_e)={\sf SUM}(\bm\sigma_{-j})$ and ${\sf SUM}(\bm\sigma^*)-k_e f(k_e)={\sf SUM}(\bm\sigma^*_{-j})$. Therefore $\poa_{\epsilon}(\LB(W,m)')\geq \poa_{\epsilon}(\LB(W,m))$.

Thus, for our aims, we can assume (without loss of generality) that all the approximate pure Nash equilibria of $\LB(W,m)$ verify $[k_e(\bm\sigma)]_{\epsilon,f}\leq k_{e'}(\bm\sigma)$ for each $e,e'\in E$. We conclude that 
\begin{align}
\max_{{\bm\sigma}\in {\sf NE}_{\epsilon}(\LB(W,m))}{\sf SUM}({\bm\sigma})\leq \max_{\bm x\in C_2(k_e(\sg),W,m)}\sum_{e\in [m]}x_ef(x_e),\label{prop5}
\end{align}
where $C_2(x,W,m):=\left\{\bm x\in \RP^m:[x]_{\epsilon,f}\leq x_e\leq x\ \forall e\in [m],\ \sum_{e\in [m]}x_e=y\right\}$ for any $x>0$. 
\begin{lemma}\label{lemprin}
Given $x>0$ and $\bm x\in C_2(x,W,m)$, there exists $\lambda\in [0,1]$ such that (i) $\sum_{e\in [m]}x_ef(x_e)\leq (\lambda xf(x)+(1-\lambda)[x]_{\epsilon,f}f([x]_{\epsilon,f}))m$, and (ii) $W=(\lambda x+(1-\lambda)[x]_{\epsilon,f})m$. 
\end{lemma}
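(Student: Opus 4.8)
The plan is to reduce the whole statement to a single coordinatewise application of convexity. Set $a:=[x]_{\epsilon,f}$, $b:=x$, and $h(t):=tf(t)$; by the semi-convexity hypothesis $h$ is convex on $\RP$. First I would record the elementary fact that $a\le b$: taking $t=x/2$ in the defining infimum $[x]_{\epsilon,f}=\inf\{t\ge 0:f(x)\le(1+\epsilon)f(x/2+t)\}$ gives $x/2+t=x$ and the inequality $f(x)\le(1+\epsilon)f(x)$ holds for every $\epsilon\ge 0$, so $[x]_{\epsilon,f}\le x/2\le x$. Consequently, for any $\bm x\in C_2(x,W,m)$ each coordinate satisfies $a\le x_e\le b$.

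The core step handles the nondegenerate case $a<b$ (if $a=b$ then $\bm x=(b,\ldots,b)$ and both claims hold with any $\lambda$). I would write each coordinate as a convex combination $x_e=\mu_e b+(1-\mu_e)a$ with $\mu_e:=\frac{x_e-a}{b-a}\in[0,1]$, and apply convexity of $h$ to obtain $h(x_e)\le \mu_e h(b)+(1-\mu_e)h(a)$. I then define the single parameter $\lambda:=\frac{1}{m}\sum_{e\in[m]}\mu_e$, which lies in $[0,1]$ because each $\mu_e\in[0,1]$. The crucial point is that this same $\lambda$ simultaneously governs both the sum constraint and the convex upper bound.

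For claim (ii), I expand the sum constraint: $W=\sum_{e\in[m]}x_e=\left(\sum_e\mu_e\right)b+\left(m-\sum_e\mu_e\right)a=\lambda m\,b+(1-\lambda)m\,a$, which is exactly $W=(\lambda x+(1-\lambda)[x]_{\epsilon,f})m$. For claim (i), I sum the coordinatewise convexity inequalities:
\[
\sum_{e\in[m]}x_ef(x_e)=\sum_{e\in[m]}h(x_e)\le\left(\sum_e\mu_e\right)h(b)+\left(m-\sum_e\mu_e\right)h(a)=\lambda m\,h(b)+(1-\lambda)m\,h(a),
\]
and substituting $h(b)=xf(x)$ and $h(a)=[x]_{\epsilon,f}f([x]_{\epsilon,f})$ yields precisely the claimed bound $(\lambda xf(x)+(1-\lambda)[x]_{\epsilon,f}f([x]_{\epsilon,f}))m$.

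I do not expect a genuine obstacle here: the argument is a direct Jensen-type estimate, and the only care needed is the degenerate case $a=b$ and the verification that $\lambda\in[0,1]$, both of which are immediate. The one conceptual insight worth highlighting (rather than a difficulty) is that the \emph{same} quantity $\sum_e\mu_e$ encodes both the total weight $W$ and the convex-combination weight in the upper bound, so no separate optimization over $\lambda$ is required; the feasible $\bm x$ forces its own value of $\lambda$.
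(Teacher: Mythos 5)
Your proposal is correct and follows essentially the same route as the paper: write each coordinate as a convex combination $x_e=\lambda_e x+(1-\lambda_e)[x]_{\epsilon,f}$, apply the convexity of $t\mapsto tf(t)$ coordinatewise, and average the coefficients to get the single $\lambda$. The only additions are your explicit check that $[x]_{\epsilon,f}\le x$ and the degenerate case $[x]_{\epsilon,f}=x$, which the paper leaves implicit.
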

\begin{proof}
Let $x>0$ and $\bm x\in C_2(x,y,l)$. For any $e\in [m]$, let $\lambda_e\in [0,1]$ such that $x_e=\lambda_e x+(1-\lambda_e)[x]_{\epsilon,f}$, and let $\lambda:=(\sum_{e\in [m]}\lambda_e)/m$. By the semi-convexity of $f$ we have that 
\begin{equation}\label{lamb1}
x_ef(x_e)=(\lambda_e x+(1-\lambda)[x]_{\epsilon,f})f(\lambda_e x+(1-\lambda)[x]_{\epsilon,f})\leq \lambda_e xf(x)+(1-\lambda_e)[x]_{\epsilon,f}f([x]_{\epsilon,f}).
\end{equation}
By \eqref{lamb1}, we have that $$\sum_{e\in [m]}x_ef(x_e)\leq \sum_{e\in [m]}( \lambda_e xf(x)+(1-\lambda_e)[x]_{\epsilon,f}f([x]_{\epsilon,f}))=\lambda m  xf(x)+(1-\lambda)m[x]_{\epsilon,f}f([x]_{\epsilon,f}),$$
and this shows claim (i). To show claim (ii), we simply observe that $$W=\sum_{e\in [m]}x_e=\sum_{e\in [m]}(\lambda_e x+(1-\lambda_e)[x]_{\epsilon,f})=\lambda m x+(1-\lambda_e)m[x]_{\epsilon,f}.$$\qed
\end{proof}
By applying Lemma \ref{lemprin}, and by using inequalities \eqref{prop4} and \eqref{prop5}, we get
\begin{align}
\poa_{\epsilon}(\LB(W,m))
&\leq \frac{\max_{{\bm\sigma}\in {\sf NE}_{\epsilon}(\LB(W,m))}{\sf SUM}({\bm\sigma})}{W f\left(W/m\right)}\nonumber\\
&\leq \sup_{x>0}\frac{\max_{\bm x\in C_2(k_e(\sg),W,m)}\sum_{e\in [m]}x_ef(x_e)}{W f\left(W/m\right)}\nonumber\\
&\leq \sup_{x>0}\sup_{\lambda\in (0,1)}\frac{(\lambda xf(x)+(1-\lambda)[x]_{\epsilon,f}f([x]_{\epsilon,f}))m}{(\lambda x+(1-\lambda)[x]_{\epsilon,f})m f(\lambda x+(1-\lambda)[x]_{\epsilon,f})}\label{idfin1}\\
&=\sup_{x>0}\sup_{\lambda\in [0,1]}\gamma_{\epsilon,f}(x,\lambda)\nonumber\\
&=\sup_{x>0}\max_{\lambda\in (0,1)}\gamma_{\epsilon,f}(x,\lambda),\label{idfin2}
\end{align}
where \eqref{idfin1} comes from Lemma \ref{lemprin}, and \eqref{idfin2} holds since $\gamma_{\epsilon,f}(x,\lambda)$, for any fixed $x>0$, is maximized by some $\lambda\in (0,1)$\footnote{For any fixed $x>0$, we have that $\gamma_{\epsilon,f}(x,\lambda)\geq 1$ for any $\lambda\in (0,1)$ (as $\lambda xf(x)+(1-\lambda)[x]_{\epsilon,f}f([x]_{\epsilon,f})\leq \lambda x+(1-\lambda)[x]_{\epsilon,f}$, by semi-convexity of $f$), and $\gamma_{\epsilon,f}(x,\lambda)$ approaches to $1$ as $\gamma$ tends to either $0$ or $1$. Thus, as $\gamma_{\epsilon,f}(x,\lambda)$ is continuous, it admits a maximum point $\lambda\in (0,1)$.}, and this shows the claim. 
\subsection{Proof of Theorem \ref{thm6}.}
It suffices showing that, for each $M<\sup_{x>0}\sup_{\lambda\in (0,1)}\gamma_{\epsilon,f}(x,\lambda)$, there exists $\LB\in {\sf WSLB}(\{f\})$ such that $\poa_{\epsilon}(\LB)>M.$ For any $x>0$ and $\lambda\in (0,1)$, let $opt(x,\gamma)$ denote $\gamma x+(1-\lambda)[x]_{\epsilon,f}$. 

Fix $x>0$. Let $m$ be a sufficiently large even number such that $h(m):=\lceil m \lambda^*(x)\rceil\leq m/2$ (this number exists as $\gamma^*(x)\leq 1/2$). We observe that $opt(x,h(m)/m)-x/2\geq 0$. Indeed, if $\epsilon=0$, we have that $[x]_{\epsilon,f}=x/2$, thus $opt(x,h(m)/m)-x/2\geq (h(m)/m)x/2+(1-h(m)/m)x/2-x/2=0$; instead, if $\epsilon>0$, we have that $opt(x,h(m)/m)-x/2\geq opt(x,\lambda^*(x))-x/2\geq 0$, where the last inequality follows from the hypothesis on $\lambda^*(x)$.

Let $\LB_m(x)$ be a game with $m$ resources, $2h(m)$ ``red" players of weight $x/2$, and several ``blue" players, of total weight $(m-h(m))[x]_{\epsilon,f}$, that can be partitioned in the following two different ways:
\begin{itemize}
\item[$\bullet$] in $m-h(m)$ groups such that each group has total weight equal to $[x]_{\epsilon,f}$;
\item[$\bullet$] in $m$ groups such that any of the first $2h(m)$ ones is made of ``dark blue" players and has total weight equal to $opt(x,h(m)/m)-x/2\geq opt(x,\lambda^*(x))-x/2\geq 0$, and any of the remaining $m-2h(m)$ groups is made of ``light blue" players and has total weight equal to $opt(x,h(m)/m)$; we observe that the above quantities are well-defined as $2h(m)\leq m$. 
\end{itemize}
We can determine a set of blue players satisfying the previous conditions as follows. Consider a segment of $(m-h(m))[x]_{\epsilon,f}$ units of weight and the following two subdivisions of this segment:
\begin{itemize}
\item[$\bullet$] {\bf Subdivision 1:} $m-h(m)$ intervals, each having $[x]_{\epsilon,f}$ units of weight;
\item[$\bullet$] {\bf Subdivision 2:} $2h(m)$ intervals, each having $opt(x,h(m)/m)-x/2$ units of weight, followed by $m-2h(m)$ intervals, each having $opt(x,h(m)/m)$ units of weight. 
\end{itemize}
The desired set of blue players can be obtained by overlapping Subdivisions 1 and 2 and then associating a blue player of weight $w_i$ to each interval of $w_i$ units of weight of the resulting subdivision, call it Subdivision 3. Observe that, being Subdivision 3 a subdivision of Subdivision 2, we can properly associate dark blue players to the first $2h(m)$ intervals of Subdivision 3 and light blue players to the remaining ones. Observe also that Subdivision 3 is well defined (since $(m-h(m))[x]_{\epsilon,f}=2h(m)(opt(x,h(m)/m)-x/2)+(m-2h(m))opt(x,h(m)/m)$, i.e., the total weights of Subdivisions 1 and 2 are equal). 

Consider the strategy profile $\bm\sigma_m$ defined as follows:
\begin{itemize}
\item[$\bullet$] There are $h(m)$ resources such that each resource is selected by two red players; thus the congestion of such resources is $x$. 
\item[$\bullet$] There are $m-h(m)$ resources with congestion $[x]_{\epsilon,f}$ played by blue players only; this configuration can be realized since the weights of the blue players are determined from an interval subdivision that is more refined than Subdivision 1.
\end{itemize}
By exploiting the definition of $[x]_{\epsilon,f}$, we have that $\bm\sigma_m$ is an $\epsilon$-approximate pure Nash equilibrium.

Now, consider the strategy profile $\bm\sigma^*_m$ defined as follows:
\begin{itemize}
\item[$\bullet$] There are $2h(m)$ resources of congestion $opt(x,h(m))$ such that each resource is selected by one red player and some dark blue players only;
\item[$\bullet$] There are $m-2h(m)$ resources of congestion $opt(x,h(m))$ such that each resource is selected by light blue players only. 
\end{itemize}
By exploiting Subdivision 2, we observe that $\bm\sigma^*_m$ is well-defined. See Figure \ref{fig:1} for a clarifying example. 

By using the definitions of $\sg_m$ and $\sg^*_m$ we have that 
\begin{equation*}
\poa_\epsilon(\LB_m(x))\geq \frac{{\sf SUM}(\bm\sigma_m)}{{\sf SUM}(\bm\sigma^*_m)}=\frac{h(m)xf(x)+(m-h(m))[x]_{\epsilon,f}f([x]_{\epsilon,f})}{m\cdot opt(x,h(m)/m)}=\gamma_{\epsilon,f}(x,h(m)/m).
\end{equation*}  
As $\lim_{m\rightarrow \infty}h(m)/m=\lambda^*(x)$, and since $\gamma_{\epsilon,f}(x,t)$ is continuous with respect to $t\in (0,1)$, we have that 
\begin{equation}\label{low}
\limsup_{m\rightarrow \infty}\poa_\epsilon(\LB_m(x))\geq \lim_{m\rightarrow \infty}\gamma_{\epsilon,f}(x,h(m)/m)=\lim_{t\rightarrow \lambda^*(x)}\gamma_{\epsilon,f}(x,t)=\gamma_{\epsilon,f}(x,\lambda^*(x)).
\end{equation}
By taking the supremum of the right-hand part of (\ref{low}) over $x\geq 0$, we get the claim.

\begin{figure}[h]
\centering
\includegraphics[scale=0.5]{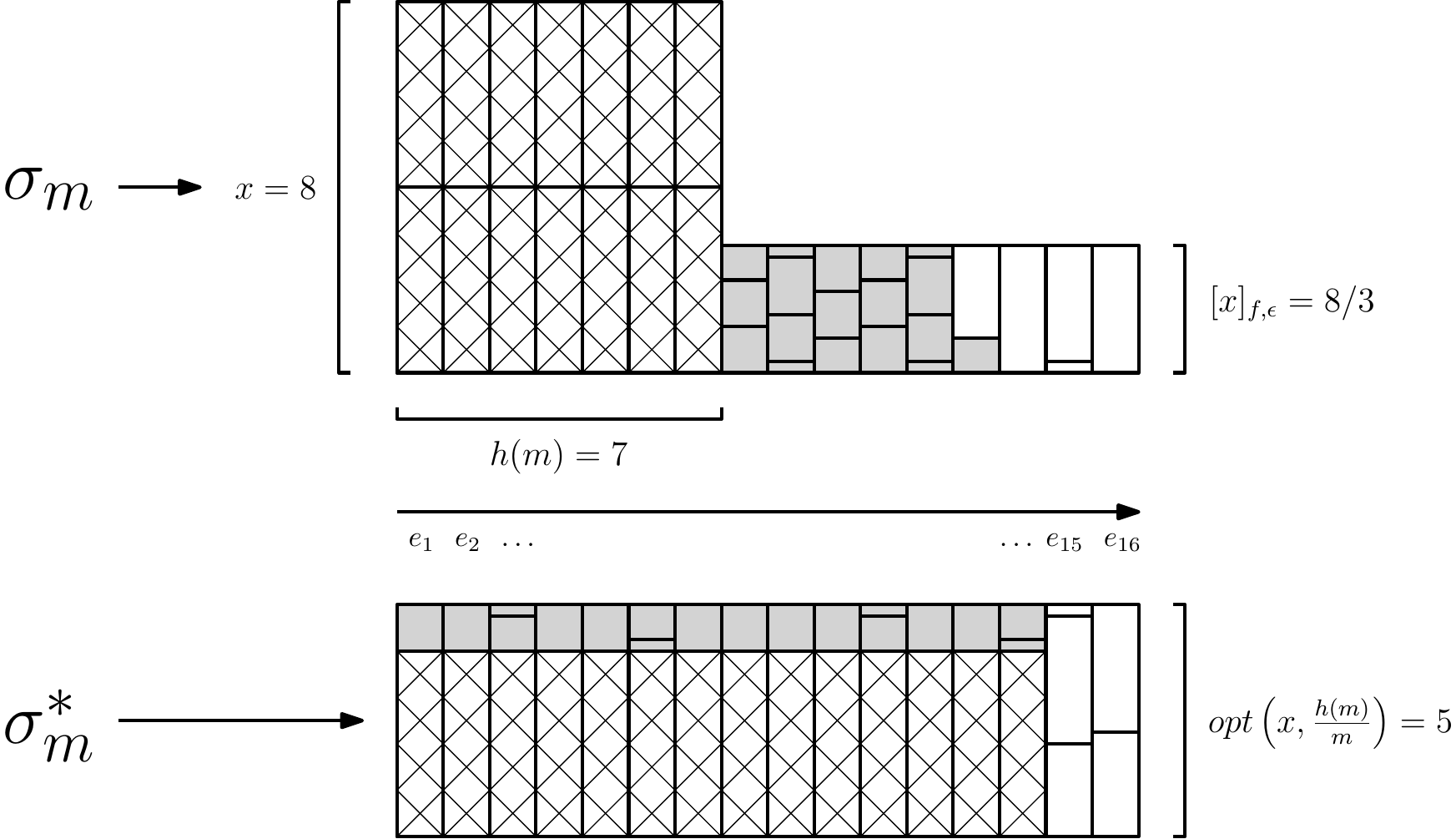}
\caption{An equilibrium configuration $\bm\sigma_m$ and an optimal configuration $\bm\sigma^*$ of a load balancing game $\LB_m(x)$ with $m=16$, $h(m)=7$, $x=8$ and $[x]_{\epsilon,f}=8/3$ (then, $opt(x,h(m)/m)=5$). The abscissa represents the resources, the ordinate represents the congestion of each resource in $\bm\sigma_m$ and $\bm\sigma^*_m$, and each vertical block is a player having weight  proportional to its height. Red players are filled with crosses, dark blue players are coloured, and light blue players are white.}\label{fig:1}
\end{figure}
\subsection{Proof of Corollary \ref{thmpoly}.}
Let $f$ be such that $f(x):=\sum_{i=0}^p\alpha_i\cdot x^i$. We have
\begin{align}
\gamma_{\epsilon,f}(x,\lambda)&=\frac{\lambda \cdot x\cdot f(x)+(1-\lambda)\cdot [x]_{\epsilon,f}\cdot f([x]_{\epsilon,f}) }{(\lambda x +(1-\lambda)[x]_{\epsilon,f})f(\lambda x +(1-\lambda)[x]_{\epsilon,f})}\nonumber\\
&=\frac{\lambda \cdot x\cdot \sum_{i=0}^d\alpha_ix^i+(1-\lambda)\cdot [x]_{\epsilon,f}\sum_{i=0}^d\alpha_i [x]_{\epsilon,f}^i}{(\lambda x +(1-\lambda)[x]_{\epsilon,f})\sum_{i=0}^d\alpha_i(\lambda x +(1-\lambda)[x]_{\epsilon,f})^i}\nonumber\\
&=\frac{\sum_{i=0}^d\alpha_i\left(\lambda x^{i+1}+(1-\lambda) [x]_{\epsilon,f}^{i+1}\right)}{\sum_{i=0}^p\alpha_i(\lambda x +(1-\lambda)[x]_{\epsilon,f})^{i+1}}\nonumber\\
&=\frac{\sum_{i=1}^d\alpha_i\left(\lambda x^{i+1}+(1-\lambda) (x/2)^{i+1}\right)+\lambda x}{\sum_{i=1}^p\alpha_i(\lambda x +(1-\lambda)(x/2))^{i+1}+\lambda x}\nonumber\\
&\leq \frac{\sum_{i=1}^d\alpha_i\left(\lambda x^{i+1}+(1-\lambda) (x/2)^{i+1}\right)}{\sum_{i=1}^p\alpha_i(\lambda x +(1-\lambda)(x/2))^{i+1}}\nonumber\\
&\leq \max_{i\in [d]} \frac{\lambda x^{i+1}+(1-\lambda) (x/2)^{i+1}}{(\lambda x +(1-\lambda)(x/2))^{i+1}}\nonumber\\
&=\max_{i\in [d]} g_i(\lambda),\label{ineqlambda}
\end{align}
where $g_i(\lambda):=\frac{\lambda +(1-\lambda) (1/2)^{i+1}}{(1/2+\lambda/2)^{i+1}}$. We equivalently have that $g_i(\lambda)=\gamma_{\epsilon,f}(x,\lambda)$, where $f_i$ is the monomial function defined $f_i(x):=x^i$. Let $\lambda^*_i\in (0,1)$ be the value maximizing $g_i(\lambda)$; we have that $\lambda^*_i$ necessarily verifies $\frac{\partial}{\partial \lambda}g_i(\lambda)|_{\lambda=\lambda^*}=0$. Thus, by simple calculations, we get
$\lambda^*_i=\frac{2^{i+1}-i-2}{i2^{i+1}-i}$ and
\begin{equation}
g_i(\lambda^*_i)=\frac{i^i(2^{i+1}-1)^{i+1}}{2^i(i+1)^{i+1}(2^i-1)^i}.\label{upppoly}
\end{equation}
We prove that $g_i(\lambda^*_i)$ is non-decreasing with respect to $i\in\mathbb{N}$, i.e., $g_{i+1}(\lambda^*_{i+1})/g_{i}(\lambda^*_{i})\geq 1$ for any $i\in\mathbb{N}$. If $i=1$, the claim holds.
If $i\geq 2$, we get 
\begin{align*}
\frac{g_{i+1}(\lambda^*_{i+1})}{g_i(\lambda^*_i)}
&=\frac{(i+1)^{i+1}(2^{i+2}-1)^{i+2}2^i(i+1)^{i+1}(2^{i}-1)^i}{2^{i+1}(i+2)^{i+2}(2^{i+1}-1)^{i+1}i^i(2^{i+1}-1)^{i+1}}\\
&=\frac{(i+1)^{i+1}(i+1)^{i+1}}{(i+2)^{i+2}i^i}\cdot \frac{(2^{i+2}-1)^{i+2}(2^{i+1}-2)^i}{(2^{i+2}-2)^{i+1}(2^{i+1}-1)^{i+1}}\\
&= \frac{(i+1)^{i+1}(i+1)^{i+1}}{(i+2)^{i+2}i^i}\cdot \frac{(2^{i+2}-1)(2^{i+2}-1)^{i+1}(2^{i+1}-2)^i}{(2^{i+2}-2)^{i+1}(2^{i+1}-1)^{i+1}}\\
&\geq  \frac{(i+1)^{i+1}(i+1)^{i+1}}{(i+2)^{i+2}i^i}\cdot \frac{2(2^{i+1}-2)(2^{i+2}-1)^{i+1}(2^{i+1}-2)^i}{(2^{i+2}-2)^{i+1}(2^{i+1}-1)^{i+1}}\\
&= \frac{(i+1)^{i+1}(i+1)^{i+1}}{(i+2)^{i+2}i^i}\cdot 2\\
&=\left(\frac{(i+1)^2}{(i+2)i}\right)^i \cdot 2\left(\frac{i+1}{i+2}\right)^2\\
&\geq  1,
\end{align*}
where the last inequality holds since $\frac{(i+1)^2}{(i+2)i}\geq 1$ and $\left(\frac{i+1}{i+2}\right)^2\geq \frac{1}{2}$ for each $i\geq 2$.

Therefore, by continuing from (\ref{ineqlambda}), we get $\gamma_{\epsilon,f}(x,\lambda)\leq \max_{i\in [d]}g_i(\lambda^*_i)= g_d(\lambda^*_d)$; thus, we can assume without loss of generality that $f(x)=x^d$ for any $x\geq 0$. 

By Theorem~\ref{thm5} and (\ref{upppoly}) we get
\begin{align*}
\poa_0(\mathcal{P}(d))
&\leq \sup_{x>0}\max_{\lambda\in (0,1)}\gamma_{\epsilon,f}(x,\lambda)\\
&=\frac{d^d(2^{d+1}-1)^{d+1}}{(d+1)2^d(d+1)^d(2^d-1)^d}\\
&=\left(\frac{d}{d+1}\right)^d\left(\frac{(2^{d+1}-2+1)^{d+1}}{(d+1)(2^{d+1}-2)^d}\right)\\
&\sim e^{-1}\left(\frac{2^{d+1}-2+1}{d+1}\right)\left(\frac{2^{d+1}-2+1}{2^{d+1}-2}\right)^d\\
&\sim e^{-1}\left(\frac{2^{d+1}-2+1}{d+1}\right)e^{\frac{d}{2^{d+1}-2}}\\
&\sim \frac{2^d}{d},
\end{align*}
thus showing the desired upper bound on the price of anarchy. Furthermore, we have that $\lambda^*_d\leq 1/2$ for any $d\in\mathbb{N}$. Indeed, if $d=1$ then $\lambda^*_1=1/3$, otherwise, for $d\geq 2$, we get 
$$\lambda^*_d=\frac{2^{d+1}-d-2}{d2^{d+1}-d}\leq \frac{2^{d+1}-d-2+d}{d2^{d+1}-d+d}\leq \frac{2^{d+1}}{d2^{d+1}}\leq 1/2.$$
Thus, since $\lambda^*_d\leq 1/2$, we can apply Theorem \ref{thm6} to show that the above upper bound is tight, and this concludes the proof. 
\end{document}